\documentclass[11pt]{article}
\usepackage[letterpaper,margin=1in]{geometry}
\usepackage{amsmath,amssymb,amsthm,bm}

\usepackage{thmtools}
\usepackage{thm-restate}

\usepackage{thmtools}

\declaretheorem[
	name=Theorem,
	numberwithin=section
	]{thm}
\declaretheorem[
	name=Lemma,
	sibling=thm,
	]{lem}
\declaretheorem[
	name=Proposition,
	sibling=thm,
	]{prop}
\declaretheorem[
	name=Corollary,
	sibling=thm,
	]{cor}

\declaretheorem[
	name=Definition,
	style=definition,
    sibling=thm
	]{defin}

\usepackage[english]{babel}
\usepackage{csquotes}

\usepackage{tabularx}

\usepackage{mathtools}
\usepackage{tikz-cd}
\usepackage{tikz}
\usetikzlibrary{shapes.geometric}
\usetikzlibrary{snakes}
\usepackage{thmtools}
\usepackage{wrapfig}
\usepackage[justification=centering]{caption}
\usepackage{xcolor} 
\usepackage{colortbl}
\numberwithin{equation}{section}
\usepackage[pdftex, bookmarks=true]{hyperref}
\usetikzlibrary{decorations.text}

\newenvironment{definition}{\begin{defin}}{\end{defin}}
\usepackage[T1]{fontenc}
\usepackage[utf8]{inputenc}

\newcommand{\summing}[2]{#1^{+}(#2)}
\newcommand{\vc}[1]{\mathbf{#1}} 

\newcommand{\PCSP}{\mathrm{PCSP}}
\newcommand{\diag}[1]{\mathcal{#1}}
\newcommand{\inst}[1]{\mathcal{#1}}
\newcommand{\lc}[1]{\mathrm{lc}(\inst{#1})}   \newcommand{\csp}{\mathrm{csp}}
\newcommand{\partialnpower}[2]{\mathrm{lc}_{\str A}^{#1}(\inst{#2})}
\newcommand{\partialkpower}[1]{\mathrm{lc}^k_{\str A}(#1)}
\newcommand{\RelaxLC}[1]{\mathrm{RLC}(#1)}
\newcommand{\RelaxLCn}[1]{\mathrm{RLC}_S(#1)}
\newcommand{\kthlevel}[1]{\mathrm{lvl}_k \, #1}
\newcommand{\kthpower}[1]{\mathrm{pow}_k \, #1}

\newcommand{\nthpower}[2]{\mathrm{pow}_{#1} \, #2}

\newcommand{\secondpower}[1]{\mathrm{pow}_2 \, #1}
\newcommand{\Ztwo}{\mathcal{Z}_2}

\newcommand{\VkZp}{\mathcal{V}_k\text{-}\mathbb{Z}_p}
\newcommand{\VkZpParam}[2]{\mathcal{V}_{#1}\text{-}\mathbb{Z}_{#2}}

\newcommand{\VAR}[1]{\mathtt{#1}}
\newcommand{\varx}{\VAR{x}}
\newcommand{\vary}{\VAR{y}}
\newcommand{\varz}{\VAR{z}}
\newcommand{\varu}{\VAR{u}}

\newcommand{\minion}[1]{\mathcal{#1}}
\newcommand{\cat}[1]{\mathcal{#1}}
\newcommand{\proj}{\pi}

\newcommand{\multiform}[1]{\mathfrak{#1}}
\newcommand{\weight}{\multiform{w}}
\newcommand{\ren}{\multiform{r}}
\newcommand{\formF}{\multiform{F}}
\newcommand{\formG}{\multiform{G}}
\newcommand{\formq}{\multiform{M}}
\newcommand{\mapf}{\tilde{f}}
\newcommand{\mapv}{\tilde{\vc{v}}}
\newcommand{\mapalpha}{\tilde{\alpha}}
\newcommand{\barP}{\bar{P}}
\newcommand{\spacQ}{Q}
\newcommand{\spacW}{W}

\DeclareMathOperator{\cm}{\odot}

\newcommand{\str}[1]{\mathbb{#1}}

\DeclareMathOperator{\Pol}{Pol}

\DeclareMathOperator{\id}{id}

\DeclareMathOperator{\ar}{ar}
\DeclareMathOperator{\arc}{arc}
\newcommand{\gl}[1]{\langle #1 \rangle}

\title{Toward a Uniform Algorithm and Uniform Reduction for Constraint Problems}

\date{}

\begin{document}

\author{Libor Barto, Maximilian Hadek, Dmitriy Zhuk\footnote{
All three authors were supported by the European Unions ERC Synergy Grant 101071674, POCOCOP.
The last author is also funded by 
the Czech Science Foundation project 25-16324S.
Views and opinions expressed are however those of the authors only and do not necessarily reflect those of the European Union or the European Research Council Executive Agency. Neither the European Union nor the granting authority can be held responsible for them.
}
} 

\maketitle

\begin{abstract}
We develop a unified framework to characterize the power of higher-level algorithms for the constraint satisfaction problem (CSP), such as $k$-consistency, the Sherali–Adams LP hierarchy, and the affine IP hierarchy. As a result, solvability of a fixed-template CSP or, more generally, a  Promise CSP by a given level is shown to depend only on the polymorphism minion of the template. Similarly, we obtain a minion-theoretic description of $k$-consistency reductions between Promise CSPs.

We introduce a new hierarchy of SDP-like vector relaxations with vectors over $\mathbb Z_{p}$
in which orthogonality is 
imposed on $k$-tuples of vectors. 
Surprisingly, this relaxation turns out to be equivalent to the
$k$-th level of the AIP-$\mathbb{Z}_p$ relaxation. We show that it solves
the CSP of the dihedral group $\mathbf{D}_4$, the smallest CSP that
fools the singleton BLP+AIP algorithm. Using this vector representation,
we further show that the $p$-th level of the $\mathbb{Z}_p$ relaxation
solves linear equations modulo $p^2$.
\end{abstract}

\maketitle

\section{Introduction}
The constraint satisfaction problem (CSP), in its most basic form, is the computational problem of deciding whether there is an assignment of values from some finite domain to a given set of variables that satisfies given constraints, where each constraint restricts possible values of a tuple of variables. As an example, the 2SAT instance $(\varx \vee \vary) \wedge (\neg \vary \vee \varz)$ can be regarded as an instance of the CSP: the variables are $\varx$, $\vary$, $\varz$, their domain consists of 0 (false) and 1 (true), and the constraints are that $\varx\vary$ must be in the relation $\{01,10,11\}$ and that $\vary\varz \in \{00, 01, 11\}$.

Among the major theoretical achievements in this area are two algorithms due to Bulatov~\cite{BulatovDichotomyFOCS,BulatovDichotomyArxiv} and Zhuk~\cite{ZhukDichotomyFOCS,zhukDichotomyJACM}. These algorithms are efficient and powerful in the following sense. By fixing a finite domain and a set of allowed relations, we obtain a restricted CSP, called a \emph{finite-domain fixed-template CSP}, which may be computationally as hard as possible (NP-complete). Whenever this is not the case, the algorithms run in polynomial time and correctly decide every instance of the restricted CSP. However, the algorithms have two major drawbacks.
First, they are complicated, and their correctness proofs are even more so. Second, they are not uniform: they are not polynomial-time procedures in the general, unrestricted setting; indeed, their running time is exponential in the size of the domains. These drawbacks make it difficult to generalize the algorithms to other variants of constraint problems, such as finite-domain fixed-template Promise CSPs~\cite{austrin20172+varepsilon,brakensiek2021promise,bbko2021,krokhinPCSPinvitation}, 
which we discuss later.

There are currently four basic uniform polynomial-time algorithms for CSPs: (generalized) arc consistency (AC), the basic linear programming relaxation (BLP), the basic semidefinite programming relaxation (SDP), and the affine integer programming relaxation (AIP). These are indeed relaxations in the sense that they never refute a Yes-instance (although they may fail to detect a No-instance).
Two natural ideas have proved useful for increasing the power of these basic algorithms.
In the singleton version of an algorithm, we take a variable $x$ and a value $a$ in its domain, run the algorithm with the additional constraint $x=a$, and, if it fails, remove $a$ from the domain of $x$. This process is repeated until some domain becomes empty (in which case we may safely refute) or no domain can be further pruned in this manner.
The second idea is to use higher levels: in the $k$-th level, we first construct an instance whose variables are all subsets of the original variables of size at most $k$, compute constraints implied by the original instance (in a certain polynomial-time fashion), and then run the original algorithm on this new instance.

The basic algorithms and these ideas can be used, tweaked, and/or combined in the hope that the resulting uniform algorithm is powerful enough for the problem class of interest. 
Unfortunately, this is not yet known to be the case even for the finite-domain fixed-template setting discussed above.
In fact, many such proposed algorithms are known to fail \cite{lichter2024limitations}. 
On the positive side, a relatively simple algorithm -- one that, in particular, does not use higher levels -- turned out to be quite powerful: the singleton version of  BLP+AIP, the combination of BLP and AIP from \cite{BraGurWroZiv-AIP+BLP}, denoted Singl(BLP+AIP). Using techniques from \cite{zhuk2021strong,zhuk2024simplified}, one can prove that the algorithm is correct on all instances of polynomial-time solvable fixed-template CSPs  with domain sizes up to 7~\cite{zhuk2025singAlgorithms}.

Is Singl(BLP+AIP) the uniform algorithm for fixed-template CSPs?
Unfortunately, a CSP suggested by Michael Kompatscher at the workshop CWC'2024  fools this algorithm. The instances of this restricted CSP are as follows. The domain of every variable is the 8-element dihedral group $\mathbf{D}_4$ (symmetries of a square) and constraints are of the form $x_1x_2\dots x_n \in R$, where $R$ is a coset of a subgroup of $\mathbf{D}_4^n$ (here $n \leq 4$ is sufficient to fool the algorithm~\cite{zhuk2025singAlgorithms}). 
Recall that the algorithm works up to domain sizes 7, so this example is tight. 

Having this concrete example at hand greatly helped to focus the research: it clarified what a uniform and powerful algorithm would need to overcome.
One aim was to show that no reasonable combinations of the basic algorithms together with the two ideas (singletons, higher levels) can ``solve $\mathbf{D}_4$''; this would demonstrate that some essentially new algorithmic idea is necessary. Perhaps the least understood aspect was the power of higher levels of AIP, largely because integer relaxations entered the area relatively recently, unlike the other basic algorithms. A second aim was to discover a new natural uniform algorithm capable of solving  $\mathbf{D}_4$.

Then in 2025, three results emerged in quick succession. First, we identified a simple relaxation, related to but distinct from the previously mentioned ones, that solves $\mathbf{D}_4$; moreover, its correctness proof is very short. Second, using a more technically involved argument, we proved that the second level of AIP, in fact, also solves $\mathbf{D}_4$. Third, we discovered a close connection between these two approaches.
The goal of this paper is to present some of the insights (and results) arising from this work.

\subsection{Relaxations}

The four basic relaxations can be conveniently defined using an alternative viewpoint on a CSP instance via the corresponding Label Cover (LC) instance. An \emph{LC instance} is a CSP instance in which variables have (possibly different) finite domains and every constraint is of the form $x=\alpha(y)$, where $\alpha$ is a function from the domain of $y$ to the domain of $x$. 
Given an instance $\inst{I}$ of a finite-domain CSP, we associate to it an LC instance $\diag{D} = \lc{\inst{I}}$  as follows. The variables of $\diag{D}$ are the variables of $\inst{I}$, with the same domains as in $\inst{I}$, together with one additional variable for each constraint of $\inst{I}$. For a constraint $x_1x_2\dots x_n \in R$, the corresponding new variable $y$ has domain $R$ and we impose in $\diag{D}$ the constraints $x_1 = \pi_1(y)$, $x_2 = \pi_2(y)$, \dots $x_n = \pi_n(y)$, where $\pi_i$ is the projection function $a_1a_2\dots a_n \mapsto a_i$.

As an example, consider the CSP instance 
$\varx=1, \varx\neq \vary$, $\vary\le \varz$ over the domain $\{0,1\}$.
The associated LC instance is shown in Figure~\ref{fig:2SatDiagram}. 
Figure \ref{fig:2SatDetailedDiagram} shows the same LC instance in greater detail.

\begin{figure}[h]
    \centering
    \begin{minipage}{0.45\textwidth}
        \centering
\begin{tikzpicture}[
    vertex/.style={circle, fill, inner sep=2pt},
    lbl/.style={sloped, above, midway, font=\scriptsize, inner sep=1pt}
]
\node[vertex, label=left:{$\varx=1$}]  (Rx1) at (0,  1.5) {};
\node[vertex, label=left:{$\varx \neq \vary$}](Rxy) at (0,  0.5) {};
\node[vertex, label=left:{$\vary\le \varz$}](Ryz) at (0, -0.5) {};
\node[vertex, label=right:{$\varx$}] (x) at (4,  1) {};
\node[vertex, label=right:{$\vary$}] (y) at (4,  0) {};
\node[vertex, label=right:{$\varz$}] (z) at (4, -1) {};
\draw[-latex] (Rx1) to node[lbl]           {$\proj_{1}$} (x);
\draw[-latex] (Rxy) to node[lbl, pos=0.35] {$\proj_{1}$} (x);
\draw[-latex] (Rxy) to node[lbl]           {$\proj_{2}$} (y);
\draw[-latex] (Ryz) to node[lbl]           {$\proj_{1}$} (y);
\draw[-latex] (Ryz) to node[lbl, pos=0.65] {$\proj_{2}$} (z);
\end{tikzpicture}
        \caption{An example of an LC instance.}\label{fig:2SatDiagram}
    \end{minipage}
\hfill
\begin{minipage}{0.45\textwidth}
    \centering
\definecolor{darkblue}{RGB}{0,0,139}
\begin{tikzpicture}[
    tuple/.style={inner sep=1pt, font=\small},
    tupleB/.style={inner sep=1pt, font=\small\bfseries\boldmath, color=darkblue},
    label/.style={font=\small, inner sep=0pt, anchor=north east},
    rlabel/.style={font=\small, inner sep=0pt, anchor=north west},
    dot/.style={circle, fill, inner sep=1.5pt}
]
\begin{scope}
  \clip (0, 3.0) ellipse (0.55cm and 0.4cm);
  \fill[blue!8!white] (-1, 2.5) rectangle (1, 3.5);
\end{scope}
\draw (0, 3.0) ellipse (0.55cm and 0.4cm);
\node[tupleB] (Rx1t1) at (0, 3.0) {$1$};
\node[label] at (-0.35, 2.65) {$\varx=1$};
\begin{scope}
  \clip (0, 1.2) ellipse (0.65cm and 0.65cm);
  \fill[blue!8!white] (-1, 0.4) rectangle (1, 1.9);
\end{scope}
\draw (0, 1.2) ellipse (0.65cm and 0.65cm);
\node[tuple]  (Rxy01) at (0, 1.45) {$01$};
\node[tupleB] (Rxy10) at (0, 0.95) {$10$};
\node[label] at (-0.35, 0.6) {$\varx\neq \vary$};
\begin{scope}
  \clip (0, -1.2) ellipse (0.65cm and 0.9cm);
  \fill[blue!8!white] (-1, -2.2) rectangle (1, -0.2);
\end{scope}
\draw (0, -1.2) ellipse (0.65cm and 0.9cm);
\node[tupleB] (Ryz00) at (0, -0.8)  {$00$};
\node[tuple]  (Ryz01) at (0, -1.2)  {$01$};
\node[tuple]  (Ryz11) at (0, -1.6)  {$11$};
\node[label] at (-0.35, -1.95) {$\vary\le \varz$};
\begin{scope}
  \clip (4, 2.6) ellipse (0.5cm and 0.7cm);
  \fill[blue!8!white] (3, 1.8) rectangle (5, 3.4);
\end{scope}
\draw (4, 2.6) ellipse (0.5cm and 0.7cm);
\node[tuple]  (x0) at (4, 2.9) [xshift=4pt] {$0$};
\node[tupleB] (x1) at (4, 2.3) [xshift=4pt] {$1$};
\node[rlabel] at (4.4, 1.95) {$\varx$};
\begin{scope}
  \clip (4, 0.6) ellipse (0.5cm and 0.7cm);
  \fill[blue!8!white] (3, -0.2) rectangle (5, 1.4);
\end{scope}
\draw (4, 0.6) ellipse (0.5cm and 0.7cm);
\node[tupleB] (y0) at (4, 0.9) [xshift=4pt] {$0$};
\node[tuple]  (y1) at (4, 0.3) [xshift=4pt] {$1$};
\node[rlabel] at (4.4, -0.05) {$\vary$};
\begin{scope}
  \clip (4, -1.4) ellipse (0.5cm and 0.7cm);
  \fill[blue!8!white] (3, -2.2) rectangle (5, -0.6);
\end{scope}
\draw (4, -1.4) ellipse (0.5cm and 0.7cm);
\node[tupleB] (z0) at (4, -1.1) [xshift=4pt] {$0$};
\node[tuple]  (z1) at (4, -1.7) [xshift=4pt] {$1$};
\node[rlabel] at (4.4, -2.05) {$\varz$};
\draw[-latex, very thick, darkblue, shorten <=0.1cm, shorten >=0.15cm] ([xshift=3pt]Rx1t1.east) to ([xshift=-3pt]x1.west);
\draw[-latex, shorten <=0.1cm, shorten >=0.15cm]                       ([xshift=3pt]Rxy01.east) to ([xshift=-3pt]x0.west);
\draw[-latex, shorten <=0.1cm, shorten >=0.15cm]                       ([xshift=3pt]Rxy01.east) to ([xshift=-3pt]y1.west);
\draw[-latex, very thick, darkblue, shorten <=0.1cm, shorten >=0.15cm] ([xshift=3pt]Rxy10.east) to ([xshift=-3pt]x1.west);
\draw[-latex, very thick, darkblue, shorten <=0.1cm, shorten >=0.15cm] ([xshift=3pt]Rxy10.east) to ([xshift=-3pt]y0.west);
\draw[-latex, very thick, darkblue, shorten <=0.1cm, shorten >=0.15cm] ([xshift=3pt]Ryz00.east) to ([xshift=-3pt]y0.west);
\draw[-latex, very thick, darkblue, shorten <=0.1cm, shorten >=0.15cm] ([xshift=3pt]Ryz00.east) to ([xshift=-3pt]z0.west);
\draw[-latex, shorten <=0.1cm, shorten >=0.15cm]                       ([xshift=3pt]Ryz01.east) to ([xshift=-3pt]y0.west);
\draw[-latex, shorten <=0.1cm, shorten >=0.15cm]                       ([xshift=3pt]Ryz01.east) to ([xshift=-3pt]z1.west);
\draw[-latex, shorten <=0.1cm, shorten >=0.15cm]                       ([xshift=3pt]Ryz11.east) to ([xshift=-3pt]y1.west);
\draw[-latex, shorten <=0.1cm, shorten >=0.15cm]                       ([xshift=3pt]Ryz11.east) to ([xshift=-3pt]z1.west);
\fill[blue] ([xshift=3pt]Rx1t1.east) circle (1.5pt);
\fill[blue] ([xshift=3pt]Rxy01.east) circle (1.5pt);
\fill[blue] ([xshift=3pt]Rxy10.east) circle (1.5pt);
\fill[blue] ([xshift=3pt]Ryz00.east) circle (1.5pt);
\fill[blue] ([xshift=3pt]Ryz01.east) circle (1.5pt);
\fill[blue] ([xshift=3pt]Ryz11.east) circle (1.5pt);
\fill[blue] ([xshift=-3pt]x0.west) circle (1.5pt);
\fill[blue] ([xshift=-3pt]x1.west) circle (1.5pt);
\fill[blue] ([xshift=-3pt]y0.west) circle (1.5pt);
\fill[blue] ([xshift=-3pt]y1.west) circle (1.5pt);
\fill[blue] ([xshift=-3pt]z0.west) circle (1.5pt);
\fill[blue] ([xshift=-3pt]z1.west) circle (1.5pt);
\end{tikzpicture}
        
        \caption{A detailed example of an LC instance and its solution.}\label{fig:2SatDetailedDiagram}
    \end{minipage}
\end{figure}

Notice that $\diag D$ and $\inst I$ are essentially the same instances: solutions to $\inst I$ are exactly solutions to $\diag D$ together with witnesses certifying that the constraints are satisfied. In the above example, the solution $\varx \mapsto 1$,
$\vary \mapsto 0$,
$\varz \mapsto 0$ to $\inst I$ 
 corresponds to the solution
$\varx \mapsto 1$,
$\vary \mapsto 0$,
$\varz \mapsto 0$,
$(\varx=1)\mapsto 1$, 
$(\varx\neq \vary)\mapsto 10$, 
$(\vary\le \varz)\mapsto 00$
 to $\inst D$ (see Figure \ref{fig:2SatDetailedDiagram}).

\setlength{\columnsep}{0pt}

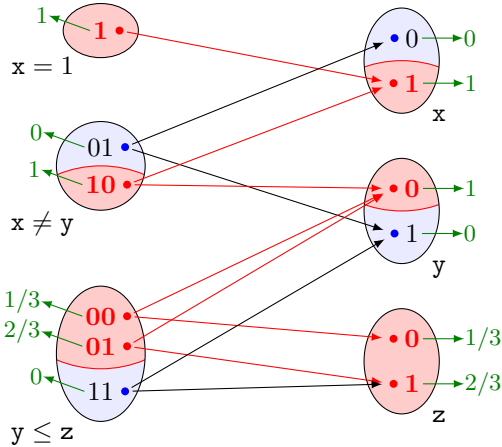
\begin{wrapfigure}{l}[6pt]{0.55\textwidth}
    \centering
\definecolor{darkblue}{RGB}{0,0,139}
\definecolor{dgreen}{RGB}{0,130,0}
\begin{tikzpicture}[
    tuple/.style={inner sep=1pt, font=\small},
    tupleB/.style={inner sep=1pt, font=\small\bfseries\boldmath, color=red},
    label/.style={font=\small, inner sep=0pt, anchor=north east},
    rlabel/.style={font=\small, inner sep=0pt, anchor=north west},
    blp/.style={color=dgreen, font=\footnotesize, inner sep=0pt}
]

\begin{scope}
  \clip (0, 3.0) ellipse (0.50cm and 0.36cm);
  \fill[red!20!white] (-1, 2.5) rectangle (1, 3.5);
\end{scope}
\draw (0, 3.0) ellipse (0.50cm and 0.36cm);
\node[tupleB] (Rx1t1) at (0, 3.0) {$1$};
\node[label] at (-0.35, 2.65) {$\varx=1$};

\begin{scope}
  \clip (0, 1.2) ellipse (0.59cm and 0.59cm);
  \fill[blue!8!white] (-1, 0.4) rectangle (1, 1.9);
\end{scope}
\begin{scope}
  \clip (0, 1.2) ellipse (0.59cm and 0.59cm);
  \clip (0, -0.3) circle (1.5cm);
  \fill[red!20!white] (-1, 0.4) rectangle (1, 1.9);
\end{scope}
\begin{scope}
  \clip (0, 1.2) ellipse (0.59cm and 0.59cm);
  \draw[red] (0, -0.3) circle (1.5cm);
\end{scope}
\draw (0, 1.2) ellipse (0.59cm and 0.59cm);
\node[tuple]  (Rxy01) at (0, 1.45) {$01$};
\node[tupleB] (Rxy10) at (0, 0.95) {$10$};
\node[label] at (-0.35, 0.6) {$\varx\neq \vary$};

\begin{scope}
  \clip (0, -1.3) ellipse (0.59cm and 0.90cm);
  \fill[blue!8!white] (-1, -2.4) rectangle (1, -0.2);
\end{scope}
\begin{scope}
  \clip (0, -1.3) ellipse (0.59cm and 0.90cm);
  \clip (0, 0.0) circle (1.5cm);
  \fill[red!20!white] (-1, -2.4) rectangle (1, -0.2);
\end{scope}
\begin{scope}
  \clip (0, -1.3) ellipse (0.59cm and 0.90cm);
  \draw[red] (0, 0.0) circle (1.5cm);
\end{scope}
\draw (0, -1.3) ellipse (0.59cm and 0.90cm);
\node[tupleB] (Ryz00) at (0, -0.8)  {$00$};
\node[tupleB] (Ryz01) at (0, -1.2)  {$01$};
\node[tuple]  (Ryz11) at (0, -1.8)  {$11$};
\node[label] at (-0.35, -2.2) {$\vary\le \varz$};

\begin{scope}
  \clip (4, 2.6) ellipse (0.5cm and 0.7cm);
  \fill[blue!8!white] (3, 1.8) rectangle (5, 3.4);
\end{scope}
\begin{scope}
  \clip (4, 2.6) ellipse (0.5cm and 0.7cm);
  \clip (4, 1.1) circle (1.5cm);
  \fill[red!20!white] (3, 1.8) rectangle (5, 3.4);
\end{scope}
\begin{scope}
  \clip (4, 2.6) ellipse (0.5cm and 0.7cm);
  \draw[red] (4, 1.1) circle (1.5cm);
\end{scope}
\draw (4, 2.6) ellipse (0.5cm and 0.7cm);
\node[tuple]  (x0) at (4, 2.9) [xshift=4pt] {$0$};
\node[tupleB] (x1) at (4, 2.3) [xshift=4pt] {$1$};
\node[rlabel] at (4.4, 1.95) {$\varx$};

\begin{scope}
  \clip (4, 0.6) ellipse (0.5cm and 0.7cm);
  \fill[blue!8!white] (3, -0.2) rectangle (5, 1.4);
\end{scope}
\begin{scope}
  \clip (4, 0.6) ellipse (0.5cm and 0.7cm);
  \clip (4, 2.1) circle (1.5cm);
  \fill[red!20!white] (3, -0.2) rectangle (5, 1.4);
\end{scope}
\begin{scope}
  \clip (4, 0.6) ellipse (0.5cm and 0.7cm);
  \draw[red] (4, 2.1) circle (1.5cm);
\end{scope}
\draw (4, 0.6) ellipse (0.5cm and 0.7cm);
\node[tupleB] (y0) at (4, 0.9) [xshift=4pt] {$0$};
\node[tuple]  (y1) at (4, 0.3) [xshift=4pt] {$1$};
\node[rlabel] at (4.4, -0.05) {$\vary$};

\begin{scope}
  \clip (4, -1.4) ellipse (0.5cm and 0.7cm);
  \fill[red!20!white] (3, -2.2) rectangle (5, -0.6);
\end{scope}
\draw (4, -1.4) ellipse (0.5cm and 0.7cm);
\node[tupleB] (z0) at (4, -1.1) [xshift=4pt] {$0$};
\node[tupleB] (z1) at (4, -1.7) [xshift=4pt] {$1$};
\node[rlabel] at (4.4, -2.05) {$\varz$};

\draw[-latex, red, shorten <=0.1cm, shorten >=0.15cm] ([xshift=3pt]Rx1t1.east) to ([xshift=-3pt]x1.west);
\draw[-latex, shorten <=0.1cm, shorten >=0.15cm]      ([xshift=3pt]Rxy01.east) to ([xshift=-3pt]x0.west);
\draw[-latex, shorten <=0.1cm, shorten >=0.15cm]      ([xshift=3pt]Rxy01.east) to ([xshift=-3pt]y1.west);
\draw[-latex, red, shorten <=0.1cm, shorten >=0.15cm] ([xshift=3pt]Rxy10.east) to ([xshift=-3pt]x1.west);
\draw[-latex, red, shorten <=0.1cm, shorten >=0.15cm] ([xshift=3pt]Rxy10.east) to ([xshift=-3pt]y0.west);
\draw[-latex, red, shorten <=0.1cm, shorten >=0.15cm] ([xshift=3pt]Ryz00.east) to ([xshift=-3pt]y0.west);
\draw[-latex, red, shorten <=0.1cm, shorten >=0.15cm] ([xshift=3pt]Ryz00.east) to ([xshift=-3pt]z0.west);
\draw[-latex, red, shorten <=0.1cm, shorten >=0.15cm] ([xshift=3pt]Ryz01.east) to ([xshift=-3pt]y0.west);
\draw[-latex, red, shorten <=0.1cm, shorten >=0.15cm] ([xshift=3pt]Ryz01.east) to ([xshift=-3pt]z1.west);
\draw[-latex, shorten <=0.1cm, shorten >=0.15cm]      ([xshift=3pt]Ryz11.east) to ([xshift=-3pt]y1.west);
\draw[-latex, shorten <=0.1cm, shorten >=0.15cm]      ([xshift=3pt]Ryz11.east) to ([xshift=-3pt]z1.west);

\fill[red]  ([xshift=3pt]Rx1t1.east) circle (1.5pt); 
\fill[blue] ([xshift=3pt]Rxy01.east) circle (1.5pt); 
\fill[red]  ([xshift=3pt]Rxy10.east) circle (1.5pt); 
\fill[red]  ([xshift=3pt]Ryz00.east) circle (1.5pt); 
\fill[red]  ([xshift=3pt]Ryz01.east) circle (1.5pt); 
\fill[blue] ([xshift=3pt]Ryz11.east) circle (1.5pt); 

\fill[blue] ([xshift=-3pt]x0.west) circle (1.5pt); 
\fill[red]  ([xshift=-3pt]x1.west) circle (1.5pt); 
\fill[red]  ([xshift=-3pt]y0.west) circle (1.5pt); 
\fill[blue] ([xshift=-3pt]y1.west) circle (1.5pt); 
\fill[red]  ([xshift=-3pt]z0.west) circle (1.5pt); 
\fill[red]  ([xshift=-3pt]z1.west) circle (1.5pt); 

\draw[-latex, dgreen] (Rx1t1.west) -- ++(-0.55, 0.2) node[left, blp] {$1$};
\draw[-latex, dgreen] (Rxy01.west) -- ++(-0.55, 0.2) node[left, blp] {$0$};
\draw[-latex, dgreen] (Rxy10.west) -- ++(-0.55, 0.2) node[left, blp] {$1$};
\draw[-latex, dgreen] (Ryz00.west) -- ++(-0.55, 0.2) node[left, blp] {$1/3$};
\draw[-latex, dgreen] (Ryz01.west) -- ++(-0.55, 0.2) node[left, blp] {$2/3$};
\draw[-latex, dgreen] (Ryz11.west) -- ++(-0.55, 0.2) node[left, blp] {$0$};

\draw[-latex, dgreen] (x0.east) -- ++(0.55, 0) node[right, blp] {$0$};
\draw[-latex, dgreen] (x1.east) -- ++(0.55, 0) node[right, blp] {$1$};
\draw[-latex, dgreen] (y0.east) -- ++(0.55, 0) node[right, blp] {$1$};
\draw[-latex, dgreen] (y1.east) -- ++(0.55, 0) node[right, blp] {$0$};
\draw[-latex, dgreen] (z0.east) -- ++(0.55, 0) node[right, blp] {$1/3$};
\draw[-latex, dgreen] (z1.east) -- ++(0.55, 0) node[right, blp] {$2/3$};
\end{tikzpicture}
    \caption{AC and BLP solutions.}\label{fig:2SatACBLPDiagram}
\end{wrapfigure}

The AC relaxation, informally, gradually removes inconsistent domain elements for variables in $\diag D
$ until the process stabilizes. The instance is refuted if some domain becomes empty. Otherwise, we obtain, for each variable $x$ in $\diag D$, a nonempty subset $\diag D'_x$ of its original domain $\diag D_x$ such that for every constraint $x = \alpha(y)$, the image $\alpha(\diag D'_y)$ is equal to $\diag D'_x$. Although this is only an informal description of AC, an explicit AC procedure can be reconstructed from this requirement. For the instance in Figure \ref{fig:2SatDetailedDiagram}, these subsets are shown in red in Figure \ref{fig:2SatACBLPDiagram}.

The BLP relaxation accepts if there exist functions $f_x\colon \diag D_x \to [0,1]$ (one function $f_x$ for each variable $x$ in $\diag D$) such that, for every variable $x$,  $\sum\limits_{a \in \diag D_x} f_x(a)=1$, and, for every constraint $x = \alpha(y)$ and every $a \in D_x$, $f_x(a) = \sum\limits_{b; \alpha(b) = a} f_x(b)$; we refer to the latter requirement as the \emph{summation condition}. 
The existence of such functions can be verified by a linear program. 
For the instance in Figure~\ref{fig:2SatDetailedDiagram}, these functions are shown in green in Figure~\ref{fig:2SatACBLPDiagram}.

Observe that the BLP relaxation is stronger than AC: given  functions $f_x$ satisfying the BLP conditions, the subsets $\diag D'_x = \{a \in \diag D_x; f_x(a)>0\}$ satisfy the AC conditions.

The SDP relaxation strengthens BLP further. It plays a major role in approximation variants of the CSP \cite{beginningofSDPGoemansWilliamson,raghavendra2008optimal,barto2016robustly,braGur2023sdp}. 
The following formulation for the decision setting was worked out in \cite{braGur2023sdp,ciardo2025semidefinite}. 
The SDP relaxation accepts if there exists $N \in \mathbb{N}$ and functions $f_x$ from $\diag{D}_x$ to the $N$-dimensional Euclidean space $\mathbb{R}^N$ such that $\sum_{a \in D_x} f_x(a)=\mathbf{i} $ is a unit-length vector independent of $x$, the vectors $f_x(a)$ are pairwise orthogonal for every $x$, and we have syntactically the same summation condition as for the BLP, but this time we sum in $\mathbb{R}^N$. 
Observe that SDP is indeed stronger than BLP, since the BLP solution  can be obtained from the SDP solution by taking norms.

Finally, AIP relaxation is formally similar to BLP, except that   the interval $[0,1]$ is replaced by $\mathbb{Z}$, the integers, so that now $f_x\colon \diag D_x \to \mathbb{Z}$. For our purposes, weaker variants are particularly relevant, namely the $\mathbb{Z}_p$ relaxations for primes $p$, obtained by replacing $\mathbb{Z}$ with $\mathbb Z_p$, the integers modulo $p$.
Although these relaxations resemble BLP and SDP syntactically,
their algorithmic strength is incomparable. SDP (which strengthens BLP) correctly solves all finite‑domain fixed‑template CSPs that are solvable by consistency methods \cite{barto2016robustly,braGur2023sdp}, whereas no level of AIP is capable of doing so. On the other hand, the $\mathbb{Z}_p$ relaxation correctly decides systems of linear equations over $\mathbb{Z}_p$, but no level of SDP suffices \cite{schoenebeck2008linear,tulsiani2009csp}.

\subsection{Minion descriptions}

The four relaxations can be described in a unified way. Fix a construction $\minion{R}$ that assigns to every finite set $D$ a set $\minion{R}^{(D)}$, and to every function $\alpha\colon D \to E$ a function $\minion{R}^{(\alpha)}\colon \minion{R}^{(D)} \to \minion{R}^{(E)}$. Intuitively, $\minion{R}^{(D)}$ is a relaxed version of the domain $D$ and $\minion{R}^{(\alpha)}$ a relaxed version of the function $\alpha$. Given such a construction $\minion{R}$ and an LC instance $\diag D$, the \emph{$\minion{R}$-relaxation} of $\diag D$ is obtained by replacing each domain $\diag D_x$ with $\minion{R}^{(\diag D_x)}$ and each constraint $x = \alpha(y)$ with $x = \minion{R}^{(\alpha)} \, (y)$. 
If the construction $\minion{R}$ is functorial (preserves the composition of functions and identities) and nontrivial (assigns a nonempty set to a nonempty set), then $\minion{R}$ is indeed a relaxation in the sense that the $\minion{R}$-relaxation of every solvable LC instance $\inst{D}$ is itself solvable \cite{adamekGummTrnkova}.  Such constructions are called (abstract) \emph{minions}.

The four relaxations are \emph{described} by a minion $\minion{R}$ in the sense that they accept an LC instance $\diag D$ if and only if the $\minion{R}$-relaxation of $\diag D$ has a solution. For AC,  $\minion{R}^{(D)}$ is the set of all nonempty subsets of $D$, and $\minion{R}^{(\alpha)}$ maps a subset to its $\alpha$-image. For BLP, $\minion{R}^{(D)}$ is the set of functions $f\colon D \to [0,1]$ satisfying $\sum_{a \in D} f(a) = 1$, and $\minion{R}^{(\alpha)}$ is defined so as to enforce the summation condition. The corresponding  minions $\minion{R}$ for SDP, AIP and $\mathbb{Z}_p$ relaxations are defined in an entirely analogous manner. Minions $\minion{R}$ and $\minion{S}$ can be compared using a \emph{minion homomorphism}: a family of functions $\xi^{(D)}\colon \minion{R}^{(D)} \to \minion{S}^{(D)}$ that commute with every $\minion{\cdot}^{(\alpha)}$. For example, the transition from BLP to AC described above is a minion homomorphism.

Having a description of a relaxation by a minion is particularly useful in the context of Promise CSPs (as discussed in the next subsection). For this reason, researchers in the area try to obtain such descriptions for specific combinations, singleton versions, and higher levels of the basic relaxations \cite{ciardo2023clap,ciardoZivny2023hierarchies,zhuk2025singAlgorithms}.

This is often possible for singleton versions, using an idea originating from \cite{ciardo2023clap} to record multiple runs of an algorithm as sequences of elements of $\minion{R}^{(D)}$. (We do not provide details in this paper.) 
For higher levels, however, it was unclear how a single relaxed domain could encode information concerning more variables at once. Indeed, a common intuition until recently was that this is impossible except in specialized situations.
\footnote{
One important such specialized situation is SDP, which can be viewed as capturing correlations between pairs of points, in contrast to BLP, which considers only marginal probabilities of individual points.
}

Our first main result, Theorem~\ref{thm:LevelVsPower}, proves that higher levels of \emph{every} minion relaxation can actually be  described by minions. More specifically, for each LC instance $\diag D$, we define (Definition~\ref{def:Spower}) its \emph{$k$-th saturated power $\kthpower{\diag D}$}, another LC instance obtained by introducing a variable for each $k$-tuple of the original LC variables and enforcing all the ``obvious'' constraints. For every minion $\minion{R}$, we then define its \emph{$k$-th level} $\kthlevel{\minion{R}}$. We prove that the $\minion{R}$-relaxation of $\kthpower{\diag D}$ is solvable if and only if the $\kthlevel{\minion{R}}$-relaxation of $\diag D$ is solvable. 

The underlying idea of the $k$-th level construction is to pack enough information about the whole instance to every single relaxed set $\kthlevel{R}$; the challenge is to do this in a well-behaved way. 
More specifically, an object in a relaxed set will contain a fixed part, which describes global information, and a non-fixed part containing more local information. Relaxed maps will ensure that all points of a connected LC instance share the same global information. This seems to be a natural yet powerful design principle: we believe that any reasonable algorithm for CSPs can be characterized by a minion via this idea.

Our construction describes a different version of the $k$-th level algorithm than the traditional CSP-based version in which only $k$-tuples of variables are considered. The reason is that,  by transforming a CSP instance into an LC instance, we place the original variables and the original constraints on equal footing --- our $k$-th level therefore considers $k$-tuples consisting of both variables and constraints.
We view the LC version of higher levels as more natural. One reason is that the basic algorithms admit a simpler and more transparent description from the Label Cover viewpoint --- and to the best of our knowledge, we are not aware of a similar presentation in the literature.

We nevertheless show in Theorem~\ref{thm:VariableLevelsVsConstraintLevels} that the two versions, when applied to CSP instances, are equivalent up to a suitable adjustment of the parameter $k$.
In particular, the applicability of our $k$-th level of 
AC for some $k$ is equivalent to the applicability of the $k$-consistency algorithm for some $k$. In the same sense, our levels 
of BLP capture the levels of the Sherali--Adams 
hierarchy, and our levels of  a combination of AC and AIP capture 
the $\mathbb{Z}$-affine $k$-consistency relaxation proposed by Dalmau 
and Opršal~\cite{dalmauOprsal24reductions} as a candidate uniform 
algorithm (later shown not to be so in~\cite{lichter2024limitations}). 
Finally, the Cohomological $k$-Consistency 
Algorithm~\cite{CohomologicalAlgorithms}, which is another candidate for a uniform algorithm, 
is captured by our levels of the singleton version of a combination of AC and AIP.
Thus, each of the above uniform algorithms is now described by a minion, 
which may facilitate a better understanding of their power.

It is worth mentioning that many papers on hierarchies and their 
limitations have appeared recently, reflecting the growing interest 
in this 
topic~\cite{lichter2024limitations,ChanHawTOFoolHierarchies,ciardo2025semidefinite,
ChanHawTOFoolHierarchiesTwo}. 
The main focus of these papers is to find optimal lower bounds on the 
level required to solve a given problem correctly. 
For instance, Chan, Ng, and Peng~\cite{ChanHawTOFoolHierarchies,
ChanHawTOFoolHierarchiesTwo} showed that if a template satisfies 
certain conditions, then a random CSP instance can fool even a linear 
(in the number of variables) level of several hierarchies. 
In a related direction, Conneryd, Ghannane, and 
Pang~\cite{conneryd2026lower} showed that the cohomological 
$k$-consistency algorithm does not solve the approximate graph coloring 
problem even when $k$ grows linearly with the number of variables.

Among the hierarchies, AIP is perhaps the least well understood. A witness of this is that different papers propose slightly different definitions: some~\cite{berkholz2017linear,ChanHawTOFoolHierarchies} run AIP on partial solutions of the instance, while others~\cite{ciardoZivny2023hierarchies,ciardo2023approximate} introduce integer variables for all evaluations of
$k$-subsets of variables, which are not necessarily partial solutions. It was noted by Jakub Opr\v sal (personal communication) that the latter definition actually makes the hierarchy collapse to the first level, which is a consequence of the main result of~\cite{ciardo2023approximate}. Our paper clarifies this picture: we provide a concrete minion characterizing the $k$-th level, which in particular pins down the right definition, and we exhibit a concrete example (the dihedral group $\mathbf D_{4}$) demonstrating that higher levels of AIP are strictly more powerful than the first.

\subsection{Promise CSPs and relaxations}
\label{sec:intoPCSP}

The significance of minions for CSPs became evident during work on a generalization of fixed-finite-template CSPs called Promise CSPs (PCSPs). 

In this version of the CSP, each constraint relation comes in two forms --  strong and weak. The goal is to distinguish instances whose strong version is solvable from those whose weak version is not, under the promise that one of these two cases holds. 
A \emph{template} of a PCSP can be formally specified as a pair of relational structures $(\str A,\str B)$ of the same signature such that there exists a homomorphism  $\str A \to \str B$ (ensuring that the relations in $\str B$ are indeed weaker). For example, the signature may consist of a ternary symbol $R$, interpreted in $\str A$ as $R^{\str A} = \{001, 010, 100\}$ (the 1-in-3 relation) and in $\str B$ as $R^{\str B} = \{001, 010, 100, 110, 101, 011\}$ (the not-all-equal relation). An instance of the PCSP over $(\str A, \str B)$ is a list of formal constraints, e.g., $xyz \in R$, $yzu \in R$, \dots. An algorithm should accept if  $xyz \in R^{\str A}$, $yzu \in R^{\str A}$, \dots is solvable and reject if not even $xyz \in R^{\str B}$, $yzu \in R^{\str B}$, \dots is solvable. The PCSP over $(\str A,\str B)$ is thus the (positive) 1-in-3-SAT versus not-all-equal-3-SAT problem. 

Inspired by the earlier algebraic theory of fixed-template finite-domain CSPs \cite{AlgebraForCSPBJK,2018wonderland}  and PCSPs \cite{austrin20172+varepsilon,brakensiek2021promise}, 
the authors in \cite{bbko2021} associated to a template $(\str A,\str B)$ a minion  $\minion{M}$,  the \emph{polymorphism minion} of $(\str A,\str B)$, and they proved that the complexity of the PCSP over $(\str A,\str B)$ is (up to log-space reductions) determined by $\minion{M}$. This fact is a consequence of (what we regard as) the \emph{fundamental theorem on fixed-template PCSPs}, which directly links them to relaxations as follows. 

For a minion $\minion{R}$, we denote by $\RelaxLC{\minion{R}}$ the following promise problem: given an LC instance $\diag{D}$, accept if $\diag{D}$ is solvable and reject if the $\minion{R}$-relaxation of $\diag D$ is not solvable. By $\RelaxLCn{\minion{R}}$ we denote the same problem restricted to LC instances whose every domain has size at most $S$. The fundamental theorem says that for every PCSP template $(\str A, \str B)$ and every sufficiently large $S$, the PCSP over $(\str A,\str B)$ is equivalent (wrt. log-space reductions) to $\RelaxLCn{\minion{M}}$, where $\minion{M}$ is the polymorphism minion of $(\str A,\str B)$. 

An immediate consequence of the fundamental theorem is that a relaxation described by a minion $\minion{R}$ correctly decides the PCSP over $(\str A,\str B)$ whenever $\minion{R}$ admits a minion homomorphism to $\minion{M}$. It follows from the known theory that the converse implication holds as well, although we are not aware of an explicit statement of this fact; we provide one  in Theorem~\ref{thm:RelaxationsAndHomos}. By combining this fact with Theorem~\ref{thm:LevelVsPower}, we  obtain that the $k$-th level of the relaxation described by $\minion{R}$ correctly decides the PCSP over $(\str A,\str B)$ if and only if $\kthlevel{\minion{R}}$ admits a homomorphism to $\minion{M}$, see Corollary~\ref{cor:kthLevelChar}. In particular, the applicability of the $k$-th level to solve a given PCSP is completely determined by $\minion{M}$. 

Another immediate consequence of the fundamental theorem is a reduction theorem: if $(\str{A},\str{B})$ and $(\str{A}',\str{B}')$ are PCSP templates with polymorphism minions $\minion{M}$ and $\minion{M}'$, and $\minion{M}$ has a homomorphism to $\minion{M}'$, then the PCSP over $(\str{A}',\str{B})'$ is reducible (in log-space) to the PCSP over $(\str A,\str B)$. While this reduction theorem was, in a sense, sufficient in the context of fixed-template CSPs, it is no longer adequate for PCSPs, and the search for stronger reduction theorems is ongoing \cite{babpcp.datalog,banakh2024injective,LayeredPCP}.  
In \cite{dalmauOprsal24reductions} the authors introduced and studied the so called $k$-consistency reductions. For the smallest level $k=1$, they also provided a characterization in terms of minions: the PCSP over $(\str{A}',\str{B}')$ reduces  to $(\str A,\str B)$ via the  $1$-consistency reduction  if and only if $\omega\minion{M}$, a specific minion constructed from $\minion M$, has a homomorphism to $\minion{M}'$. Using Theorem~\ref{thm:LevelVsPower}, we can now generalize the characterization to all $k$: the condition on minions is that some $\kthlevel{\omega\minion M}$ has a homomorphism to $\minion{M}'$ (Theorem~\ref{thm:CharDatalogReductions}).

\subsection{Solving the square} \label{sec:SolvingSquare}

Now we describe our advertised algorithm to solve the CSP associated with $\mathbf{D}_4$. This problem can be phrased as the CSP over $\str{A}$ (which is the PCSP over $(\str A,\str A)$) for an 8-element structure $\str A$.

Two crucial ideas are involved. The first one is that instead of solving the CSP over $\str A$, we can try to directly solve $\RelaxLCn{\minion{M}}$, where $\minion{M}$ is the polymorphism minion of $\str A$. In fact, our algorithm will solve the search version of $\RelaxLC{\minion{M}}$ in polynomial-time by finding, given an LC instance $\diag{D}$, an explicit solution to the $\minion{M}$-relaxation of $\diag{D}$. This appears to be the first time that the fundamental theorem on PCSPs is used in an algorithmic way.

The minion $\minion{M}$ admits a simple description that follows, e.g., from Lemma 4.1 in \cite{zhuk2025singAlgorithms}. For a finite set $D$, the set $\minion{M}^{(D)}$ consists of all pairs of functions $f\colon D \to \mathbb{Z}_2$, $g\colon D \times D \to \mathbb{Z}_2$ such that 
\begin{equation}\label{eq:definitionOfM}
\sum_{a \in D} f(a) = 1, \quad
\sum_{a,b \in D} g(a,b) = 0, \quad 
(\forall a,b \in D) \ a \neq b \implies g(a,b)+g(b,a) = f(a)f(b),
\end{equation} 
where all computations are performed in the two-element field $\mathbb{Z}_2$. For a function $\alpha\colon D \to E$, the function $\minion{M}^{(\alpha)}$ is defined by summing up. 

\begin{wrapfigure}{r}{9cm}
\centering
\begin{tikzpicture}[
  every node/.style={circle, draw, thick, minimum size=0.5cm, inner sep=0pt},
  >=Stealth,
  every label/.style={draw=none, circle=none},
trim right = 6.3cm]
\node (A) at (90:1.62)  [label={[label distance=-2pt]above left:\Large\textbf{1}}] {$a$};
\node (B) at (210:1.62) [label={[label distance=-2pt]135:\Large\textbf{1}}]   {$b$};
\node (C) at (330:1.62) [label={[label distance=-2pt]below right:\Large\textbf{1}}] {$c$};
\node (D) at (0,0)      [label={[label distance=-2pt]above:\Large\textbf{0}}] {$d$};
\draw[->, thick, bend left=-15] (A) to (B);
\draw[->, thick, bend left=-15] (B) to (C);
\draw[->, thick, bend left=-15] (C) to (A);
\draw[->, thick] (B) edge [out=220, in=290, looseness=8] (B);
\draw[->, thick, bend right=-15] (D) to (B);
\draw[->, thick, bend left=25]   (B) to (D);
\draw[->, thick, bend left=-15]  (D) to (C);
\draw[->, thick, bend right=25]  (C) to (D);
\begin{scope}[xshift=4.95cm]
  \node (AD) at (0, 0.81) [minimum size=0.6cm, inner sep=0pt,
    label={[label distance=-2pt]above:\Large\textbf{1}}] {$ad$};
  \node (B2) at (210:1.44) [label={[label distance=-2pt]135:\Large\textbf{1}}]   {$b$};
  \node (C2) at (330:1.44) [label={[label distance=-2pt]below:\Large\textbf{1}}] {$c$};
  \draw[->, thick, bend left=15]  (B2) to (AD);
  \draw[<-, thick, bend right=15] (C2) to (AD);
  \draw[->, thick, bend left=-15] (B2) to (C2);
  \draw[->, thick] (B2) edge [out=220, in=290, looseness=8] (B2);
\end{scope}
\draw[->, line width=1.5pt, dotted, blue!50!white, bend left=30]  (A) to (AD);
\draw[->, line width=1.5pt, dotted, blue!50!white, bend left=30]  (D) to (AD);
\draw[->, line width=1.5pt, dotted, blue!50!white, bend left=30]  (C) to (C2);
\draw[->, line width=1.5pt, dotted, blue!50!white, bend right=30] (B) to (B2);
\end{tikzpicture}
\caption{An object from $\minion{M}^{(\{a,b,c,d\})}$ and \\ its  $\minion{M}^{(\alpha)}$-image.}\label{fig:D4GraphExample}
\end{wrapfigure}

An object 
of $\minion{M}^{(D)}$ can be visualized as
a directed graph with vertex set $D$, each marked with $0$ or $1$ (according to $f$), 
whose arcs (according to $g$) satisfy the following properties: 
two $1$-vertices must be connected by exactly one arc; 
other pairs of vertices must be connected by either both arcs or none; 
the total number of $1$-vertices is odd; 
the total number of arcs is even
(see example in Figure~\ref{fig:D4GraphExample}).

The second crucial idea is to combine the SDP and $\mathbb{Z}_2$ relaxations. We find functions $\mathbf{v}_x\colon D_x \to \mathbb{Z}_2^N$ such that $\mathbf{i} = \sum_{a \in D_x} \mathbf{v}_x(a)$ is a unit-weight vector independent of $x$ (where \emph{weight} is the sum of the coordinates modulo 2); for each $x$, the vectors $\mathbf{v}_x(a)$ are pairwise ``orthogonal'' with respect to the standard dot product $\cdot$; and the functions $\mathbf{v}_x$ satisfy the same summation condition as in SDP (except computations take place in the vector space $\mathbb{Z}_2^N$). This relaxation can be computed in polynomial time, as we shall explain shortly.

 Having these vectors, we obtain a solution to the $\minion{M}$-relaxation of $\diag D$ as follows. We choose a bilinear form $\formq\colon \mathbb{Z}_2^N \times \mathbb{Z}_2^N \to \mathbb{Z}_2$ such that for all $\mathbf{v},\mathbf{w} \in \mathbb{Z}_2^N$,
 \begin{equation}\label{eq:formulaForQ}
 \formq(\mathbf{v},\mathbf{w}) + \formq(\mathbf{w},\mathbf{v}) = \weight(\mathbf{v}) \weight(\mathbf{w}) + \mathbf{v} \cdot \mathbf{w},
     \end{equation}
where $\weight$ denotes the weight;
for example, the bilinear form $\formq(v_1v_2\dots v_N,w_1w_2 \dots w_N) = \sum_{i < j} v_iw_j$. 

For every variable $x$, we define $(f_x,g_x) \in \minion{M}^{(\diag D_x)}$ by
$$
f_x(a) = \weight(\mathbf{v}_x(a)), \quad 
g_x(a,a) = \formq(\mathbf{v}_x(a),\mathbf{v}_x(a) - \mathbf{i}), \quad
g_x(a,b) = \formq(\mathbf{v}_x(a),\mathbf{v}_x(b)) \text{ for $a \neq b$}. 
$$

It is straightforward to verify the LC constraints using the bilinearity of $\formq$. For every $x$, the first condition in Equation~(\ref{eq:definitionOfM}) follows from the linearity of $\weight$ and $\weight(\mathbf{i})=1$; the third one from 
Equation (\ref{eq:formulaForQ}) and 
the orthogonality 
$\mathbf{v}_x(a)\cdot \mathbf v_x(b)=0$.
Finally, the bilinearity of $\formq$, combined with the identity $\sum_{a} \mathbf{v}_x(a) = \mathbf{i}$, yields a stronger version of the second condition: for each $a$, $\sum_b g_x(a,b) = 0$.

\subsection{Vector relaxation}

The reason why the above vector relaxation can be computed in polynomial time is that it is equivalent to the second level of the $\mathbb{Z}_2$ relaxation, which can be efficiently solved  by Gaussian elimination.

We now briefly explain the equivalence. Consider an LC instance, such as the one in Figure \ref{fig:2SatDetailedDiagram}. In the vector solution, we attach to each point $p$ (which is, more precisely, a pair $p = xa$ consisting of a variable and an element of its domain)  a vector $\mathbf{v}_p := \mathbf{v}_x(a)$ over $\mathbb{Z}_2$. In the second level $\mathbb{Z}_2$ solution, we attach to each pair of points a number in $\mathbb{Z}_2$; this can be represented by a square matrix $G$ indexed by points. Going from the vector solution to the matrix one is natural: we set $G_{pq} = \mathbf{v}_p \cdot \mathbf{v}_q$; that is, $G$ is the
$\mathbb{Z}_2$ analogue of the Gram matrix of vectors $\mathbf{v}_p$. It turns out that $G$ indeed represents a correct solution. 
Going in the opposite direction, it is not hard to show that, over $\mathbb{Z}_2$ (in contrast to $\mathbb{R}$), every symmetric matrix is a Gram matrix of some vectors (in possibly higher dimension). Such vectors typically will not satisfy the summing constraints, but with some extra care, this idea can be made to work.

For other primes $p$ or larger $k > 2$, every sufficiently symmetric $k$-dimensional tensor is still a Gram tensor (Proposition~\ref{prop:Gram}), but additional complications arise. These can be resolved by the same fixed-part idea as in the $\kthlevel{\minion{R}}$ construction, and we obtain  (Definition~\ref{def:VectorMinion}) a minion $\VkZp$ that characterizes the $k$-th level of the $\mathbb{Z}_p$ relaxation in a satisfactory, SDP-like fashion;  this is our second main result, Theorem~\ref{thm:VkZp}.

We then apply the vector representation to show that the $p$-th level of $\mathbb{Z}_p$ relaxation solves systems of linear equations over $\mathbb Z_{p^2}$, by showing that $\VkZpParam{p}{p}$ has a minion homomorphism to $\mathcal{Z}_{p^2}$ (Proposition~\ref{prop:VkZpToZpSquared}); this appears to be a new result already for $p=2$. The homomorphism is surprisingly simple -- it is enough to sum the entries of the vectors modulo $p^2$. 
This is promising, since a sufficiently general treatment of $\mathbb{Z}_{p^2}$ was one of the key challenges for Zhuk's fixed-template CSP algorithm~\cite{zhukDichotomyJACM}.

We complement this positive result by showing that $\VkZpParam{2}{2}$ does not have a minion homomorphism to $\mathcal{Z}_{8}$ (Proposition~\ref{prop:NotVtwoZtwoToZthree})
and that $\VkZpParam{2}{p}$ does not have a minion homomorphism to $\mathcal{Z}_{p^2}$ for $p>2$ (Proposition~\ref{prop:NoZtwopToZpSquared}).
We do not know whether any level of the $\mathbb{Z}_2$ relaxation 
solves linear equations over $\mathbb{Z}_8$, and we do not know 
the minimal level required to solve linear equations over $\mathbb{Z}_{p^2}$ 
for $p > 3$; we conjecture that it is $p$. This provides us with toy examples for further algorithmic improvements. 
Finally, we note that 
a result of \cite{lichter2023separating} implies that no  fixed level of the $\mathbb{Z}_2$ relaxation solves all $\mathbb{Z}_{2^n}$.

\subsection{Concepts, results, surprises, ideas, and directions}

Our first main conceptual contribution is the Label‑Cover‑based definition of a power of an instance (Definition \ref{def:Spower}), which is easier to analyze than the traditional CSP‑oriented formulations and aligns naturally with standard relaxations.
Our second is the vector minion (Definition \ref{def:VectorMinion}), which
unifies the geometric (SDP) and algebraic ($\mathbb{Z}_p$) aspects within a single object.

Our first main result is the characterization of the $k$-th level of \emph{any} minion-based relaxation via a suitable minion (Theorem~\ref{thm:LevelVsPower}), together with its consequence to fixed-template PCSPs. 
The second main result is the equivalence between higher levels of $\mathbb{Z}_p$ relaxations and vector relaxations (Theorem~\ref{thm:VkZp}), which makes it possible to represent the output of any level of the $\mathbb{Z}_p$ relaxation using objects attached to  \emph{individual} Label Cover points --- a perspective already useful in small cases ($\mathbb{Z}_{p^2}$). 

These results were also surprising to us. It was not expected that higher‑level relaxations could be characterized by a minion---let alone in such generality and in a form that is useful. Equally unexpected was that an AIP‑based algorithm solves $\mathbf{D}_4$: affine relaxations were not meant to handle non‑abelian group CSPs! 

Among the  ideas we highlight are the ``fixed part'' construction, which plays a central role in both main results, and again the SDP–$\mathbb{Z}_p$ combination within a single object (indeed, the complexity of this interaction is one of the reasons the fixed‑template CSP dichotomy was challenging). Another novel idea is the algorithmic use of the fundamental theorem of PCSPs.

Regarding future directions, natural test cases---interesting in their own right---include the commutative rings $\mathbb{Z}_m$ and finite groups such as the dihedral group $\mathbf{D}_8$. Another intriguing challenge is to generalize the second‑level vector algorithm from $\mathbb Z_p$ to $\mathbb{Z}$  (which could also be of interest in the more general Valued PCSP setting \cite{AlgebraicApproachToApprox}). 
At present, we do not know whether the RLC over the corresponding minion is tractable, but note that it is strictly stronger than both SDP and AIP. (We do know, however, that the algorithm itself is not sufficiently powerful, due to an example by Zarathustra Brady.)

A broader goal is to strengthen the approach of \cite{zhuk2025singAlgorithms} to solving fixed-template CSPs, from small domains (7) to general finite domains. With the vector description in hand, this challenge appears more attainable.

Finally, the ideas presented here led us to realize that AIP itself can be enhanced by incorporating a natural form of recursion. This new way of increasing algorithmic power seems genuinely different from both the singleton and higher‑level approaches. Early results look promising: for example, a purely algebraic relaxation obtained in this way is already stronger than singleton AC, which solves all bounded‑width CSPs. We will share these ideas in a separate paper.

\section{Preliminaries}

Our formalism and results are largely inspired by the categorical perspective on the CSP presented in \cite{hadekJaklOprsal2026categories}. In this paper, we have chosen a middle ground between the categorical approach and the more traditional ones.

We introduce only those concepts that are needed in Sections~\ref{sec:levels}, \ref{sec:vectors}, and \ref{sec:ProofOfLevels}. The remaining concepts, in particular the PCSPs and their polymorphism minions, are introduced in Section~\ref{sec:PCSP}.

\begin{defin}[LC instance] \label{def:LCinstance}
 An \emph{LC instance} $\diag D$ consists of a finite set $X$ of variables, 
 finite set $\diag D_x$ for each variable $x \in X$ called the \emph{domain} of $x$, and a finite list of constraints of the form $x = \alpha(y)$, where $x,y \in X$ and $\alpha\colon \diag D_y \to \diag D_x$. 

A \emph{point} in $\diag D$ is a pair $xa$, with $x \in X$ and $a \in D_x$.

A \emph{solution} to $\diag D$ is a tuple $(d_x)_{x \in X}$, with $d_x\in\diag D_x$ for all variables $x$, satisfying $d_x = \alpha(d_y)$ for every constraint $x = \alpha(y)$ in $\diag D$.
 \end{defin}

\begin{defin}
    The \emph{shape} of an LC instance $\diag D$ is the directed multigraph whose vertices are the variables, with an arc  $x \leftarrow y$ for each constraint 
    $x = \alpha(y)$ in $\diag D$. 

    $\diag D$ is \emph{connected} if its shape is connected.

    $\diag D$ is \emph{discrete} if $\diag D$ has no constraints.
\end{defin}

Until Section~\ref{sec:PCSP}, (abstract) minions play the role of describing relaxations. 
In the language of category theory, a  minion is a functor from the category of finite sets to the category of sets. We spell this out in detail as follows.

\begin{definition}[Minion]
A \emph{minion} $\minion R$ consists of a collection of sets $\minion R^{(D)}$, indexed by all finite sets $D$, together with a \emph{minor map} ${\minion R}^{(\alpha)}\colon  {\minion R}^{(D)}\to  {\minion R}^{(E)}$ for every function $\alpha\colon D\to E$, which satisfies that ${\minion R}^{(\id_{D})}= \id_{{\minion R}^{(D)}}$ for all finite sets $D$ and ${\minion R}^{(\alpha)} \circ {\minion R}^{(\beta)} = \minion R^{(\alpha \circ \beta)}$ whenever such a composition is well-defined. 
\end{definition} 

\begin{defin}
    Let $\diag D$ be an LC instance and $\minion{R}$ a minion. The \emph{$\minion{R}$-relaxation of $\diag{D}$}, denoted $\minion{R} \circ \diag D$, is obtained from $\minion{D}$ by replacing each domain $\minion{D}_x$ with $\minion{R}^{(\diag D_x)}$ and each constraint $x = \alpha(y)$ with $x = \minion R^{(\alpha)}(y)$.

    A solution of $\minion{R} \circ \diag D$ is also referred to as a \emph{solution to $\diag D$ in $\minion R$}, or an \emph{$\minion{R}$-solution to $\diag D$}.
\end{defin}

The most important minions for us are $\mathcal{Z}_n$, in particular when $n$ is a prime number.

\begin{defin}
   Let $n \in \mathbb{N}$. The minion $\mathcal{Z}_n$ is defined by $\mathcal{Z}_n^{(D)} = \{f\colon D \to \mathbb{Z}_n \mid \sum_{a \in D} f(a) = 1\}$ for every finite set $D$, and $\mathcal{Z}_n^{(\alpha)}(f) = \summing{\alpha}{f}$, where $(\summing{\alpha}{f})(b) = \sum_{a \in \alpha^{-1}(b)} f(a)$, for every function $\alpha\colon D \to E$.  
\end{defin}

Minions can be compared by means of minion homomorphisms.

\begin{defin} Let $\minion R $ and $\minion S$ be minions.
    A \emph{minion homomorphism} from $\minion{R} $ to  $\minion{S}$ is a collection of functions $\xi^{(D)}\colon \minion R^{(D)} \to \minion S^{(D)}$, indexed by all finite sets $D$, that commute with minor maps; that is, $\xi^{(E)} \circ \minion R^{(\alpha)}={\minion S}^{(\alpha)} \circ \xi^{(D)}$ for every function $\alpha\colon D \to E$.
\end{defin}

\begin{prop} \label{prop:HomosPreserveSolutions}
    Let $\diag D$ be an LC instance, and $\minion{R}$ and $\minion{S}$ be minions with a minion homomorphism $\xi\colon \minion{R} \to \minion{S}$. If $\diag D$ has a solution in $\minion{R}$, then it has a solution in $\minion{S}$.
\end{prop}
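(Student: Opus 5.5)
The plan is to push an $\minion R$-solution forward along $\xi$ one variable at a time. Suppose $(r_x)_{x \in X}$ is a solution of $\minion{R} \circ \diag D$, so $r_x \in \minion{R}^{(\diag D_x)}$ for each variable $x$, and $r_x = \minion{R}^{(\alpha)}(r_y)$ for every constraint $x = \alpha(y)$ of $\diag D$. For each variable $x$ we set
\[
s_x := \xi^{(\diag D_x)}(r_x) \in \minion{S}^{(\diag D_x)} .
\]
This choice already puts each $s_x$ in the correct domain of $\minion{S} \circ \diag D$.

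It remains to check the constraints. Fix a constraint $x = \alpha(y)$, where $\alpha\colon \diag D_y \to \diag D_x$. The defining property of a minion homomorphism is $\xi^{(\diag D_x)} \circ \minion{R}^{(\alpha)} = \minion{S}^{(\alpha)} \circ \xi^{(\diag D_y)}$. Using this, we compute
\[
\minion{S}^{(\alpha)}(s_y) = \minion{S}^{(\alpha)}\bigl(\xi^{(\diag D_y)}(r_y)\bigr) = \xi^{(\diag D_x)}\bigl(\minion{R}^{(\alpha)}(r_y)\bigr) = \xi^{(\diag D_x)}(r_x) = s_x .
\]
Hence $(s_x)_{x\in X}$ satisfies every constraint, so it is a solution of $\minion{S} \circ \diag D$, which is what the proposition claims.

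There is no real obstacle here; the argument is a direct unwinding of the definitions. The one point that needs care is applying the commuting square with the correct domain indices, namely $\xi^{(\diag D_y)}$ on the source side and $\xi^{(\diag D_x)}$ on the target side. In categorical terms, the statement just says that whiskering the natural transformation $\xi$ with $\diag D$ maps cones over $\minion{R}\circ\diag D$ to cones over $\minion{S}\circ\diag D$.
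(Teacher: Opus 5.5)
Your proof is correct and is the same argument as the paper's proof sketch: apply $\xi^{(\diag D_x)}$ to each component $r_x$ and use the naturality square of $\xi$ to check every constraint $x=\alpha(y)$. You have simply written out the one-line verification that the paper leaves implicit, with the domain indices of $\xi$ applied correctly.
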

\begin{proof}[Proof sketch]
If $(d_x)_{x \in X}$ is a solution in $\minion{R}$, then $(\xi^{\diag D_x}(d_x))_{x \in X}$ is a solution in $\minion{S}$.
\end{proof}

Finally, post-composition is denoted $\cdot^*$, that is $q^*(d) = d \circ q$. For a set $A$, $A$-tuples are formally functions from $A$; $n$-tuples (that is, $[n]$-tuples) are written as $a_1a_2\dots a_n$ or $(a_1,a_2,\dots, a_n)$. Conversely, functions from $[n]$ are sometimes written as tuples, especially when specifying $\alpha$ in a minor map, or when working with post-composition. For example,
$$
(4321)^* (abcd) = dcba, \quad (1132)^*(abc) = (aacb).
$$

\section{Higher levels} \label{sec:levels}

\subsection{Saturated power of an LC instance}

The saturated power of an LC instance $\diag D$ is an LC instance whose variables are all $k$-tuples of variables from the original instance, and whose constraints are those ``obviously'' implied by the original constraints. There are two types of such constraints. The first type arises from product maps: for example, if $\diag{D}$ contains $x=\alpha(y)$ and $x' = \alpha'(y')$, then the second saturated power contains $xx' = \alpha \times \alpha'(yy')$, where $\alpha \times \alpha' (dd') = \alpha(d)\alpha(d')$. Since we also want to include constraints such as $xz = \alpha \times \id (yz)$,  it is convenient to add the trivial constraints such as $z = \id(z)$ to $\diag D$.
The second type consists of constraints obtained by permuting or merging coordinates, e.g.,  $xy = (2 1)^*(yx)$ (recall that $(2 1)^*$ maps $ab$ to $ba$) and $xx = (1 1)^*(xy)$ (where $(11)^*$ maps $ab$ to $aa$).

\begin{defin}[Saturated power] \label{def:Spower}
    Let $\diag D$ be an LC instance as in Definition~\ref{def:LCinstance} and let $k \in \mathbb N$. The \emph{$k$-th saturated power of $\diag D$}, denoted $\kthpower{\diag D}$, has variable set $X^k$ and domains 
$$    
\diag D_{x_1x_2\dots x_k} = \diag D_{x_1} \times \diag D_{x_2} \times \cdots \times \diag D_{x_k}. 
$$ To define the constraints, we first add to $\diag D$ all constraints of the form $x = \id(x)$, and then include in $\kthpower{\diag D}$ the following two types of constraints. 
    \begin{enumerate}
        \item[(1)]  For every $k$-tuple of $\diag D$-constraints $x_i = \alpha_i(y_i)$, $i \in [k]$, we include the constraint $x_1x_2\dots x_k = \alpha_1\times \alpha_2 \times \dots \times\alpha_k \;(y_1y_2 \dots y_k)$:
    \begin{align*}
       \diag{D}_{y_1y_2\dots y_k}&\xrightarrow{\alpha_1\times \alpha_2 \times \dots \times\alpha_k}\diag{D}_{x_1x_2\dots x_k} \\       
        d_1d_2\dots d_k &\xmapsto{\phantom{\alpha_1\times \alpha_2 \times \dots \times\alpha_k}}
        \alpha_1(d_1)\alpha_2(d_2) \dots \alpha_k(d_k)
    \end{align*}
        \item[(2)] For every $x_1x_2 \dots x_k \in X^k$ and every $\sigma\colon [k] \to [k]$, we include the constraint $x_{\sigma(1)}x_{\sigma(2)} \dots x_{\sigma(k)} = \sigma^{*}(x_1x_2 \dots x_k)$:
    \begin{align*}
    \diag{D}_{x_1x_2\dots x_k} & \xrightarrow{\sigma^*} \diag{D}_{x_{\sigma(1)}x_{\sigma(2)}\dots x_{\sigma(k)}}\\
    d_1d_2\dots d_k &\xmapsto{\phantom{\sigma^*}}
    d_{\sigma(1)}d_{\sigma(2)} \dots d_{\sigma(k)} =d\circ \sigma 
    \end{align*}
    \end{enumerate}
\end{defin}

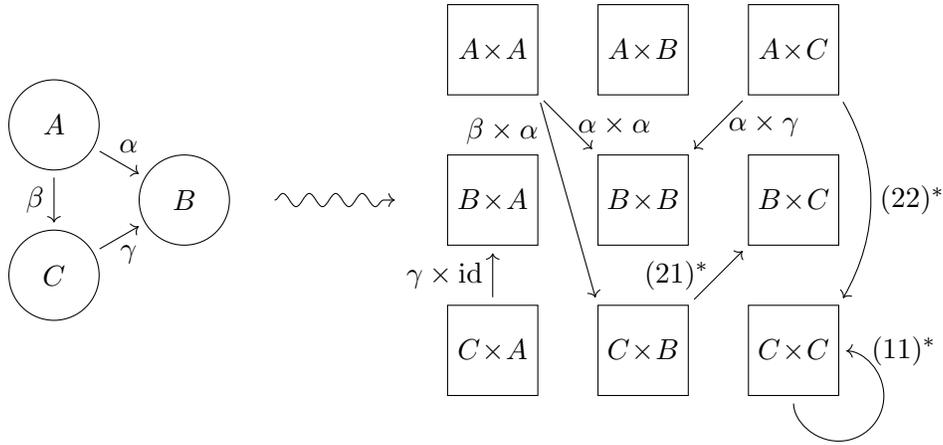
\begin{figure}[h]
    \centering
\begin{tikzpicture}[
  box/.style={draw, minimum width=1.2cm, minimum height=1.2cm,
              inner sep=0pt, font=\normalsize},
  barr/.style={-{Latex}, line width=1.5pt, blue, shorten >=0.1cm, shorten <=0.1cm}
]
    \def\r{0.6}
    \def\d{2}
    \def\t{150}
    \def\s{1.2}
    \def\center{(-1.5,-2*\d-\s/2)}
    \draw[->,decoration={snake},decorate] (-0.3,-2*\d-\s/2) -- (1.3,-2*\d-\s/2);
    \draw [shift=\center](\t:\d) coordinate (o1) circle (\r);
        \node at (o1) {$A$};
    \draw [shift=\center](\t+60:\d) coordinate (o2) circle (\r);
        \node at (o2) {$C$};
    \draw \center coordinate (o3) circle (\r);
        \node at (o3) {$B$};
    \draw[->, shorten > = 0.7cm, shorten < = 0.7cm] (o1) -- (o2) node[midway, left] {$\beta$};
    \draw[->, shorten > = 0.7cm, shorten < = 0.7cm] (o1) -- (o3) coordinate[midway] (c);
    \node at ([shift={(60:0.25)}]c) {$\alpha$};
    \draw[->, shorten > = 0.7cm, shorten < = 0.7cm] (o2) -- (o3) coordinate[midway] (c);
    \node at ([shift={(120:-0.25)}]c) {$\gamma$};
    \node[box] (AA) at (2.6,-2.6) {$A\!\times\! A$};
    \node[box] (AB) at (4.6,-2.6) {$A\!\times\! B$};
    \node[box] (AC) at (6.6,-2.6) {$A\!\times\! C$};
    \node[box] (BA) at (2.6,-4.6) {$B\!\times\! A$};
    \node[box] (BB) at (4.6,-4.6) {$B\!\times\! B$};
    \node[box] (BC) at (6.6,-4.6) {$B\!\times\! C$};
    \node[box] (CA) at (2.6,-6.6) {$C\!\times\! A$};
    \node[box] (CB) at (4.6,-6.6) {$C\!\times\! B$};
    \node[box] (CC) at (6.6,-6.6) {$C\!\times\! C$};
    \draw[->, shorten >=0.1cm, shorten <=0.1cm]
        (AA.south east) -- (BB.north west)
        node[midway, right] {$\alpha\times \alpha$};
    \draw[->, shorten >=0.1cm, shorten <=0.1cm]
        (AA.south east) -- (CB.north west)
        node[pos=0.17, left] {$\beta\times \alpha$};
    \draw[->, shorten >=0.1cm, shorten <=0.1cm]
        (CA.north) -- (BA.south)
        node[midway, left] {$\gamma\times \mathrm{id}$};
    \draw[->, shorten >=0.1cm, shorten <=0.1cm]
        (AC.south west) -- (BB.north east)
        node[midway, right] {$\alpha\times\gamma$};
    \draw[->, shorten >=0.1cm, shorten <=0.1cm]
        (CB.north east) -- (BC.south west)
        node[midway, left] {$(21)^*$};
    \draw[->, shorten >=0.1cm, shorten <=0.1cm]
        (AC.south east) to[bend left] node[midway, right] {$(22)^*$} (CC.north east);
    \draw[->, shorten >=0.1cm, shorten <=0.1cm]
        (CC.south) arc (180:450:0.6cm)
        node[xshift = 0.3cm, right] {$(11)^*$};
\end{tikzpicture}
    \caption{Saturated power}
    \label{fig:SPower}
\end{figure}

Note that the saturated power is nontrivial even for discrete $\diag D$; it contains the constraints of type (2) enforcing consistency among permutations of tuples of variables and variable merges.

Observe also that there are redundancies. For example, it would be enough to include type (2) constraints and type (1) constraints whose product maps have the form $\alpha \times \id \times \id \times \cdots \times \id$.  

\subsection{Second level example}

We now discuss the $\mathbb{Z}_2$ relaxation and its second level in some detail. 
Recall that the $\mathbb{Z}_2$ relaxation is described by the minion $\Ztwo$ with  $\Ztwo^{(D)} = \{f\colon D \to \mathbb{Z}_2\mid\sum_{a \in D} f(a)=1\}$ and the functions $\Ztwo^{(\alpha)}$ defined in the natural way -- by summation. Looking at Figure~\ref{fig:SPower}, a solution of $\Ztwo \circ \diag D$, the $\Ztwo$-relaxation  of $\diag D$, assigns a number in $\mathbb{Z}_2$ to every point on the left, in a way consistent with the constraints and definitions of $\Ztwo^{(D)}$ and $\Ztwo^{(\alpha)}$. 
Likewise, a solution of $\Ztwo \circ \secondpower{\diag D}$, the second level $\Ztwo$-relaxation of $\diag D$, assigns a number to every point on the right (although the points are not shown in the figure). 

To give a specific example, we consider the LC instance $\diag D$ associated with the CSP instance
$$
 \varx\le \vary, \  
 \vary\le \varz, \ 
 \varz\le \varu, \  
 \varu<\varx \quad
 \text{ over the domain $\{0,1\}$}.
$$

A solution of $\Ztwo \circ \diag{D}$ assigns to every point $xa$ consisting of an LC variable $x$, which is an original CSP variable or an original constraint, and a domain element $a \in \diag D_x$, a value $f_x(a)$ in $\mathbb{Z}_2$.
The LC instance $\diag D$ together with its $\Ztwo$-solution is shown in Figure \ref{fig:ZtwoInstance}.

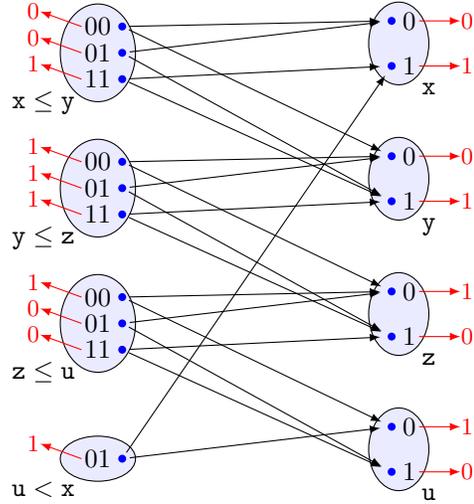
\begin{wrapfigure}{r}{0.45\textwidth}
    \centering
\begin{tikzpicture}[
    tuple/.style={inner sep=1pt, font=\small},
    label/.style={font=\small, inner sep=0pt, anchor=north east},
    rlabel/.style={font=\small, inner sep=0pt, anchor=north west},
    aip/.style={color=red, font=\footnotesize, inner sep=0pt}
]
\begin{scope}
  \clip (0, 2.7) ellipse (0.50cm and 0.65cm);
  \fill[blue!8!white] (-1, 2.05) rectangle (1, 3.35);
\end{scope}
\draw (0, 2.7) ellipse (0.50cm and 0.65cm);
\node[tuple] (Rxy00) at (0, 3.05) {$00$};
\node[tuple] (Rxy01) at (0, 2.70) {$01$};
\node[tuple] (Rxy11) at (0, 2.35) {$11$};
\node[label] at (-0.3, 2.20) {$\varx\le \vary$};

\begin{scope}
  \clip (0, 0.9) ellipse (0.50cm and 0.65cm);
  \fill[blue!8!white] (-1, 0.25) rectangle (1, 1.55);
\end{scope}
\draw (0, 0.9) ellipse (0.50cm and 0.65cm);
\node[tuple] (Ryz00) at (0, 1.25) {$00$};
\node[tuple] (Ryz01) at (0, 0.90) {$01$};
\node[tuple] (Ryz11) at (0, 0.55) {$11$};
\node[label] at (-0.3, 0.4) {$\vary\le \varz$};

\begin{scope}
  \clip (0, -0.9) ellipse (0.50cm and 0.65cm);
  \fill[blue!8!white] (-1, -1.55) rectangle (1, -0.25);
\end{scope}
\draw (0, -0.9) ellipse (0.50cm and 0.65cm);
\node[tuple] (Rzu00) at (0, -0.55) {$00$};
\node[tuple] (Rzu01) at (0, -0.90) {$01$};
\node[tuple] (Rzu11) at (0, -1.25) {$11$};
\node[label] at (-0.3, -1.40) {$\varz\le \varu$};

\begin{scope}
  \clip (0, -2.7) ellipse (0.50cm and 0.30cm);
  \fill[blue!8!white] (-1, -3.00) rectangle (1, -2.40);
\end{scope}
\draw (0, -2.7) ellipse (0.50cm and 0.30cm);
\node[tuple] (Rux01) at (0, -2.70) {$01$};
\node[label] at (-0.3, -3) {$\varu< \varx$};

\begin{scope}
  \clip (4, 2.825) ellipse (0.40cm and 0.55cm);
  \fill[blue!8!white] (3, 2.275) rectangle (5, 3.375);
\end{scope}
\draw (4, 2.825) ellipse (0.40cm and 0.55cm);
\node[tuple] (x0) at (4, 3.125) [xshift=4pt] {$0$};
\node[tuple] (x1) at (4, 2.525) [xshift=4pt] {$1$};
\node[rlabel] at (4.30, 2.305) {$\varx$};

\begin{scope}
  \clip (4, 1.025) ellipse (0.40cm and 0.55cm);
  \fill[blue!8!white] (3, 0.475) rectangle (5, 1.575);
\end{scope}
\draw (4, 1.025) ellipse (0.40cm and 0.55cm);
\node[tuple] (y0) at (4, 1.325) [xshift=4pt] {$0$};
\node[tuple] (y1) at (4, 0.725) [xshift=4pt] {$1$};
\node[rlabel] at (4.30, 0.505) {$\vary$};

\begin{scope}
  \clip (4, -0.775) ellipse (0.40cm and 0.55cm);
  \fill[blue!8!white] (3, -1.325) rectangle (5, -0.225);
\end{scope}
\draw (4, -0.775) ellipse (0.40cm and 0.55cm);
\node[tuple] (z0) at (4, -0.475) [xshift=4pt] {$0$};
\node[tuple] (z1) at (4, -1.075) [xshift=4pt] {$1$};
\node[rlabel] at (4.30, -1.295) {$\varz$};

\begin{scope}
  \clip (4, -2.575) ellipse (0.40cm and 0.55cm);
  \fill[blue!8!white] (3, -3.125) rectangle (5, -2.025);
\end{scope}
\draw (4, -2.575) ellipse (0.40cm and 0.55cm);
\node[tuple] (u0) at (4, -2.275) [xshift=4pt] {$0$};
\node[tuple] (u1) at (4, -2.875) [xshift=4pt] {$1$};
\node[rlabel] at (4.30, -3.095) {$\varu$};

\draw[-latex, shorten <=0.1cm, shorten >=0.15cm] ([xshift=3pt]Rxy00.east) to ([xshift=-3pt]x0.west);
\draw[-latex, shorten <=0.1cm, shorten >=0.15cm] ([xshift=3pt]Rxy00.east) to ([xshift=-3pt]y0.west);
\draw[-latex, shorten <=0.1cm, shorten >=0.15cm] ([xshift=3pt]Rxy01.east) to ([xshift=-3pt]x0.west);
\draw[-latex, shorten <=0.1cm, shorten >=0.15cm] ([xshift=3pt]Rxy01.east) to ([xshift=-3pt]y1.west);
\draw[-latex, shorten <=0.1cm, shorten >=0.15cm] ([xshift=3pt]Rxy11.east) to ([xshift=-3pt]x1.west);
\draw[-latex, shorten <=0.1cm, shorten >=0.15cm] ([xshift=3pt]Rxy11.east) to ([xshift=-3pt]y1.west);
\draw[-latex, shorten <=0.1cm, shorten >=0.15cm] ([xshift=3pt]Ryz00.east) to ([xshift=-3pt]y0.west);
\draw[-latex, shorten <=0.1cm, shorten >=0.15cm] ([xshift=3pt]Ryz00.east) to ([xshift=-3pt]z0.west);
\draw[-latex, shorten <=0.1cm, shorten >=0.15cm] ([xshift=3pt]Ryz01.east) to ([xshift=-3pt]y0.west);
\draw[-latex, shorten <=0.1cm, shorten >=0.15cm] ([xshift=3pt]Ryz01.east) to ([xshift=-3pt]z1.west);
\draw[-latex, shorten <=0.1cm, shorten >=0.15cm] ([xshift=3pt]Ryz11.east) to ([xshift=-3pt]y1.west);
\draw[-latex, shorten <=0.1cm, shorten >=0.15cm] ([xshift=3pt]Ryz11.east) to ([xshift=-3pt]z1.west);
\draw[-latex, shorten <=0.1cm, shorten >=0.15cm] ([xshift=3pt]Rzu00.east) to ([xshift=-3pt]z0.west);
\draw[-latex, shorten <=0.1cm, shorten >=0.15cm] ([xshift=3pt]Rzu00.east) to ([xshift=-3pt]u0.west);
\draw[-latex, shorten <=0.1cm, shorten >=0.15cm] ([xshift=3pt]Rzu01.east) to ([xshift=-3pt]z0.west);
\draw[-latex, shorten <=0.1cm, shorten >=0.15cm] ([xshift=3pt]Rzu01.east) to ([xshift=-3pt]u1.west);
\draw[-latex, shorten <=0.1cm, shorten >=0.15cm] ([xshift=3pt]Rzu11.east) to ([xshift=-3pt]z1.west);
\draw[-latex, shorten <=0.1cm, shorten >=0.15cm] ([xshift=3pt]Rzu11.east) to ([xshift=-3pt]u1.west);
\draw[-latex, shorten <=0.1cm, shorten >=0.15cm] ([xshift=3pt]Rux01.east) to ([xshift=-3pt]u0.west);
\draw[-latex, shorten <=0.1cm, shorten >=0.15cm] ([xshift=3pt]Rux01.east) to ([xshift=-3pt]x1.west);

\fill[blue] ([xshift=3pt]Rxy00.east) circle (1.5pt);
\fill[blue] ([xshift=3pt]Rxy01.east) circle (1.5pt);
\fill[blue] ([xshift=3pt]Rxy11.east) circle (1.5pt);
\fill[blue] ([xshift=3pt]Ryz00.east) circle (1.5pt);
\fill[blue] ([xshift=3pt]Ryz01.east) circle (1.5pt);
\fill[blue] ([xshift=3pt]Ryz11.east) circle (1.5pt);
\fill[blue] ([xshift=3pt]Rzu00.east) circle (1.5pt);
\fill[blue] ([xshift=3pt]Rzu01.east) circle (1.5pt);
\fill[blue] ([xshift=3pt]Rzu11.east) circle (1.5pt);
\fill[blue] ([xshift=3pt]Rux01.east) circle (1.5pt);
\fill[blue] ([xshift=-3pt]x0.west) circle (1.5pt);
\fill[blue] ([xshift=-3pt]x1.west) circle (1.5pt);
\fill[blue] ([xshift=-3pt]y0.west) circle (1.5pt);
\fill[blue] ([xshift=-3pt]y1.west) circle (1.5pt);
\fill[blue] ([xshift=-3pt]z0.west) circle (1.5pt);
\fill[blue] ([xshift=-3pt]z1.west) circle (1.5pt);
\fill[blue] ([xshift=-3pt]u0.west) circle (1.5pt);
\fill[blue] ([xshift=-3pt]u1.west) circle (1.5pt);

\draw[-latex, red] (Rxy00.west) -- ++(-0.55, 0.2) node[left, aip] {$0$};
\draw[-latex, red] (Rxy01.west) -- ++(-0.55, 0.2) node[left, aip] {$0$};
\draw[-latex, red] (Rxy11.west) -- ++(-0.55, 0.2) node[left, aip] {$1$};
\draw[-latex, red] (Ryz00.west) -- ++(-0.55, 0.2) node[left, aip] {$1$};
\draw[-latex, red] (Ryz01.west) -- ++(-0.55, 0.2) node[left, aip] {$1$};
\draw[-latex, red] (Ryz11.west) -- ++(-0.55, 0.2) node[left, aip] {$1$};
\draw[-latex, red] (Rzu00.west) -- ++(-0.55, 0.2) node[left, aip] {$1$};
\draw[-latex, red] (Rzu01.west) -- ++(-0.55, 0.2) node[left, aip] {$0$};
\draw[-latex, red] (Rzu11.west) -- ++(-0.55, 0.2) node[left, aip] {$0$};
\draw[-latex, red] (Rux01.west) -- ++(-0.55, 0.2) node[left, aip] {$1$};

\draw[-latex, red] (x0.east) -- ++(0.55, 0) node[right, aip] {$0$};
\draw[-latex, red] (x1.east) -- ++(0.55, 0) node[right, aip] {$1$};
\draw[-latex, red] (y0.east) -- ++(0.55, 0) node[right, aip] {$0$};
\draw[-latex, red] (y1.east) -- ++(0.55, 0) node[right, aip] {$1$};
\draw[-latex, red] (z0.east) -- ++(0.55, 0) node[right, aip] {$1$};
\draw[-latex, red] (z1.east) -- ++(0.55, 0) node[right, aip] {$0$};
\draw[-latex, red] (u0.east) -- ++(0.55, 0) node[right, aip] {$1$};
\draw[-latex, red] (u1.east) -- ++(0.55, 0) node[right, aip] {$0$};
\end{tikzpicture}
    \caption{Solution of the $\mathbb{Z}_2$ relaxation.}
    \label{fig:ZtwoInstance}
\end{wrapfigure}
 
Observe that the condition in the definition of $\Ztwo^{(D)}$ simply means that the values assigned to the elements of each blue set sum to 1. Moreover, the definition of $\Ztwo^{(\alpha)}$ translates to the requirement that the value assigned to any element on the right is the sum of values assigned to its neighbors on the left. 

A solution of $\Ztwo \circ \secondpower{\diag D}$  assigns a value in $\mathbb{Z}_2$ to every pair of points $(xa, yb)$ --  the value $f_{xy}(ab)$. Such a solution is shown in Figure~\ref{fig:SecondLevelSolution}: the value $f_{xy}(ab)$ is in the row $xa$ and column $yb$. This time, the definition of $\Ztwo^{(D)}$ ensures that the values in each block of the matrix sum to 1. The constraints corresponding to $(21)^{*}$  ensure that the matrix is symmetric.  The constraints corresponding to $(11)^{*}$ ensure that each diagonal block is a diagonal matrix and that, within each block, every row sums to the diagonal entry of the same row. Finally, the constraints corresponding to $\alpha \times \id$  impose conditions on entire rows; for example, 
the row $\vary,1$ is the sum of the rows $\varx \leq \vary,01$ and $\varx \leq \vary,11$.

\begin{figure}[h]
{\renewcommand{\arraystretch}{1.45}
$$\begin{array}{r!{\vrule width 2pt}*{2}{w{c}{0.9em}}|*{2}{w{c}{0.9em}}|*{2}{w{c}{0.9em}}|*{2}{w{c}{0.9em}}|*{3}{w{c}{0.9em}}|*{3}{w{c}{0.9em}}|*{3}{w{c}{0.9em}}|w{c}{0.9em}!{\vrule width 2pt}}
 & \rotatebox{90}{$\varx,0$} & \rotatebox{90}{$\varx,1$} & \rotatebox{90}{$\vary,0$} & \rotatebox{90}{$\vary,1$} & \rotatebox{90}{$\varz,0$} & \rotatebox{90}{$\varz,1$} & \rotatebox{90}{$\varu,0$} & \rotatebox{90}{$\varu,1$} & \rotatebox{90}{$\varx\le \vary,00$} & \rotatebox{90}{$\varx\le \vary,01$} & \rotatebox{90}{$\varx\le \vary,11$} & \rotatebox{90}{$\vary\le \varz,00$} & \rotatebox{90}{$\vary\le \varz,01$} & \rotatebox{90}{$\vary\le \varz,11$} & \rotatebox{90}{$\varz\le \varu,00$} & \rotatebox{90}{$\varz\le \varu,01$} & \rotatebox{90}{$\varz\le \varu,11$} & \rotatebox{90}{$\varu<\varx,01$} \\
\noalign{\hrule height 2pt}
\varx,0 & \cellcolor{blue!10}0&\cellcolor{blue!10}0&0&0&0&0&0&0&0&0&0&0&0&0&0&0&0&0\\
\varx,1 & \cellcolor{blue!10}0&\cellcolor{blue!10}1&0&1&1&0&1&0&0&0&1&1&1&1&1&0&0&1\\
\hline
\vary,0 & 0&0&\cellcolor{blue!10}0&\cellcolor{blue!10}0&1&1&0&0&0&0&0&1&1&0&0&1&1&0\\
\vary,1 & 0&1&\cellcolor{blue!10}0&\cellcolor{blue!10}1&0&1&1&0&0&0&1&0&0&1&1&1&1&1\\
\hline
\varz,0 & 0&1&1&0&\cellcolor{blue!10}1&\cellcolor{blue!10}0&1&0&1&1&1&1&0&0&1&0&0&1\\
\varz,1 & 0&0&1&1&\cellcolor{blue!10}0&\cellcolor{blue!10}0&0&0&1&1&0&0&1&1&0&0&0&0\\
\hline
\varu,0 & 0&1&0&1&1&0&\cellcolor{blue!10}1&\cellcolor{blue!10}0&0&0&1&1&1&1&1&0&0&1\\
\varu,1 & 0&0&0&0&0&0&\cellcolor{blue!10}0&\cellcolor{blue!10}0&0&0&0&0&0&0&0&0&0&0\\
\hline
\varx\le \vary,00 & 0&0&0&0&1&1&0&0&\cellcolor{blue!10}0&\cellcolor{blue!10}0&\cellcolor{blue!10}0&1&1&0&0&1&1&0\\
\varx\le \vary,01 & 0&0&0&0&1&1&0&0&\cellcolor{blue!10}0&\cellcolor{blue!10}0&\cellcolor{blue!10}0&1&1&0&0&1&1&0\\
\varx\le \vary,11 & 0&1&0&1&1&0&1&0&\cellcolor{blue!10}0&\cellcolor{blue!10}0&\cellcolor{blue!10}1&1&1&1&1&0&0&1\\
\hline
\vary\le \varz,00 & 0&1&1&0&1&0&1&0&1&1&1&\cellcolor{blue!10}1&\cellcolor{blue!10}0&\cellcolor{blue!10}0&1&0&0&1\\
\vary\le \varz,01 & 0&1&1&0&0&1&1&0&1&1&1&\cellcolor{blue!10}0&\cellcolor{blue!10}1&\cellcolor{blue!10}0&1&1&1&1\\
\vary\le \varz,11 & 0&1&0&1&0&1&1&0&0&0&1&\cellcolor{blue!10}0&\cellcolor{blue!10}0&\cellcolor{blue!10}1&1&1&1&1\\
\hline
\varz\le \varu,00 & 0&1&0&1&1&0&1&0&0&0&1&1&1&1&\cellcolor{blue!10}1&\cellcolor{blue!10}0&\cellcolor{blue!10}0&1\\
\varz\le \varu,01 & 0&0&1&1&0&0&0&0&1&1&0&0&1&1&\cellcolor{blue!10}0&\cellcolor{blue!10}0&\cellcolor{blue!10}0&0\\
\varz\le \varu,11 & 0&0&1&1&0&0&0&0&1&1&0&0&1&1&\cellcolor{blue!10}0&\cellcolor{blue!10}0&\cellcolor{blue!10}0&0\\
\hline
\varu<\varx,01 & 0&1&0&1&1&0&1&0&0&0&1&1&1&1&1&0&0&\cellcolor{blue!10}1\\
\noalign{\hrule height 2pt}
\end{array}$$}
\caption{A solution to the second level of the $\mathbb{Z}_2$ relaxation.}
\label{fig:SecondLevelSolution}
\end{figure}

\subsection{Levels of a minion}

We arrive at the definition of the $k$-th level of a minion $\minion{R}$. An object of the ``relaxed set $A$'' of this minion --- illustrated for $k=2$ in Figure \ref{fig:kthlevel} --- consists of a \emph{fixed part}, which does not change under any minor map, and a non-fixed part. The fixed part is an $\minion{R}$-solution $r$ of the $k$-th saturated power of an arbitrary LC instance $\diag D$. The non-fixed part is a ``row minor'' of a ``row block'' of $r$. 
To gain intuition for $\minion{R} = \minion{Z}_2$, it may also be helpful to revisit Figure \ref{fig:SecondLevelSolution}.

\begin{figure}[h]
\centering
\begin{tikzpicture}[
  box/.style={draw, minimum width=1.2cm, minimum height=1.2cm,
              inner sep=0pt, font=\small},
  abox/.style={draw, minimum width=1.2cm, minimum height=0.84cm,
               inner sep=0pt, font=\small},
  barr/.style={-{Latex}, line width=1.5pt, blue, shorten >=0.1cm, shorten <=0.1cm}
]
\node[box] (D1D1) at (2.6,-2.6) {$D_1\!\times\! D_1$};
\node[box] (D1D2) at (4.6,-2.6) {$D_1\!\times\! D_2$};
\node[box] (D1D3) at (6.6,-2.6) {$D_1\!\times\! D_3$};
\node[box] (D2D1) at (2.6,-4.6) {$D_2\!\times\! D_1$};
\node[box] (D2D2) at (4.6,-4.6) {$D_2\!\times\! D_2$};
\node[box] (D2D3) at (6.6,-4.6) {$D_2\!\times\! D_3$};
\node[box] (D3D1) at (2.6,-6.6) {$D_3\!\times\! D_1$};
\node[box] (D3D2) at (4.6,-6.6) {$D_3\!\times\! D_2$};
\node[box] (D3D3) at (6.6,-6.6) {$D_3\!\times\! D_3$};
\draw[decorate, decoration={snake, amplitude=2pt, segment length=8pt}]
    (1.8,-8.1) -- (7.4,-8.1);
\node[abox] (AD1) at (2.6,-9.0) {$A\!\times\! D_1$};
\node[abox] (AD2) at (4.6,-9.0) {$A\!\times\! D_2$};
\node[abox] (AD3) at (6.6,-9.0) {$A\!\times\! D_3$};
\draw[->, shorten >=0.1cm, shorten <=0.1cm]
    (D2D1.north east) -- (D1D2.south west)
    node[midway, left] {$(21)^*$};
\draw[->, shorten >=0.1cm, shorten <=0.1cm]
    (D1D3.south east) to[bend left] node[midway, right] {$(22)^*$} (D3D3.north east);
\draw[->, shorten >=0.1cm, shorten <=0.1cm]
    (D3D3.south) arc (180:450:0.6cm)
    node[xshift=1cm, yshift=-0.2cm] {$(11)^*$};
\draw[barr] (D2D1.west) to[bend right] (AD1.west);
\node[blue, font=\small, rotate=90] at (1.0,-6.8) {$\beta\times\id$};
\draw[barr] (D2D2.west) to[bend right] (AD2.west);
\node[blue, font=\small, rotate=90] at (3.7,-6.8) {$\beta\times\id$};
\draw[barr] (D2D3.west) to[bend right] (AD3.west);
\node[blue, font=\small, rotate=90] at (5.7,-6.8) {$\beta\times\id$};
\end{tikzpicture}
\caption{An object of $\kthlevel{\minion R}^{(A)}$.}
\label{fig:kthlevel}
\end{figure}

\begin{defin}
Let $\minion R$ be a minion and $k\in\mathbb N$.
We define the \emph{$k$-th level} of $\minion{R}$, denoted $\kthlevel{\minion{R}}$, as follows.
Elements of $\kthlevel{\minion{R}}^{(A)}$  are triples $(\diag D,r,\omega)$, where 
\begin{itemize}
    \item $\diag{D}$ is a discrete LC instance with variable set $X$, \, 
    \item $r=(r_{\bar x})_{\bar{x}\in X^k}$ is a solution to $\minion R\circ \kthpower{\diag D}$, 

    \item $\omega=(\omega_{\bar{y}})_{\bar{y}\in X^{k-1}}$ is a tuple where $\omega_{\bar{y}}\in 
    \minion R^{(A\times \diag{D}_{y_1}\times\cdots\times \diag{D}_{y_{k-1}})}$
    and such that there exist
    $z\in X$ and a map $\beta\colon \diag D_z\to A$ satisfying 
    $$
    \omega_{\bar{y}} = 
    \minion R^{(\beta\times\id\times\dots\times\id)}(r_{z y_1\dots y_{k-1}})
    $$ 
    for all $\bar{y}\in X^{k-1}$.
\end{itemize}
    For a mapping $\alpha\colon A \to B$, we define  $\kthlevel{\minion R}^{(\alpha)}(\diag D,r,\omega) =(\diag D, r,\omega')$, where
    $$\omega'_{\bar{y}} :=
    \minion R^{(\alpha\times\id\times\dots\times\id)}(\omega_{\bar y}).$$ 
\end{defin}

Observe that the $k$-th level of a minion is indeed a correctly defined minion. 
With the definitions in place, the proof of our first main result becomes straightforward. It is presented in Section \ref{sec:ProofOfLevels}.

\begin{restatable}{thm}{THMLevelVsPower} \label{thm:LevelVsPower}
Suppose $\mathcal R$ is a minion, $k\in\mathbb N$, and 
$\diag D$ is a connected label cover instance. 
Then the $k$-th power of $\diag D$ has a solution in $\minion R$ 
if and only if 
$\diag D$ has a solution in $\kthlevel{\minion R}$. Schematically:
$$
\minion R\circ\kthpower{\diag D}\text{ solvable}\iff \kthlevel{\minion R}\circ \diag D\text{ solvable}
$$
\end{restatable}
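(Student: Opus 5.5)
The plan is to prove the two implications separately. The forward direction is essentially a tautology once the fixed part is chosen correctly. The backward direction uses connectivity to extract a common fixed part, and then pushes it forward along the witnessing maps $\beta$.

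\emph{($\Rightarrow$)} Let $r=(r_{\bar x})_{\bar x\in X^k}$ be a solution of $\minion R\circ\kthpower{\diag D}$, and let $\diag D'$ be the discrete LC instance with the same variables and domains as $\diag D$.

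1. Every constraint of $\kthpower{\diag D'}$ is a constraint of $\kthpower{\diag D}$. Indeed, the type (1) constraints of $\kthpower{\diag D'}$ only use products of identities, and the type (2) constraints are the same in both. Hence $r$ is also a solution of $\minion R\circ\kthpower{\diag D'}$.

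2. To each variable $x$ assign $(\diag D',r,\omega^x)$, where $\omega^x_{\bar y}=r_{x\bar y}$. This is a legal element of $\kthlevel{\minion R}^{(\diag D_x)}$, witnessed by $z=x$ and $\beta=\id$.

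3. For a constraint $x=\alpha(y)$ of $\diag D$, the image $\kthlevel{\minion R}^{(\alpha)}(\diag D',r,\omega^y)$ has non-fixed part $\minion R^{(\alpha\times\id\times\dots\times\id)}(r_{y\bar y})$. This equals $r_{x\bar y}$, because $x\bar y=(\alpha\times\id\times\cdots\times\id)(y\bar y)$ is a type (1) constraint of $\kthpower{\diag D}$ (recall we adjoined the trivial constraints $y_i=\id(y_i)$).

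\emph{($\Leftarrow$)} Let $(\diag E_x,r_x,\omega_x)_{x\in X}$ be a $\kthlevel{\minion R}$-solution.

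1. Minor maps do not change the fixed part, and the shape of $\diag D$ is connected. Hence all $(\diag E_x,r_x)$ are equal to a single pair $(\diag E,r)$; write $Y$ for the variable set of $\diag E$.

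2. For every $x$, fix witnesses $z_x\in Y$ and $\beta_x\colon\diag E_{z_x}\to\diag D_x$ with $\omega_{x,\bar y}=\minion R^{(\beta_x\times\id\times\dots\times\id)}(r_{z_x\bar y})$.

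3. Define
$$s_{x_1\dots x_k}=\minion R^{(\beta_{x_1}\times\cdots\times\beta_{x_k})}(r_{z_{x_1}\dots z_{x_k}}),$$
and verify that $s$ solves $\minion R\circ\kthpower{\diag D}$.

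4. Type (2) constraints follow from the identity $\sigma^*\circ(\beta_{x_1}\times\cdots\times\beta_{x_k})=(\beta_{x_{\sigma(1)}}\times\cdots\times\beta_{x_{\sigma(k)}})\circ\sigma^*$, functoriality of $\minion R$, and the fact that $r$ satisfies the type (2) constraints of $\kthpower{\diag E}$.

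5. Type (1) constraints are the main point. Given constraints $x_i=\alpha_i(y_i)$, I would change one coordinate at a time: it suffices to show
$$\minion R^{(\id\times\cdots\times\alpha_i\times\cdots\times\id)}(s_{\dots y_i\dots})=s_{\dots x_i\dots}$$
and then compose. By the type (2) symmetries just established, with a transposition moving coordinate $i$ to the front, one may assume $i=1$. Write
$$s_{y_1\bar w}=\minion R^{(\id\times\beta_{w_2}\times\cdots)}\bigl(\minion R^{(\beta_{y_1}\times\id\times\cdots\times\id)}(r_{z_{y_1}\bar z})\bigr)=\minion R^{(\id\times\beta_{w_2}\times\cdots)}(\omega_{y_1,\bar z}),$$
where $\bar z=z_{w_2}\dots z_{w_k}$.

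6. The constraint $x_1=\alpha_1(y_1)$ in the $\kthlevel{\minion R}$-solution says precisely that $\minion R^{(\alpha_1\times\id\times\cdots\times\id)}(\omega_{y_1,\bar z})=\omega_{x_1,\bar z}$. Applying $\minion R^{(\id\times\beta_{w_2}\times\cdots)}$ to both sides, and using that $\alpha_1\times\id$ commutes with $\id\times\beta$'s, gives the required equality.

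7. The trivial constraints $x=\id(x)$ hold automatically, so $s$ is well defined for the fixed choice of witnesses.

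The step I expect to need the most care is the one-coordinate-at-a-time argument in steps 5 and 6. The issue is that the witnesses $(z_x,\beta_x)$ for neighbouring variables are unrelated. The only link between them is the $\omega$-equation, which controls the first coordinate only, and only against arbitrary tuples $\bar z\in Y^{k-1}$. That is why the reduction to the first coordinate via type (2) constraints, and the factorisation of product maps into commuting single-coordinate maps, must be checked carefully with functoriality.
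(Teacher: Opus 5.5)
Your proposal is correct and follows the paper's proof essentially step for step. In the forward direction you use the same assignment $\omega^x_{\bar y}=r_{x\bar y}$ with the discrete part as the fixed component; in the backward direction you use the same definition $s_{\bar x}=\minion R^{(\beta_{\bar x})}(r_{z_{\bar x}})$, checked via the identity $s_{y\bar w}=\minion R^{(\id\times\beta_{\bar w})}(\omega_{y,\bar z})$ and the commutation $\sigma^*\circ\beta_{\bar x}=\beta_{\bar x\circ\sigma}\circ\sigma^*$. You also make explicit several details the paper leaves implicit: that $r$ solves the power of the discrete part, the witness $z=x$, $\beta=\id$, and the reduction of general type (1) constraints to first-coordinate ones via type (2) symmetries, which the paper takes from its earlier remark on redundant constraints.
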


We remark that the connectedness assumption in the theorem is not necessary when $\minion R$ is one of the minions describing the four basic relaxations. However, when we apply the theorem to characterize the $k$-consistency reduction (Theorem~\ref{thm:CharDatalogReductions}), we need $\minion R$ to be an arbitrary minion.

\section{Vector relaxation} \label{sec:vectors}

\subsection{Vector minion}

\renewcommand{\arraystretch}{1.3}
\setlength{\tabcolsep}{0pt}

In order to define the vector minion $\VkZp$ we first fix some notation.

We fix a prime number $p$. All computations are performed modulo $p$ -- in the field $\mathbb{Z}_p$. The weight of a vector in $\mathbb{Z}_p^N$ is denoted $\weight$ and the componentwise product of vectors by $\cm$:
\begin{align*}
\weight(v_1,v_2, \cdots, v_N) &= v_1 + v_2 + \cdots + v_N \\
(u_1, u_2, \cdots, u_N) \cm (v_1, v_2, \cdots, v_N) &= (u_1v_1, u_2v_2, \cdots, u_Nv_N) 
\end{align*}

\begin{defin} \label{def:VectorMinion}
The minion $\VkZp$ is defined as follows.
For a finite set $D$, let $\VkZp^{(D)}$ be the set of triples $(V, \vc{i}, \vc{v})$, where
$V$ is a subspace of a vector space $\mathbb Z_{p}^{N}$ (for some $N \in \mathbb{N}$), $\vc{i} \in V$, and  $\vc{v}\colon D \to V$,
satisfying the following conditions:
\begin{itemize}
    \item[(i)] $\vc{i} = (1,1,\dots,1)$ and $\weight(\vc{i})=1$.
    \item[(s)] $\sum_{a\in D} \vc{v}(a) = \vc{i}$.
    \item[(o)] $\weight(\vc{v}(a) \cm \vc{v}(b) \cm \vc{u}_1 \cm \vc{u}_2 \cm \cdots \cm \vc{u}_{k-2}) = 0$ for any $a,b \in D$ with $a \neq b$ and any $\vc{u}_1,\dots,\vc{u}_{k-2} \in V$.
    \item[(p)] $\weight(\vc{v}(a) \cm \vc{v}(a) \cm \vc{u}_1 \cm \vc{u}_2 \cm \cdots \cm \vc{u}_{k-2}) = \weight(\vc{v}(a) \cm \vc{u}_1 \cm \vc{u}_2 \cm \cdots \cm \vc{u}_{k-2})$ for any $a \in D$ and any $\vc{u}_1,\dots,\vc{u}_{k-2} \in V$.
\end{itemize}
For a function $\alpha\colon D \to E$, define $\VkZp^{(\alpha)} (V, \vc{i}, \vc{v}) = (V, \vc{i}, \summing{\alpha}{\vc{v}})$, where 
$$(\summing{\alpha}{\vc{v}})(b) = \sum_{a \in \alpha^{-1}(b)} \vc{v}(a).$$
\end{defin}

Note that, having $\mathbf i\in V$, 
conditions (o) and (p) imply 
$\weight(a\cm b) =0$ for all $a,b\in D$, $a\neq b$,
and $\weight(a\cm a) = \weight(a)$ for all $a\in D$.
Before showing that $\VkZp$  is a minion, observe that for $k=2$ the fixed component $(V,\vc{i})$ can be omitted
and condition (s) replaced by 
$\sum_{a\in D} \weight(\vc{v}(a)) = 1$. Also note that condition (p) becomes vacuous when working over $\mathbb{Z}_2$.

\begin{lem}
    $\VkZp$ is a correctly defined minion.
\end{lem}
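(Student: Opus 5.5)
We have to check two things: each minor map $\VkZp^{(\alpha)}$ sends $\VkZp^{(D)}$ into $\VkZp^{(E)}$, and the assignment $\alpha\mapsto\VkZp^{(\alpha)}$ is functorial. Functoriality is formal. The fixed part $(V,\vc i)$ is never changed, and on the third component we have $\summing{\id_D}{\vc v}=\vc v$. We also have $\summing{\alpha}{(\summing{\beta}{\vc v})}=\summing{(\alpha\circ\beta)}{\vc v}$, because $(\alpha\circ\beta)^{-1}(c)$ is the disjoint union of the sets $\beta^{-1}(b)$ over $b\in\alpha^{-1}(c)$. So the real content is well-definedness.

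Fix $(V,\vc i,\vc v)\in\VkZp^{(D)}$ and $\alpha\colon D\to E$, and put $\vc w=\summing{\alpha}{\vc v}$. Each $\vc w(b)$ is a sum of vectors in $V$, hence lies in $V$. Condition (i) involves only the fixed part and is inherited. Condition (s) holds because $\sum_{b\in E}\vc w(b)=\sum_{a\in D}\vc v(a)=\vc i$, the preimages partitioning $D$.

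For (o) and (p), fix $\vc u_1,\dots,\vc u_{k-2}\in V$ and consider the map $\Phi(\vc x,\vc y)=\weight(\vc x\cm\vc y\cm\vc u_1\cm\cdots\cm\vc u_{k-2})$. It is bilinear in $(\vc x,\vc y)$ over $\mathbb Z_p$, since $\cm$ is bilinear and $\weight$ is linear.

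\begin{itemize}
\item[(o)] Take $b\neq b'$ in $E$. Then $\Phi(\vc w(b),\vc w(b'))=\sum_{a\in\alpha^{-1}(b)}\sum_{a'\in\alpha^{-1}(b')}\Phi(\vc v(a),\vc v(a'))$. The two preimages are disjoint, so every pair has $a\neq a'$, and each term vanishes by (o) for $\vc v$.
\item[(p)] For a single $b$, expand $\Phi(\vc w(b),\vc w(b))$ into a double sum over $a,a'\in\alpha^{-1}(b)$. The off-diagonal terms vanish by (o) for $\vc v$. The diagonal terms give $\sum_a\Phi(\vc v(a),\vc v(a))$, which by (p) for $\vc v$ equals $\sum_a\weight(\vc v(a)\cm\vc u_1\cm\cdots\cm\vc u_{k-2})$. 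By linearity this is $\weight(\vc w(b)\cm\vc u_1\cm\cdots\cm\vc u_{k-2})$.
\end{itemize}

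These verifications cover $k\ge2$. For $k=1$ I will read the product $\vc u_1\cm\cdots\cm\vc u_{k-2}$ as having no factors, i.e.\ as $\vc i$. This reading needs care: (o) then requires $\weight(\vc v(a)\cm\vc v(b))=0$ for $a\neq b$, and (p) requires $\weight(\vc v(a)\cm\vc v(a))=\weight(\vc v(a))$, and the same bilinear argument applies.

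The only step needing any thought is (p), where the off-diagonal cross terms must be cancelled using (o). Everything else is bookkeeping with the partition of $D$ into fibres of $\alpha$.
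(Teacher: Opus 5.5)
Your proof is correct and follows the paper's argument essentially step for step. Functoriality is formal, (i) and (s) come from the fixed part and the partition of $D$ into fibres of $\alpha$, and (o) and (p) come from expanding the multilinear expression over the fibres and applying (o) and (p) for the original triple, with (o) killing the cross terms in (p), exactly as the paper does; your remarks that $\summing{\alpha}{\vc v}(b)\in V$ and about the degenerate case $k=1$ are harmless additions the paper leaves implicit.
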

\begin{proof}
    Since the functoriality of $\VkZp$ is clear, it suffices to observe that $(V,\vc{i},\summing{\alpha}{\vc{v}})$ satisfies the four properties. Property (i) is immediate. We verify (s) as follows:
    $$
    \sum_{b \in E} ( \summing{\alpha}{\vc{v}}  )(b) = \sum_{b \in E} \;\;\sum_{a \in \alpha^{-1}(b)} \vc{v}(a) = \sum_{a \in D} \vc{v}(a) = \vc{i}.
    $$
    To see (o), let $c,d$ be distinct elements of $E$. Since $\weight$ is linear, and the sets $\alpha^{-1}(c)$, $\alpha^{-1}(d)$ are disjoint, we obtain (o) from the same property for the original triple $(V,\vc{i},\vc{v})$, as follows:
    \begin{align*}
    \weight(\ &(\summing{\alpha}{\vc{v}} )(c) \  \cm  \ (\summing{\alpha}{\vc{v}}) (d) \ \cm \vc{u}_1 \cm \cdots \vc{u}_{k-2}) \\
    &= 
    \weight\left( \left(\sum_{a \in \alpha^{-1}(c)} \vc{v}(a) \right) 
        \cm 
        \left(\sum_{b \in \alpha^{-1}(d)} \vc{v}(b) \right) 
        \cm 
        \vc{u}_1 \cm \cdots \vc{u}_{k-2}\right) \\
    &=
    \sum_{a \in \alpha^{-1}(c)} \;\sum_{b \in \alpha^{-1}(d)} 
      \weight( \vc{v}(a) \cm \vc{v}(b) \cm \vc{u}_1 \cm \vc{u}_2 \cm \cdots \cm \vc{u}_{k-2}) 
    = 0.
    \end{align*}
    Finally, (p) follows by linearity of $\weight$ and properties (o) and (p) for the original triple:
    \begin{align*}
    \weight(\ &(\summing{\alpha}{\vc{v}} )(c) \  \cm  \ (\summing{\alpha}{\vc{v}}) (c) \ \cm \vc{u}_1 \cm \cdots \cm \vc{u}_{k-2}) \\
    &= 
    \weight\left( \left(\sum_{a \in \alpha^{-1}(c)} \vc{v}(a) \right) 
        \cm 
        \left(\sum_{a \in \alpha^{-1}(c)} \vc{v}(a) \right) 
        \cm 
        \vc{u}_1 \cm \cdots \cm \vc{u}_{k-2}\right) \\
        &= 
    \sum_{a \in \alpha^{-1}(c)}\weight(\vc{v}(a)  
        \cm 
        \vc{v}(a)
        \cm 
        \vc{u}_1 \cm \cdots \cm \vc{u}_{k-2}) \\
        &= 
    \sum_{a \in \alpha^{-1}(c)}\weight(\vc{v}(a)  
        \cm 
        \vc{u}_1 \cm \cdots \cm \vc{u}_{k-2}) \\        
        &= 
    \weight\left(\left(\sum_{a \in \alpha^{-1}(c)} \vc{v}(a) \right) 
        \cm 
        \vc{u}_1 \cm \cdots \cm \vc{u}_{k-2}\right) = \\
        &= \weight(\ (\summing{\alpha}{\vc{v}} )(c) \cm \vc{u}_1 \cm \cdots \cm \vc{u}_{k-2}).
    \end{align*}
\end{proof}

The following proposition provides the $k$-dimensional analogue of the statement from the introduction  that ``every matrix is a Gram matrix''. We need the following concept. A function $t$ from a product $M \times M \times \dots$ is called \emph{totally symmetric} if the value $t(m_1, m_2, \dots)$ depends solely on the set $\{m_1,m_2, \dots\}$.

\begin{prop} \label{prop:Gram}
Let $\spacW$ be a vector space over $\mathbb{Z}_p$, and  let
$\formG \colon \spacW \times \spacW \times \dots \times \spacW \to \mathbb{Z}_p$ be a $k$-linear form. Suppose that $M$ is a basis of $\spacW$ such that the restriction of $\formG$ to $M$ is totally symmetric.
Then, for any sufficiently large $N \in \mathbb{N}$, there exists a linear map $ \ren \colon \spacW \to \mathbb{Z}_p^N$ such that for all $w_1, w_2, \dots, w_k \in W$,
\begin{equation} \label{eq:gram}
\formG(w_1,w_2, \cdots, w_k) = \weight( \ren(w_1) \cm \ren(w_2) \cm \cdots \cm \ren(w_k)),
\end{equation}
and, moreover, 
for any $l \in \mathbb{N}$, the mapping $M \times M \times \cdots \times M \to \mathbb{Z}_p^N$ defined by $(m_1, m_2, \ldots, m_l) \mapsto \ren(m_1) \cm \ren(m_2) \cm \cdots \cm \ren(m_l)$ is totally symmetric.
\end{prop}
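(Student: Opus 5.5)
The plan is to define $\ren$ on the basis $M$ by sending each basis vector to a $0/1$ vector, and then extend it linearly. Concretely, for each $m \in M$ I choose a subset $S_m \subseteq [N]$ and set $\ren(m) = \chi_{S_m}$, the indicator vector of $S_m$. (I assume $\spacW$ is finite-dimensional, so $M$ is finite, as in all applications.)

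The "moreover" part is then automatic. For $m_1,\dots,m_l \in M$, the product $\ren(m_1) \cm \cdots \cm \ren(m_l)$ is the indicator vector of $S_{m_1} \cap \cdots \cap S_{m_l}$, since $\chi_S \cm \chi_S = \chi_S$. This intersection depends only on the set $\{m_1,\dots,m_l\}$.

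Next, both sides of (\ref{eq:gram}) are $k$-linear in $(w_1,\dots,w_k)$: the right-hand side because $\ren$ is linear, $\cm$ is bilinear and $\weight$ is linear. So it suffices to check (\ref{eq:gram}) on tuples of basis elements. By total symmetry of $\formG$ on $M$, $\formG(m_1,\dots,m_k)$ depends only on the set $T = \{m_1,\dots,m_k\}$, which has size between $1$ and $k$; write it $\formG(T)$. The right-hand side equals $|\bigcap_{m \in T} S_m| \bmod p$. So the whole task reduces to a combinatorial design problem: find finite sets $S_m$ with
$$\Bigl|\bigcap_{m\in T} S_m\Bigr| \equiv \formG(T) \pmod p \quad\text{for every } T \subseteq M,\ 1 \le |T| \le k.$$

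To solve this I would use private coordinates with Möbius-type inversion. For each $T \subseteq M$ with $1\le |T|\le k$, I pick a number $c_T \in \{0,\dots,p-1\}$. I then reserve $c_T$ fresh coordinates, each of which is placed in $S_m$ exactly for $m \in T$. With this choice,
$$\Bigl|\bigcap_{m\in T} S_m\Bigr| = \sum_{T \subseteq U,\ |U|\le k} c_U .$$
This system is triangular with respect to inclusion. I determine the $c_T$ by downward induction on $|T|$: first $c_T \equiv \formG(T)$ for $|T|=k$, and in general
$$c_T \equiv \formG(T) - \sum_{U \supsetneq T,\ |U| \le k} c_U \pmod p.$$
The total number of coordinates used is $N_0 = \sum_T c_T$. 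For any $N \ge N_0$, I pad with $N - N_0$ coordinates that lie in no $S_m$. These extra coordinates are zero in every $\ren(m)$, so they change neither the weights nor the total symmetry, which gives the statement for all sufficiently large $N$.

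The only step needing care is the reduction to basis tuples. One has to note that total symmetry lets $\formG$ on repeated arguments, e.g. $\formG(m,m,m')$, be expressed through the set $\{m,m'\}$. One also has to note that the idempotence $\chi_S \cm \chi_S = \chi_S$ makes the right-hand side depend on the same set. Once that matching is observed, the triangular solve above is routine, and I do not expect a genuine obstacle.
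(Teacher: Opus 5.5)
Your proposal is correct and is essentially the paper's own proof. Both define $\ren$ on $M$ by $0/1$ vectors, adding for each set $T\subseteq M$ with $|T|\le k$ a batch of coordinates equal to the indicator of $T$, with multiplicities fixed by treating supersets before subsets. The paper orders the sets $M_1, M_2,\dots$ so that $M_i\not\subseteq M_j$ for $i<j$ and adds coordinates one at a time, whereas you solve the triangular system for the $c_T$ by downward induction on $|T|$; your finite-dimensionality assumption is implicit in the paper's argument too and holds in its application.
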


\begin{proof}
For the first part, it suffices to define $\ren$ on the basis $M$.
Moreover, we will construct $\ren$ so that $\ren(M)$ contains only 0/1 vectors; 
the second part then follows automatically.

Let $M_1, M_2, M_3, \dots$ be an ordering of all at most $k$-element subsets of $M$, ordered so that $M_i \not\subseteq M_j$ whenever $i<j$. We prove by induction on $i \in \mathbb{N}$ that there exists a map $\ren\colon W \to \mathbb{Z}_p^{N_i}$ satisfying Equation~(\ref{eq:gram}) for all $w_1, w_2, \dots, w_k$ with $\{w_1, \dots, w_k\}=M_j$ for some $j \leq i$. 

Assume that $\ren$ satisfies the condition for $i-1$ (for $i=1$, take $N=0$).
Increase the dimension $N$ by one, and set this new coordinate of $\ren(w)$ to 1 if $w \in M_i$, and to 0 otherwise. Notice that  $\weight(\ren(w_1) \cm \ren(w_2) \cm \cdots \cm \ren(w_k))$ does not change when $\{w_1, w_2, \dots, w_k\} \not\subseteq M_i$ and 
increases by 1 when this set is exactly  $M_i$. By repeating this construction as many times as needed, we obtain the desired $\ren$. 

Finally, we can make $N$ larger by adding zeros to all vectors.
\end{proof}

We are now ready to prove the equivalence between the $k$-th level $\mathbb{Z}_p$ relaxation and the vector relaxation defined by the minion $\VkZp$. For simplicity, we state the theorem only for connected instances. This restriction is not essential, but removing it would require some additional technical work.
\begin{thm} \label{thm:VkZp}
Let $\diag D$ be a connected LC instance. Then $\diag D$ has a solution in $\VkZp$ if and only if the $k$-th saturated power of $\diag D$ has a solution in  $\mathcal{Z}_{p}$.
\end{thm}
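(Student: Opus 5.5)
The forward direction will be a direct "Gram tensor" construction. The backward direction will apply Proposition~\ref{prop:Gram} to a suitably chosen basis, and the fixed part $(V,\vc{i})$ will be arranged by hand.

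\emph{($\Rightarrow$)} Let $(V_x,\vc{i}_x,\vc{v}_x)_{x}$ be a $\VkZp$-solution of $\diag D$. Minor maps do not change the fixed part, and $\diag D$ is connected, so $(V_x,\vc{i}_x)=(V,\vc{i})$ for all $x$. I define
$$r_{x_1\dots x_k}(a_1\dots a_k)=\weight\bigl(\vc{v}_{x_1}(a_1)\cm\cdots\cm\vc{v}_{x_k}(a_k)\bigr).$$
This lies in $\mathcal Z_p^{(\diag D_{x_1}\times\cdots\times\diag D_{x_k})}$: by multilinearity and (s) the total sum is $\weight(\vc{i}\cm\cdots\cm\vc{i})=\weight(\vc{i})=1$, using (i). Type (1) constraints follow from multilinearity and the summation condition $\vc{v}_x=\summing{\alpha}{\vc{v}_y}$. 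For type (2) constraints, permutations are handled by commutativity of $\cm$. For a non-injective $\sigma$ I must check
$$r_{x_{\sigma(1)}\dots x_{\sigma(k)}}(a_{\sigma(1)}\dots)=\sum_{d\circ\sigma=a}r_{x_1\dots x_k}(d).$$
Grouping the coordinates outside the image of $\sigma$ and summing them with (s) produces factors $\vc{i}$, which can be dropped. Coordinates merged by $\sigma$ force equal entries within a variable. This is exactly what (o) provides (distinct values give weight $0$) and what (p), iterated, provides (repeated factors collapse). The padding factors $\vc{u}_j$ are either other $\vc{v}$'s, which lie in $V$, or $\vc{i}\in V$.

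\emph{($\Leftarrow$)} Let $r$ be a $\mathcal Z_p$-solution of $\kthpower{\diag D}$. For each variable $x$ I fix a distinguished value $a_x\in\diag D_x$. Let $W$ have basis
$$M=\{e\}\cup\{xa : a\neq a_x\},$$
and identify the point $xa_x$ with the vector $e-\sum_{a\neq a_x}xa$. I define a $k$-linear form $\formG$ on $M$ by
$$\formG(m_1,\dots,m_k)=r_{y_1\dots y_k}(b_1\dots b_k),$$
where the non-$e$ entries among the $m_i$ are listed as points $y_jb_j$, and any missing slots (coming from $e$) are filled by repeating one of the listed points. If all $m_i=e$, the value is $1$, which is consistent with $\sum_a r_x(a)=1$.

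The key step, and the one I expect to be the main obstacle, is to show that $\formG$ is well defined and totally symmetric on $M$. Concretely, $r_{\bar y}(\bar b)$ must depend only on the set of points $\{y_jb_j\}$, so that multiplicities and order are irrelevant. I will derive this from the type (2) constraints and the marginalization they induce.
\begin{itemize}
\item A merge constraint $\sigma^*$ expresses a tuple with repeated variables as a marginal of a tuple with fewer distinct variables plus free coordinates.
\item Type (1) constraints with $\alpha=\id$, together with the fact that marginals over a free coordinate agree for every choice of the extra variable, show that two tuples with the same underlying set of points give the same value.
\end{itemize}
The same computation also has to confirm that the value $\formG(e,\dots)$ agrees with the lower-arity marginal. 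I will then check that $\formG$, extended multilinearly, gives on the identified points $xa_x$ exactly $r$ again; this reduces to the fact that each $r_{\bar x}$ sums correctly over a coordinate. Along the way I obtain that $r_{xx\dots}(ab\dots)=0$ whenever $a\neq b$.

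With this in hand, Proposition~\ref{prop:Gram} gives a linear map $\ren\colon W\to\mathbb Z_p^N$ whose values on $M$ are $0/1$ vectors. I then restrict all vectors to the coordinates where $\ren(e)=1$, which I claim changes none of the weights below. To see this, note that $\ren(e)$ is idempotent under $\cm$, since it is a $0/1$ vector. Moreover, every product of $k$ basis images can be multiplied by $\ren(e)$ without changing its weight, by total symmetry together with $\formG(e,m_2,\dots)=\formG(m_2,m_2,\dots)$; this is the point to double-check in detail. After the restriction $\vc{i}:=\ren(e)$ is all-ones, and $\weight(\vc{i})=\formG(e,\dots,e)=1$.

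I set $\vc{v}_x(a)=\ren(xa)$ and $V=\operatorname{span}\ren(M)$. Then (s) holds literally, by the choice of $xa_x$ and the linearity of $\ren$. Conditions (o) and (p) are multilinear in the $\vc{u}_j$, so they reduce to basis elements. There they become the identities $r_{xx\bar y}(ab\bar c)=0$ for $a\neq b$ and the repetition-invariance of $r$, both established in the key step. Finally, the summation constraints $\vc{v}_x=\summing{\alpha}{\vc{v}_y}$ are needed exactly, not just up to weights. I would obtain them by adding to $M$ the relations "$xa=\sum_{\alpha(b)=a}yb$", i.e.\ by choosing $W$ as the quotient of the free space on points by these relations and on-the-nose identifications. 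The nontrivial check is that $r$ respects these relations; this follows from the type (1) constraints.
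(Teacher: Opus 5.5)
Your forward direction is the paper's argument, and you treat the marginalization from non-injective $\sigma$ more explicitly than the paper does. The backward direction, however, fails at the step where you make $\vc{i}$ all-ones by restricting to the support of $\ren(e)$.

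That restriction replaces the weight of $\ren(m_1)\cm\cdots\cm\ren(m_k)$ by $\weight(\ren(e)\cm\ren(m_1)\cm\cdots\cm\ren(m_k))$. Suppose $m_1,\dots,m_k$ are pairwise distinct elements of $M\setminus\{e\}$. Then this is a product of $k+1$ distinct basis images. The total symmetry in Proposition~\ref{prop:Gram} cannot turn it into a $k$-fold product, since no $k$-tuple has underlying set $\{e,m_1,\dots,m_k\}$. Your identity $\formG(e,m_2,\dots)=\formG(m_2,m_2,\dots)$ only covers tuples that already contain $e$ or a repetition. Concretely, in the construction proving Proposition~\ref{prop:Gram} each coordinate is the indicator of a set of at most $k$ basis elements. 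So no coordinate is $1$ on all of $e,m_1,\dots,m_k$, and the restricted weight is $0$. Every nonzero value $\mapf(m_1,\dots,m_k)$ on $k$ distinct points is therefore destroyed; for $k=2$ this hits any nonzero entry $\mapf(xa,yb)$ with $x\neq y$, as in Figure~\ref{fig:SecondLevelSolution}.

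The missing idea is the paper's: do not feed $e$ into Proposition~\ref{prop:Gram}. Apply it only to the span of $M\setminus\{e\}$, choosing $N\equiv 1\pmod p$, and then set $\ren(e):=(1,\dots,1)$ by hand. Equation~(\ref{eq:gram}) then holds on all of $W$:
\begin{itemize}
\item If some entry $m\neq e$ occurs, replace every $e$ by $m$ on both sides. On the vector side this is legitimate because $\ren(m)$ is a $0/1$ vector; on the form side it follows from $\formG(e,m,\dots)=\formG(m,m,\dots)$.
\item For the all-$e$ tuple, both sides equal $1$ since $N\equiv 1$.
\end{itemize}

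Two smaller repairs are also needed.
\begin{itemize}
\item Once you pass to a quotient to get the summation constraints exactly, $M$ is no longer a basis. You must re-select a basis containing $e$ inside the image of $P\cup\{e\}$, using total symmetry on all of $P\cup\{e\}$, which your key step does provide. The paper does this in one stroke by quotienting $\barP$ by the radical of $\formF$. That kills all LC relations and makes every $\sum_a xa$ the same element $m$, so no distinguished values are needed.
\item Condition (p) is quadratic in $\vc{v}(a)$, so it does not reduce to basis elements by multilinearity. You need Equation~(\ref{eq:gram}) at arbitrary points, which is available once it holds on all of $W$.
\end{itemize}
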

\begin{proof}
Fix a connected LC instance $\diag D$ with variable set $X$. Let $P$ be its set of points and, for a variable $x \in X$, let $P_x$ be the corresponding subset of $P$.
$$
P = \{ xa \ \mid x \in X, \ a \in \diag{D}_x \}, \quad
P_x = \{xa \ \mid a \in \diag{D}_x \}
$$

Before starting the proof, we introduce some notation.
Similarly as in Figure~\ref{fig:SecondLevelSolution}, we regard solutions to $\kthpower{\diag D}$ in $\VkZp$ as  $k$-dimensional arrays indexed by points:
$$
\mapf\colon P \times P \times \cdots \times P \to \mathbb{Z}_p
$$
That is, for an assignment $(f_{\bar{x}})_{\bar{x} \in X}$ for $\kthpower{\diag D}$ in $\VkZp$, we set 
$$
\mapf(x_1a_1, x_2a_2, \dots, x_ka_k) := f_{x_1x_2 \cdots x_k}(a_1a_2 \cdots a_k), $$
and vice versa.

Similarly, instead of collection of functions $(\vc{v}_x\colon \diag{D}_x \to \mathbb{Z}_p^N)_{x \in X}$, we will rather work with $\mapv\colon P \to \mathbb{Z}_p^N$ defined as 
$$
\mapv(xa) = \vc{v}_x(a).
$$

($\Rightarrow$)
Let $(V_x, \vc{i}_x, \vc{v}_x)_{x \in X}$ be a solution to $\diag D$ in $\VkZp$. Every constraint $y = \alpha(x)$ in $\diag D$ forces $V_x=V_y$ and $\vc{i}_x=\vc{i}_y$. Since $\diag D$ is connected, we thus have a single vector space $V \leq \mathbb{Z}_p^N$ and a single $\vc{i} \in V$ such that the solution is $(V,\vc{i},\vc{v}_x)_{x \in X}$.

We propose a solution to $\kthpower{\diag D}$ in $\mathcal{Z}_p$ as follows. 
$$
\mapf(p_1, \dots, p_k) = \weight(\mapv(p_1) \cm \mapv(p_2) \cm \cdots \cm \mapv(p_k))
$$
We first observe that for every $\bar{x} = x_1x_2\cdots x_k$, $f_{\bar{x}}$ is in indeed a member of $\mathcal{Z}_p^{(\kthpower{\diag D}_{\bar{x}})}$. 
\begin{equation} \label{eq:unit}
\begin{aligned}
\sum_{\bar{a} \in \diag{D}_{x_1} \times \diag{D}_{x_2} \times \cdots \diag{D}_{x_k}} f_{\bar{x}}(\bar{a}) 
&= 
\sum_{p_1 \in P_{x_1}}  \sum_{p_2 \in P_{x_1}} \cdots  \sum_{p_k \in P_{x_1}} \mapf(p_1, p_2, \ldots, p_k) \\
&=
\sum_{p_1 \in P_{x_1}}  \sum_{p_2 \in P_{x_1}} \cdots  \sum_{p_k \in P_{x_1}} \weight(\mapv(p_1) \cm \mapv(p_2) \cm \cdots \cm \mapv(p_k)) \\
&=
\weight\left( \sum_{p_1 \in P_{x_1}} \mapv(p_1) \cm \sum_{p_2 \in P_{x_1}} \mapv(p_2) \cm \cdots \cm \sum_{p_k \in P_{x_1}} \mapv(p_k)\right) \\
&=\weight(\vc{i} \cm \vc{i} \cm \cdots \cm \vc{i}) = \weight(\vc{i})=1
\end{aligned}
\end{equation}
It remains to verify that $\mapf$ satisfies the constraints; recall that it is enough to consider type (2) constraints and type (1) constraints of a special form $\alpha \times \id \times \cdots \times \id$. 

A constraint $x_1x_2\ldots x_k = \alpha \times \id \times \cdots \times \id (y_1y_2\ldots y_k)$, where $y_1 = \alpha(x_1)$ is a constraint in $\diag D$, is satisfied if and only if for all $p_i \in P_{x_i}$
\begin{equation} \label{eq:aidid}
\mapf(p_1, p_2, p_3, \dots, p_k) = \sum_{p' \in \mapalpha^{-1}(p_1)} \mapf(p',p_2,p_3, \dots, p_k).
\end{equation}
This is equivalent to the following.
$$
\weight(\mapv(p_1) \cm \mapv(p_2) \cm \cdots \cm \mapv(p_k)) =
\weight\left(\sum_{p' \in \mapalpha^{-1}(p_1)} \mapv(p') \cm \mapv(p_2) \cm \cdots \cm \mapv(p_k)\right)
$$
But we know that $\vc{v}$ satisfies the constraint $y_1 = \alpha(x_1)$, equivalently, 
$$
\forall p_1 \in P_{x_1}  
\quad 
\mapv(p_1) = \sum_{p' \in \mapalpha^{-1}(p_1)} \mapv(p'),
$$ 
and we are done in this case.

Finally, observe that the satisfaction of all constraints of type (2) in Definition~\ref{def:Spower} is equivalent to the conjunction of two claims: $\mapf$ is totally symmetric and $\mapf(p_1,p_2, \ldots, p_k)=0$ whenever two of the $p_i$ are different but belong to a single $P_x$.
The latter holds since $\vc{v}$ satisfies (o). For the former, consider $p_1, p_2, \dots, p_k$ such that $\{p_1, p_2, \dots, p_k\} = \{s_1, s_2, \dots, s_l\}$. Then, by repeated application of (i) and (p), we get
$$
\mapf(p_1, p_2, \ldots, p_k) 
= \weight(\vc{v}(p_1) \cm \vc{v}(p_2) \cm \cdots \cm \vc{v}(p_k))
= \weight(\vc{v}(s_1) \cm  \vc{v}(s_2) \cm \vc{v}(s_l) \cm \vc{i} \cm \vc{i} \cm \cdots \cm \vc{i}),
$$
from which the total symmetry follows.

($\Leftarrow$).
Let $\mapf$ be a solution to $\kthpower{\diag D}$ in $\mathcal{Z}_p$.

The rough idea for obtaining a solution to $\diag D$ in $\VkZp$ is as follows.
We apply Proposition~\ref{prop:Gram} to $\mapf$ regarded as a $k$-linear form, then define $V_x=V$ as the image of $\ren$ and $\mapv$ by $\mapv(p) := \ren(p)$  (also choosing $\vc{i} \in V$ naturally). Such an assignment would indeed satisfy some of the requirements, but not all -- in particular the constraints in $\mathcal{Z}_p \circ \diag D$ -- so we need to be a bit more careful. 

Let $\barP$ be a vector space with basis $P$ and extend $\mapf$ to a $k$-linear form $\formF$.
$$
\formF\colon \barP \times \barP \times \cdots \times \barP \to \mathbb{Z}_p, \quad
\formF = \mapf \text{ on $P^k \subseteq \barP^k$}
$$
Let $\spacQ \leq \barP$ be the radical of $\formF$, and let $\spacW$ and $\formG$ be the corresponding quotient space and quotient $k$-linear form, respectively:
$$
\spacQ = \{q \in \barP \mid \formF(q, \barP, \barP, \dots, \barP) = 0\}, \quad 
\spacW = \barP/\spacQ, \quad
\formG = \formF/\spacQ\colon \spacW \times \spacW \times \cdots \times \spacW \to \mathbb{Z}_p
$$

We now pause our construction to make several observations.
Since $\mapf$ satisfies constraints of the form $\bar{x} = \alpha(\bar y)$,
for every constraint $x_1 = \alpha(y_1)$, every point $p \in P_{y_1}$, and any points $p_2, p_3, \dots, p_k \in P$ we have~(\ref{eq:aidid}):
$$
\mapf(p, p_2, p_3, \dots, p_k) = \sum_{p' \in \mapalpha^{-1}(p)} \mapf(p',p_2,p_3, \dots, p_k).
$$
As $P$ generates $\barP$ we get that $p$ is equal to the sum of $p'$ modulo $Q$:
\begin{equation} \label{eq:AAcons}
p/Q = \left( \sum_{p' \in \mapalpha^{-1}(p)} p' \right) /Q
\end{equation}
In particular, by summing up over $p \in P_y$, we get that $\sum_{p \in P_y} p$ and
$\sum_{p' \in P_x} p'$ are equal modulo $Q$. Since $\diag D$ is a connected instance, we see that such sums are independent (still modulo $Q$) of the variable and we denote this unique element of $\spacW$ by $m$. 
\begin{equation} \label{eq:AAsum}
m = \left( \sum_{p \in P_x} p \right)/Q \quad \text{for every $x \in X$}
\end{equation}
  Using this equation, we also obtain $\formG(m,m, \dots,m) = 1$ by following the computations in Equation~(\ref{eq:unit}) backward and using $\sum_{\bar{a} \in \diag{D}_{xx \cdots x}} f_{xx \cdots x}(\bar{a}) = 1$.

Finally, since $\mapf$ satisfies the constraints $\bar{x} = \sigma^*(\bar y)$, we know that $\mapf$ is totally symmetric on $P$ and  has the ``orthogonality property'': for any $p_1, p_2, \dots, p_k$, we have $\mapf(p_1,p_2,\dots,p_k)=0$ whenever two of the $p_i$ are in the same $P_x$. Since $\formF$ agrees with $\mapf$ on $P$, it has the same properties, and so does $\formG$ modulo $Q$. 

The total symmetry of $\formG$ on $P/Q$ can be improved. Consider the expression
$$
\formG(m, m, \dots, m, p_1/Q, p_2/Q, \dots, p_l/Q),
$$
with $m$ appearing $k-l$ times.
Take $x$ so that $p_1 \in P_x$ and write $m$ as $m=\sum_{p \in P_x} p/Q$. Using $k$-linearity and orthogonality, we get
\begin{equation} \label{eq:replacem}
 \formG(m, m, \dots, m, p_1/Q, p_2/Q, \dots, p_l/Q)
=\formG(m, \dots, m, p_1/Q, p_1/Q, p_2/Q, \dots, p_l/Q).
\end{equation}
It follows that $\formG$ is totally symmetric on $P/Q \cup \{m\}$.

Returning to the construction, we  extend $m$ to a basis $M \subseteq P/Q \cup \{m\}$. This is possible as already $P/Q$ is a generating set of $\spacW$. 
We now apply Proposition~\ref{prop:Gram} 
to the subspace of $\spacW$ generated by $M\setminus \{m\}$ 
and obtain a linear map $\ren$ for some $N=1\pmod p$ with the properties stated there, including the additional one. 
We additionally put $\ren(m) = (1,1,\dots,1)$ 
and extend the linear map  $\ren$ to $\spacW$.
Equation (\ref{eq:replacem}) and $\formG(m,m, \dots,m) = 1$ 
imply that condition \ref{eq:gram} in Proposition~\ref{prop:Gram} holds 
for the extended linear map $\ren$.

Finally, we define an aspiring solution to $\diag D$ in $\VkZp$: $V$ is the image of $\ren$, $\mathbf{i} = \ren(m) = (1,\dots,1)$ and $\mapv(p) = \ren(p/\spacQ)$.
$$
V = \ren(W), \quad 
\vc{i} = \ren(m), \quad 
\mapv(p) = \ren(p/\spacQ)
$$

We need to verify the properties (i), (s), (o), and (p).
Property (i) follows from 
$\mathbf i = (1,1,\dots,1)$.
Property (s) follows from Equation (\ref{eq:AAsum}).

We verify properties (o) and (p) on generators $\vc{u}_i \in \ren(P/Q)$. We have
\begin{align*}
\weight(&\mapv(p) \cm \mapv(p') \cm \mapv(p_1 )\cm \mapv(p_2) \cm \cdots \cm \mapv(p_{k-2})) \\
&= \weight(\ren(p/Q), \ren(p'/Q), \ren(p_1/Q), \ren(p_2/Q), \dots, \ren(p_{k-2}/Q)) \\
&= \formG(p/Q, p'/Q, p_1/Q, p_2/Q, \dots, p_{k-2}/Q)
\end{align*}
If $p \neq p'$, then we get 0 by the already established property of $\formG$. If $p=p'$, then we can continue as follows:
\begin{align*}
&= \formG(m/Q, p/Q, p_1/Q, p_2/Q, \dots, p_{k-2}/Q) \\
&= \weight(\vc{i} \cm \mapv(p) \cm \mapv(p_1 )\cm \mapv(p_2) \cm \cdots \cm \mapv(p_{k-2})) \\
&= \weight(\mapv(p) \cm \mapv(p_1 )\cm \mapv(p_2) \cm \cdots \cm \mapv(p_{k-2})) 
\end{align*}

The compatibility of $\mapv$ with constraints follows from the definitions and \ref{eq:AAcons}. This concludes the proof.
\end{proof}

\subsection{Minion homomorphisms between vector minions and $\mathcal Z_{m}$}

In this subsection, we study minion homomorphisms from the vector minions $\VkZp$ to the minions $\mathcal{Z}_m$. 

For clarity of notation, we make a few convenient simplifications.
Throughout, we  work only with members of $\cdot^{(D)}$ where $D = [n] = \{1,2, \dots, n\}$. We write $\mathcal{Z}_m^{(n)}$ and $\VkZp^{(n)}$ instead of $\mathcal{Z}_m^{([n])}$ and $\VkZp^{([n])}$.  Elements of $\mathcal{Z}_m^{(n)}$ are written as tuples $(a_1,a_2, \dots, a_n) \in \mathbb{Z}_m^n$;  with membership in $\mathcal{Z}_m$ equivalent to  $a_1+a_2+ \dots +a_n=1$ in $\mathbb{Z}_m$. 
Members of $\VkZp$ are written as $(V,\vc{i},\vc{v}_1, \vc{v}_2, \dots, \vc{v}_n)$, where $\vc{i},\vc{v}_1, \vc{v}_2, \dots, \vc{v}_n \in \mathbb{Z}_p^N$. While the subscripts on the vectors $\vc{v}$ previously referred to LC variables, no LC variables appear in this subsection, so this notation should cause no confusion. 

Because several moduli arise in the calculations, we adopt the following convention:  the symbols $+, \cdot$ are implicitly understood in $\mathbb{Z}$; addition modulo $m$ is denoted by $+_m$; and $\weight_m$ denotes the integer sum of all coordinates modulo $m$.

The next lemma is a simple but crucial observation for the positive result in this section.

\begin{lem}\label{claim:secondDigitOnZp2}
There exists a polynomial $q(x,y)$ of degree at most $p$ over $\mathbb{Z}_p$ that is divisible by $xy$, and such that for all $a,b \in \{0,1, \dots, p-1\}$, 
$$
q(a,b) = \lfloor \frac{a+b}{p} \rfloor.
$$
\end{lem}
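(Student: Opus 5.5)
The natural candidate is the "carry" polynomial from the base-$p$ expansion of $(a+b)^p$. Over the integers,
$$(a+b)^p - a^p - b^p = \sum_{i=1}^{p-1}\binom{p}{i}a^ib^{p-i},$$
and every coefficient $\binom{p}{i}$ with $0<i<p$ is divisible by $p$. So I would define, as a polynomial with integer coefficients,
$$q(x,y) = \sum_{i=1}^{p-1}\frac{1}{p}\binom{p}{i}x^i y^{p-i},$$
and then reduce it modulo $p$. By construction each monomial contains both $x$ and $y$, so $q$ is divisible by $xy$; it is homogeneous of degree $p$.

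To show that $q$ computes the carry, I would use Fermat's little theorem strengthened mod $p^2$. The key fact is that for integers $a$, the residue $a^p \bmod p^2$ depends only on $a \bmod p$. This is the standard lemma: if $a\equiv a' \pmod p$, then $a^p\equiv a'^p \pmod{p^2}$, seen by expanding $(a'+tp)^p$. Moreover, $a^p\equiv a\pmod p$.

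Now fix $a,b\in\{0,\dots,p-1\}$. Write $a+b = s + cp$, where $s\in\{0,\dots,p-1\}$ and $c=\lfloor (a+b)/p\rfloor\in\{0,1\}$. Then $p\,q(a,b) = (a+b)^p - a^p - b^p$ as integers, and we need this to be congruent to $cp \pmod{p^2}$.

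This is where a naive argument fails: $(a+b)^p \equiv s^p \pmod{p^2}$, and $a^p$, $b^p$ are not simply $a$ and $b$ modulo $p^2$. So the Teichmüller lift $\tau(a) := a^p \bmod p^2$ (equivalently $a^{p^{2}}$ stabilises the lift) is not additive, and the carry of $\tau$ differs from the carry of ordinary digits. Hence $q$ as defined gives the carry for Teichmüller digits, not for the ordinary digits $0,\dots,p-1$. A direct check at $p=3$, $a=b=1$ gives $q=\frac{1}{3}(3+3)=2$, whereas $\lfloor 2/3\rfloor=0$. So this first guess is wrong and must be corrected.

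The correct plan is therefore to use the fact that the map $(a,b)\mapsto \lfloor (a+b)/p\rfloor$ is a function $\mathbb{Z}_p^2\to\mathbb{Z}_p$. Any such function is given by a unique polynomial of degree at most $p-1$ in each variable, obtained by Lagrange interpolation with the indicators $1-(x-a)^{p-1}$. The remaining task is to bound the total degree by $p$ and establish divisibility by $xy$.

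Divisibility is the easy part. The carry vanishes whenever $a=0$ or $b=0$. For the reduced interpolant $q$, the polynomial $q(0,y)$ has degree at most $p-1$ and vanishes on all of $\mathbb{Z}_p$, so it is zero; hence $x\mid q$, and symmetrically $y\mid q$, so $xy\mid q$.

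The degree bound is the main obstacle, and I would attack it via the coefficients. The coefficient of $x^iy^j$ in the reduced interpolant is, up to sign,
$$\sum_{a,b} c(a,b)\,a^{p-1-i}\,b^{p-1-j},$$
with the usual convention $0^0=1$. Since $c(a,b)=1$ exactly when $a+b\ge p$, this coefficient becomes
$$\sum_{a=1}^{p-1} a^{p-1-i}\sum_{b=p-a}^{p-1} b^{p-1-j}.$$
Writing the inner sum as a power-sum polynomial in $a$ of degree $p-j$ (Faulhaber's formula), the coefficient is a combination of the sums $\sum_a a^{e}$ with $e\le 2p-1-i-j$. These vanish mod $p$ unless $(p-1)\mid e$ and $e>0$. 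For $i+j>p$ we get $e<p-1$, so the coefficient is $0$. That would prove the total degree is at most $p$, and I expect the careful handling of Faulhaber modulo $p$ to be the delicate step.
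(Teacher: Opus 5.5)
Your final argument is correct, and you were right to discard $((x+y)^p-x^p-y^p)/p$; your check at $p=3$, $a=b=1$ is accurate. It does, however, take a different route from the paper.

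The paper keeps your first instinct of a closed-form carry polynomial but replaces powers by binomial coefficients. For $0\le n\le 2p-2$, Lucas' theorem gives $\lfloor n/p\rfloor\equiv\binom{n}{p}\pmod p$. Vandermonde gives $\binom{x+y}{p}=\sum_{j=0}^{p}\binom{x}{j}\binom{y}{p-j}$, and the $j=0$ and $j=p$ terms vanish for $0\le x,y<p$. Hence $q=\sum_{j=1}^{p-1}\binom{x}{j}\binom{y}{p-j}$. Its coefficients are $p$-integral, its total degree is $p$, and it is visibly divisible by $xy$. It also has degree at most $p-1$ in each variable, so it is exactly your reduced interpolant.

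What each route buys:
\begin{itemize}
\item The paper's route is a three-line computation that produces an explicit formula.
\item Your route is a generic method: interpolate, get $xy\mid q$ from vanishing on the axes, and kill the coefficients with $i+j>p$ by power-sum orthogonality. It would also work where no closed form is visible, and it shows that $q$ is unique among polynomials of individual degree $<p$.
\end{itemize}

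When you write it up, tighten three small points.
\begin{itemize}
\item Your coefficient formula $\sum_{a,b}c(a,b)a^{p-1-i}b^{p-1-j}$ holds only for $i,j\ge1$, since the constant term of $1-(x-a)^{p-1}$ is $1-a^{p-1}$. This suffices, because $i+j>p$ with $i,j\le p-1$ forces $i,j\ge 2$.
\item $\sum_{a=1}^{p-1}a^{0}=p-1\not\equiv 0$, so you need every exponent to satisfy $e\ge1$. This holds because the inner sum, as a polynomial in $a$, vanishes at $a=0$.
\item Faulhaber is $p$-integral here. Indeed $m=p-1-j\le p-3$, so $1/(m+1)$ and $B_k$ for $k\le m<p-1$ have no factor $p$ in the denominator (von Staudt--Clausen). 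Alternatively, avoid Bernoulli numbers and use $\sum_{b\le n}\binom{b}{l}=\binom{n+1}{l+1}$ with $l+1<p$.
\end{itemize}
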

\begin{proof}
It is straightforward to check that 
$\lfloor \frac{x+y}{p}\rfloor = \binom{x+y}{p}\pmod p$.
It remains to use Vandermonde's identity
$\binom{x+y}{p} = \sum_{j=0}^{p}\binom{x}{j}\binom y{p-j}\pmod p$
and notice that
$\binom{x}{p}=\binom{y}{p}=0$ for $x,y<p$.
\end{proof}

The following proposition, in the case $p=2$, is a consequence of the “solving the square’’ result from Section~\ref{sec:SolvingSquare}. It implies that systems of linear equations over $\mathbb{Z}_4$ can be solved by the second level of the $\mathbb{Z}_2$ relaxation. This was surprising to us, and even more so after discovering a remarkably simple rounding procedure: one merely counts the number of ones modulo $4$. We are grateful to Demian Banakh
for providing the first version of the proof for  arbitrarily large primes $p$.

\begin{prop} \label{prop:VkZpToZpSquared}
    There exists a minion homomorphism 
     $\VkZpParam{p}{p} \to \cat Z_{p^2}$.
\end{prop}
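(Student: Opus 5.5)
The homomorphism will be given by an explicit ``counting'' map. For $(V,\vc{i},\vc{v}_1,\dots,\vc{v}_n)\in\VkZpParam{p}{p}^{(n)}$ with $\vc{v}_j\in\mathbb{Z}_p^N$, lift every coordinate to $\{0,1,\dots,p-1\}\subseteq\mathbb{Z}$. For $\vc{v}\in\mathbb{Z}_p^N$ let $c(\vc{v})\in\mathbb{Z}_{p^2}$ be the integer sum of these lifted coordinates, reduced modulo $p^2$.

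Condition (i) only gives $N\equiv 1 \pmod p$, not modulo $p^2$, so $c$ alone will not satisfy the normalization. I therefore plan to use the corrected map
$$
\phi(\vc{v}) = c(\vc{v}) - (N-1)\,\widehat{\weight(\vc{v})} \pmod{p^2},
$$
where $\widehat{\cdot}$ is any integer lift. Since $p\mid N-1$, this is well defined modulo $p^2$, and the correction term is additive modulo $p^2$ because $\weight$ is linear modulo $p$. The candidate homomorphism sends $(V,\vc{i},\vc{v}_1,\dots,\vc{v}_n)$ to $(\phi(\vc{v}_1),\dots,\phi(\vc{v}_n))$. It depends only on $N$ and the vectors, and minor maps do not change $N$, $V$ or $\vc{i}$.

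The heart of the proof is an additivity statement. I will show that $c(\vc{a}+\vc{b})\equiv c(\vc{a})+c(\vc{b}) \pmod{p^2}$ whenever $\vc{a},\vc{b}$ are sums of disjoint families of the $\vc{v}_j$.
\begin{itemize}
\item Coordinatewise, the lift of $a+_p b$ equals $a+b-p\lfloor (a+b)/p\rfloor$. Hence $c(\vc{a}+\vc{b})=c(\vc{a})+c(\vc{b})-p\cdot\weight(q(\vc{a},\vc{b}))$ modulo $p^2$, where $q$ is the polynomial from Lemma~\ref{claim:secondDigitOnZp2} applied coordinatewise via $\cm$. The last weight matters only modulo $p$, so it can be computed in $\mathbb{Z}_p$.
\item Since $xy$ divides $q$ and $\deg q\le p$, $q(\vc{a},\vc{b})$ is a combination of monomials $\vc{a}^{\cm i}\cm\vc{b}^{\cm j}$ with $i,j\ge1$ and $i+j\le p$.
\item Each such monomial can be written as $\vc{a}\cm\vc{b}\cm\vc{u}_1\cm\cdots\cm\vc{u}_{p-2}$. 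Here the $\vc{u}$'s are $i-1$ copies of $\vc{a}$, $j-1$ copies of $\vc{b}$, and $p-i-j$ copies of $\vc{i}$, all lying in $V$.
\item By bilinearity and condition (o) with $k=p$, applied to $\vc{a}=\sum_{j\in S}\vc{v}_j$ and $\vc{b}=\sum_{j\in T}\vc{v}_j$ with $S\cap T=\emptyset$, the weight of each monomial vanishes.
\end{itemize}
Together with additivity of the correction term, this shows that $\phi$ is additive on such disjoint sums. Summing one element of a preimage at a time, $\phi$ commutes with the minor maps $\summing{\alpha}{\cdot}$.

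It remains to check that the image lies in $\cat Z_{p^2}^{(n)}$. Iterating additivity and using (s),
$$
\sum_j\phi(\vc{v}_j)=\phi(\vc{i})=c(\vc{i})-(N-1)\widehat{\weight(\vc{i})}=N-(N-1)=1 \pmod{p^2},
$$
since $\vc{i}=(1,\dots,1)$ and $\weight(\vc{i})=1$.

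The step I expect to need the most care is the carry computation: lifting $q$ to a polynomial identity valid modulo $p$ on coordinates in $\{0,\dots,p-1\}$, and padding each monomial with copies of $\vc{i}$ so that it has exactly $p$ factors, which is what (o) requires for $k=p$. Condition (p) should not be needed.
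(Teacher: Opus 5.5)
Your proof is correct and follows the paper's route. Your corrected count $c(\vc{v})-(N-1)\widehat{\weight(\vc{v})}$ is in fact the same map as the paper's $\weight_{p^2}(\vc{v})/\weight_{p^2}(\vc{i})$, since $N^{-1}\equiv 2-N \pmod{p^2}$ and $c(\vc{v})\equiv \weight(\vc{v}) \pmod p$. The key step is likewise the paper's carry identity via the polynomial $q$ of Lemma~\ref{claim:secondDigitOnZp2}, killed by condition (o), and your padding of each monomial with copies of $\vc{i}$ spells out a step the paper leaves implicit.
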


\begin{proof}
We define a minion homomorphism
 $\xi\colon \VkZpParam{p}{p} \to \cat Z_{p^{2}}$ by
$$
\xi^{(n)}(V,\vc{i},\vc{v}_1, \vc{v}_2, \cdots, \vc{v}_n) = 
\left(
\frac{\weight_{p^2}(\vc{v}_1)}{\weight_{p^2}(\vc{i})},
\frac{\weight_{p^2}(\vc{v}_2)}{\weight_{p^2}(\vc{i})},
\dots
\frac{\weight_{p^2}(\vc{v}_n)}{\weight_{p^2}(\vc{i})}
\right),
$$
where the division is in $\mathbb Z_{p^2}$. It is well-defined as 
$\weight_p(\vc{i}) = 1$.

We need to prove that $\xi$ commutes with minor maps. It is enough to consider, for each $n$, the function $\alpha\colon[n] \to [n-1]$ defined by $\alpha=(1,1,2,3, \dots, n-1)$, since every function between finite sets can be written as a composition of such $\alpha$ and injective functions, for which the commutation property is straightforward. 
For such $\alpha$ we have
\begin{align*}
\xi^{(n-1)} \circ \VkZp^{(\alpha)}(V,\vc{i},\vc{v}_1. \vc{v}_2. \dots, \vc{v}_n)
&=
\left(
\weight_{p^2}(\vc{v}_1 +_p \vc{v}_2),
\weight_{p^2}(\vc{v}_3),
\dots
\weight_{p^2}(\vc{v}_{n-1})
\right)/ \weight_{p^2}(\vc{i})
\quad \text{ and } 
\\
\mathcal{Z}_{p^2} \circ \xi^{(n)}(V,\vc{i},\vc{v}_1. \vc{v}_2. \dots, \vc{v}_n)
&=
\left(
\weight_{p^2}(\vc{v}_1)+_{p^2}
\weight_{p^2}(\vc{v}_2),
\weight_{p^2}(\vc{v}_3),
\dots
\weight_{p^2}(\vc{v}_n)
\right)/\weight_{p^2}(\vc{i}),
\end{align*}
so it is enough to verify that 
$\weight_{p^2}(\mathbf v_1 +_p \mathbf v_2) = \weight_{p^2}(\mathbf v_1) +_{p^2} 
\weight_{p^2}(\mathbf v_2)
$.

We take the polynomial $q$ from Lemma \ref{claim:secondDigitOnZp2}.
By interpreting multiplication and addition in the polynomial as $\cm$ and $+$ in $\mathbb{Z}_p^N$, we can also apply it to vectors. Using the properties in Definition~\ref{def:VectorMinion}, we have
\begin{align*}
    \weight_{p^2}(\mathbf v_1+_p\mathbf v_2) 
    &= 
    \weight_{p^2}(\mathbf v_1 + \mathbf v_2 - p\cdot q(\mathbf v_1,\mathbf v_2))   \\ 
    &= 
    \weight_{p^2}(\mathbf v_1) +_{p^2}
    \weight_{p^2}(\mathbf v_2) -_{p^2}
    p \cdot \weight(q(\mathbf v_1,\mathbf v_2)) \\
    &=\weight_{p^2}(\mathbf v_1) +_{p^2}
    \weight_{p^2}(\mathbf v_2).
\end{align*}
\end{proof}

\begin{prop} \label{prop:NotVtwoZtwoToZthree}
    There is no homomorphism $\mathcal V_{2}$-$\mathbb Z_{2}\to \cat Z_{8}$.
\end{prop}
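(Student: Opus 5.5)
The plan is to exhibit a single finite LC instance $\diag D$ that has a solution in $\mathcal V_2$-$\mathbb Z_2$ but none in $\cat Z_8$. Proposition~\ref{prop:HomosPreserveSolutions} then rules out a minion homomorphism $\mathcal V_2$-$\mathbb Z_2\to\cat Z_8$.

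The proof of Proposition~\ref{prop:VkZpToZpSquared} shows where such an instance should come from. Over $\mathbb{Z}$ we have
$$\weight_8(\vc v_1+_2\vc v_2)=\weight_8(\vc v_1)+\weight_8(\vc v_2)-2|\vc v_1\cm\vc v_2|,$$
where $|\vc v_1\cm \vc v_2|$ counts the common $1$-coordinates. For $k=2$, condition (o) only forces this count to be even, not divisible by $4$. So an overlap of size $2$ introduces an error of $4$ modulo $8$. The counterexample should turn this parity defect into a genuine inconsistency.

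Concretely, I would take $\diag D$ to be the LC instance $\lc{\inst I}$ of a small system $\inst I$ of linear equations over $\mathbb Z_8$. Each domain would be $\mathbb Z_8$ or a solution set of one equation, and the equations would be chosen so that the system is unsolvable even in $\cat Z_8$. For affine templates this just means the system is inconsistent over $\mathbb Z_8$, since the $\cat Z_8$-relaxation solves $\mathbb Z_8$-linear systems exactly. A natural first candidate is an ``odd cycle'' of equations such as $x_1+x_2=c_1,\ x_2+x_3=c_2,\dots$ together with a constraint pinning $4x$ or $2x$. Its only obstruction would sit in the $4\cdot\mathbb Z_8$ component, which is invisible to $\cat Z_4$ and hence consistent with Proposition~\ref{prop:VkZpToZpSquared}.

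Rather than checking the $\mathcal V_2$-$\mathbb Z_2$ side by hand, I would use Theorem~\ref{thm:VkZp}. It suffices to find a $\cat Z_2$-solution of $\secondpower{\diag D}$, that is, a symmetric $\mathbb Z_2$-matrix indexed by points as in Figure~\ref{fig:SecondLevelSolution} satisfying the block, diagonal and row-sum conditions. Equivalently, I would directly assign $0/1$ vectors to points so that vectors in one domain are pairwise orthogonal mod $2$ and satisfy the summation conditions, while some pairs overlap in exactly $2$ coordinates.

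The key steps, in order, are these. First, fix the gadget, most likely a system in few variables over $\mathbb Z_8$ whose inconsistency is only modulo $8$. Second, write down the vector assignment, with $\vc i$ of odd weight, typically weight $1$ or $N\equiv 1 \pmod 2$. Third, verify (s), (o) and the LC constraints by a finite computation. Fourth, prove $\cat Z_8$-unsolvability by summing the $\cat Z_8$ summation equations around the cycle to derive $4=0$ in $\mathbb Z_8$. I expect the main obstacle to be the first two steps: designing the gadget so that the orthogonality constraints of all domains can be met simultaneously while the overlap defect of $4$ cannot cancel. If hand construction is hard, I would search for a $\cat Z_2$-solution of $\secondpower{\diag D}$ by Gaussian elimination, possibly by computer, and then convert it to vectors via Proposition~\ref{prop:Gram}.
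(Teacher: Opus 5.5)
Your reduction is sound. A finite LC instance solvable in $\VkZpParam{2}{2}$ but not in $\mathcal{Z}_8$ rules out a homomorphism by Proposition~\ref{prop:HomosPreserveSolutions}. Your overlap heuristic (two vectors sharing exactly two ones cost $4$ in $\weight_8$) also correctly locates the obstruction. The gap is that the proposal stops where the proof begins. The instance and its $\VkZpParam{2}{2}$-solution are the entire content of the statement, you yourself flag them as the main obstacle, and you defer them to a search. Nothing in the plan produces them.

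Moreover, the gadget you suggest has the wrong shape. Two-variable equations $x_i+x_{i+1}=c_i$ give bijective LC constraints. Around an odd cycle they therefore force $\vc v_{x_1}(a)=\vc v_{x_1}(C-a)$ for a fixed $C$, and the fixed points of $a\mapsto C-a$ are exactly the values the cycle allows. Suppose a unary pin on a cycle variable excludes those fixed points. Then the surviving vectors come in pairs $\{a,C-a\}$ with $a\neq C-a$ and equal vectors. So $\sum_a\vc v_{x_1}(a)=0\neq\vc i$, and such instances are refuted already by the $\mathcal{Z}_2$ relaxation, let alone by $\VkZpParam{2}{2}$. In any case, restricting to $\mathrm{lc}$ of $\mathbb{Z}_8$-linear systems is unnecessary: any LC instance will do.

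The missing idea is to read the instance off from minor identities among a few explicit members of $\VkZpParam{2}{2}$. Membership is then checked by inspection, and the $\mathcal{Z}_8$ side is a short computation; this is what the paper does.

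\begin{itemize}
\item Take $V=\mathbb{Z}_2^3$ and $\vc i=111$. Set $\epsilon_0=(V,\vc i,100,010,001)$ and $\epsilon_1=(V,\vc i,100,011,011,011)$, and obtain $\epsilon_2,\epsilon_3$ by cyclically shifting vector coordinates. These satisfy (o) precisely because $011$ has even weight, which is your overlap-$2$ phenomenon.
\item For a homomorphism $\xi$, invariance of $\epsilon_1$ under permuting its last three entries gives $\xi(\epsilon_1)=(s,t,t,t)$.
\item The identity $\epsilon_0^{(122)}=\epsilon_1^{(1233)}=(100,011,000)$ gives $2t=0$ and $\xi(\epsilon_0)_1=s=1-3t$. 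Hence $t\in\{0,4\}$ and $\xi(\epsilon_0)_1\in\{1,5\}$.
\item By the same argument with $\epsilon_2,\epsilon_3$, every entry of $\xi(\epsilon_0)$ is $\equiv 1\pmod 4$. Their sum is then $\equiv 3\pmod 4$, so it cannot equal $1$, a contradiction.
\end{itemize}

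As an LC instance this has about ten variables. These are $e_0$ with domain $[3]$, $e_1,e_2,e_3$ with domain $[4]$, and one auxiliary variable per identity, constrained from both sides. You need neither Theorem~\ref{thm:VkZp} nor Proposition~\ref{prop:Gram}.
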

\begin{proof}
Assume that such a homomorphism $\xi$ exists. Let $V = \mathbb{Z}_2^3$.
We define the members $\epsilon_0 \in \VkZpParam22^{(3)}$, $\epsilon_1,\epsilon_2,\epsilon_3 \in \VkZpParam22^{(4)}$ by
\begin{align*}
\epsilon_0 & =(V, \vc{i}, 100, 010, 001), \\
\epsilon_1 & =(V, \vc{i}, 100, 011, 011, 011), \\
\epsilon_2 &=(V, \vc{i}, 010, 101, 101, 101),  \\
\epsilon_3 &=(V, \vc{i}, 001, 110, 110, 110). 
\end{align*}
Denote the $j$-th coordinate of 
$\xi(\epsilon_{i})$ by $s_{i,j}\in\mathbb Z_{8}$.
First, notice that $\epsilon_{i}^{(1234)} = \epsilon_{i}^{(1342)}$ for each $i\in\{1,2,3\}$. 
Hence, $(\xi(\epsilon_i))^{(1234)} = \xi(\epsilon_i^{(1234)})=
\xi(\epsilon_i^{(1342)})=
(\xi(\epsilon_{i}))^{(1342)}$ as minor maps commute with $\xi$. 
Therefore, 
$(s_{i,1},s_{i,2},s_{i,3},s_{i,4}) = (s_{i,1},s_{i,3},s_{i,4},s_{i,2})$
and 
$s_{i,2} = s_{i,3} =s_{i,4}$.
Since $\epsilon_0^{(122)} = \epsilon_1^{(1233)}$, 
we obtain 
$s_{0,1} = s_{1,1}$, $s_{0,2}+s_{0,3}=s_{1,2}$, $s_{1,3}+s_{1,4} = 0 \mod 8$.
Similarly, considering the equations
$\epsilon_0^{(212)} = \epsilon_2^{(1233)}$
and 
$\epsilon_0^{(221)} = \epsilon_3^{(1233)}$
we derive 
$s_{i,3}+s_{i,4} = 0$ 
and 
$s_{0,i} = s_{i,1}$
for any $i\in\{1,2,3\}$.
Thus, $s_{i,2} = s_{i,3} =s_{i,4}\in\{0,4\}$.
Since 
$s_{i,1}+s_{i,2}+s_{i,3}+s_{i,4}=1$ for $i\in\{1,2,3\}$, 
we derive 
$s_{i,1}\in\{1,5\}$, and therefore $s_{0,i}\in\{1,5\}$.
This contradicts $s_{0,1}+s_{0,2}+s_{0,3}=1$.
\end{proof}

\begin{prop} \label{prop:NoZtwopToZpSquared}
    There is no homomorphism $\mathcal V_{2}$-$\mathbb Z_{p}\to \cat Z_{p^2}$
    for any prime $p>2$.
\end{prop}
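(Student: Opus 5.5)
The plan is to imitate the proof of Proposition~\ref{prop:NotVtwoZtwoToZthree}. We assume a minion homomorphism $\xi\colon \VkZpParam{2}{p}\to\cat Z_{p^2}$ exists, write down a few explicit members of $\VkZpParam{2}{p}$ related by minor maps, and derive contradictory congruences modulo $p^2$.

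For $k=2$ the membership conditions of Definition~\ref{def:VectorMinion} are simple. We need $\vc i=(1,\dots,1)$ with $N\equiv 1 \pmod p$, and $\sum_a\vc v(a)=\vc i$. Orthogonality reads $\weight(\vc v(a)\cm\vc v(b))=0$ for $a\ne b$. Condition (p) reads $\weight(\vc v(a)\cm\vc v(a))=\weight(\vc v(a))$, and it is not vacuous for $p>2$. Note also that the zero vector may be used freely as an entry.

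\textbf{Step 1: the members.} Take $N=p+1$ and $V=\mathbb Z_p^N$, and let $\vc e_1,\dots,\vc e_N$ be the standard basis. Put
$$
\epsilon_0=(V,\vc i,\vc e_1,\dots,\vc e_N).
$$
For each $j$ let $\vc w_j=\vc i-\vc e_j$ and put
$$
\epsilon_j=(V,\vc i,\vc e_j,\vc w_j,\dots,\vc w_j),
$$
with $\vc w_j$ repeated $m$ times, where $m\equiv 1\pmod p$ (for instance $m=2p+1$).
\begin{itemize}
\item[(s)] This holds because $m\vc w_j=\vc w_j$.
\item[(o)] We have $\weight(\vc w_j\cm\vc w_j)=N-1\equiv 0$ and $\vc e_j\cm\vc w_j=0$.
\item[(p)] We have $\weight(\vc w_j\cm\vc w_j)=0=\weight(\vc w_j)$ and $\weight(\vc e_j\cm \vc e_j)=1=\weight(\vc e_j)$.
\end{itemize}

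\textbf{Step 2: equations for $\xi$.} Any permutation of the $m$ copies fixes $\epsilon_j$. Hence the corresponding coordinates of $\xi(\epsilon_j)$ take a common value $s_j$, and the $\vc e_j$-coordinate is $1-ms_j$. We then compare two minors of $\epsilon_j$ into $(\vc e_j,\vc w_j,\vc 0)$:
\begin{itemize}
\item sending $p$ copies to the $\vc 0$ slot and $p+1$ copies to the $\vc w_j$ slot;
\item sending $2p$ copies to the $\vc 0$ slot and one copy to the $\vc w_j$ slot.
\end{itemize}
Both give the same member, so the images under $\xi$ agree. Comparing the $\vc 0$-coordinates gives $2ps_j=ps_j$, so $ps_j=0$. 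Comparing the $\vc w_j$-coordinates gives $(p+1)s_j=s_j$, which is consistent.

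Next, merging all $\vc e_i$ with $i\neq j$ in $\epsilon_0$ yields exactly $(V,\vc i,\vc e_j,\vc w_j)$, which is also the full merge of $\epsilon_j$. This identifies $c_j:=\xi(\epsilon_0)_j$ with $1-ms_j$. Since $ps_j=0$, we get $c_j\equiv 1-s_j$ with $s_j\in p\mathbb Z_{p^2}$. Summing over $j$ gives
$$
1=\sum_j c_j = N-\sum_j s_j \pmod{p^2}.
$$

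\textbf{Main obstacle.} With $N=p+1$ the identity above is not yet contradictory: it only forces $\sum_j s_j= p$, which is allowed. The $\mathbb Z_8$ proof worked because the analogous count reached a forbidden residue modulo $4$. So the hard step is to pin each $s_j$ to $0$ (or to a fixed value) modulo $p^2$. I would try to do this by adding more members that share $\vc e_j$ but split $\vc w_j$ in different ways. One option is to replace the $m$ copies by several distinct orthogonal vectors whose $\cm$-squares have weight $0$, such as $\vc w_j$ written as a sum of two vectors each of weight $\equiv 0$. Another is to use larger $N\equiv 1 \pmod p$ but $N\not\equiv 1 \pmod{p^2}$, for example $N=p+1$ with every $s_j$ shown to equal a common value $s$. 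With the $S_N$-symmetry argument producing a common $s$, the final identity becomes $1=N-Ns=p+1-s$ (using $ps=0$), so $s=p$. I would then look for one more member, for instance a copy of $\epsilon_j$ in which the $\vc w_j$ block carries a $p$-fold symmetry, that forces $s=0$. This would contradict $s=p$ and finish the proof.
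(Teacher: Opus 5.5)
Your Steps 1 and 2 are correct, but the proposal stops before any contradiction, and the members you chose cannot produce one. On $\epsilon_0$, the $\epsilon_j$, and all their minors, the rounding $(V,\vc i,\vc v_1,\dots,\vc v_n)\mapsto\bigl(\weight_{p^2}(\vc v_a)/\weight_{p^2}(\vc i)\bigr)_a$ from the proof of Proposition~\ref{prop:VkZpToZpSquared} is consistent:
\begin{itemize}
\item merging entries of $\epsilon_0$ creates no carries, because the supports are disjoint;
\item $r$ merged copies of $\vc w_j$ have $p$ entries equal to $r \bmod p$, and their integer sum is $\equiv pr \pmod{p^2}$.
\end{itemize}
This assignment gives $c_j=1-p$ and $s_j=p$. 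So every relation you can extract from these members is satisfied, including $\sum_j s_j=p$, and looking for a member that forces $s=0$ aims at the wrong target. The $S_N$-symmetry step is also unjustified as stated. Permutations of the coordinates of $\mathbb Z_p^N$ are not minor maps, so $\xi$ need not respect them. You also cannot symmetrize by averaging over $S_N$ in $\mathbb Z_{p^2}$, since $p$ divides $|S_N|$. Averaging over a cyclic group of order $p+1$ would work, but it is moot, because the consistent assignment above already has all $s_j$ equal.

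The missing idea is a member on which this rounding fails, that is, one where merges produce carries that do not cancel modulo $p^2$. The paper gets this by working in dimension $(p+1)(p+2)/2$.
\begin{itemize}
\item It takes disjoint $0/1$ vectors $\vc e_1,\dots,\vc e_{p+2}$ of weight $\frac{p+1}{2}\equiv\frac12$.
\item Then $\vc f_i=\vc i-\vc e_i$ form a second orthogonal decomposition of $\vc i$, and the rounding fails on it.
\item The two decompositions are linked through $\epsilon_{i,j}=(\vc f_{i,j},\dots,\vc f_{i,j},\vc e_{i,j})$, where $\vc f_{i,j}=\vc i-\vc e_i-\vc e_j$ is self-orthogonal.
\item This yields $a_i+b_i=1$ for every $i$, and summing gives $p+2=2$ in $\mathbb Z_{p^2}$.
\end{itemize}

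Your own setup can in fact be finished in dimension $p+1$ with one more member. Take $A\subseteq[p+1]$ with $|A|=\frac{p+1}{2}$, put $\vc e_A=\sum_{l\in A}\vc e_l$, and let $\tau=(\vc e_A,(2\vc i-\vc e_l)_{l\in A})$. This $\tau$ lies in $\VkZpParam{2}{p}$:
\begin{itemize}
\item its entries sum to $(p+1)\vc i=\vc i$;
\item $\weight(\vc e_A\cm(2\vc i-\vc e_l))=2|A|-1\equiv 0$;
\item $\weight((2\vc i-\vc e_l)\cm(2\vc i-\vc e_{l'}))=4p\equiv 0$ for $l\neq l'$;
\item $\weight((2\vc i-\vc e_l)\cm(2\vc i-\vc e_l))=\weight(2\vc i-\vc e_l)=1$.
\end{itemize}
Binary minors then pin down the coordinates of $\xi(\tau)$. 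Via $\epsilon_0$, the first coordinate equals $\sum_{l\in A}c_l$. Via $\epsilon_l$, merge $\vc e_l$ with two copies of $\vc w_l$; the remaining $2p-1$ copies sum to $\vc e_l-\vc i$. So the coordinate of $2\vc i-\vc e_l=\vc e_l+2\vc w_l$ equals $c_l+2s_l=2-c_l$, using $s_l=1-c_l$. Hence the coordinates of $\xi(\tau)$ sum to $2|A|=p+1\neq 1$ in $\mathbb Z_{p^2}$, which is the contradiction.
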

\begin{proof}
For $S\subseteq [p+2]$, let $\mathbf e_{S}'$ and $\mathbf f_{S}'$ be the $(p+2)$-vector over $\mathbb Z_{p}$ with 1s exactly at coordinates 
$S$ and $[p+2]\setminus S$, respectively, 
and 0s at all remaining coordinates.
Let $\mathbf e_{S}$ and $\mathbf f_{S}$ be the concatenations 
of $\frac{p+1}{2}$ copies of $\mathbf e_S'$ and $\mathbf f_S'$, respectively;
the repetition ensures 
$\sum_{i\in[p+2]}\weight(\mathbf e_{\{i\}}) =\sum_{i\in[p+2]}\weight(\mathbf f_{\{i\}}) = 1$.
To shorten, we write $\mathbf e_{i}$ and $\mathbf f_{i,j}$ instead of 
$\mathbf e_{\{i\}}$ and $\mathbf f_{\{i,j\}}$.

Set $\mu = (\mathbf e_1,\dots,\mathbf e_{p+2})$, 
$\eta = (\mathbf f_1,\dots,\mathbf f_{p+2})$, 
$\epsilon_{i,j} = (\underbrace{\mathbf f_{i,j},\dots,\mathbf f_{i,j}}_{p+1},\mathbf e_{i,j})$.
Assume that there exists a minion homomorphism $\xi\colon\mathcal V_{2}$-$\mathbb Z_{p}\to \cat Z_{p^2}$.
Let 
$\xi(\mu) = (a_1,\dots,a_{p+2})$,
$\xi(\eta) = (b_1,\dots,b_{p+2})$.
Since permutations of the first $p+1$ coordinates in $\epsilon_{i,j}$ does not change it, 
it follows that
$\xi(\epsilon_{i,j}) = (c_{i,j},\dots,c_{i,j},d_{i,j})$.
Since $\mathbf f_{i}+\mathbf f_{j} = \mathbf e_{i,j}+2\mathbf f_{i,j}$, 
we have $b_{i}+ b_{j} = 2 c_{i,j} + d_{i,j}$.
Since $p\cdot \mathbf f_{i,j} = \mathbf 0$, 
we have $p\cdot c_{i,j} = 0$.
Then $(p+1)c_{i,j} + d_{i,j} = 1$ implies 
$c_{i,j} + d_{i,j} = 1$.
Since 
$\mathbf e_{i}+\mathbf e_{j} = \mathbf e_{i,j}$, 
we have 
$a_{i} + a_{j} = d_{i,j}$.
Thus, we have 
$b_{i}+b_{j} = 2 c_{i,j} + d_{i,j} = 2-d_{i,j}=2-a_{i}-a_{j}$ for all $i,j$.
Hence, $(a_{i}+b_{i})+(a_{j}+b_{j}) = 2$ for all $i,j$. 
Therefore, $a_{i}+b_{i} = 1$ for every $i$.
Then $\sum_{i}(b_{i}+a_{i}) = (p+2)\neq 2 = \sum_{i}b_{i}+\sum_{i}a_{i}$. Contradiction.
\end{proof}

\section{Proof of Theorem~\ref{thm:LevelVsPower}} \label{sec:ProofOfLevels}

We restate the theorem of convenience.

\THMLevelVsPower*

\begin{proof}

Throughout the proof, we use the following notation. Tuples $x_1\dots x_{k-1}$ are denoted by $\bar{x}$ and single elements by $x$. The concatenation $xx_1\dots x_{k-1} $ is shortened to $x\bar{x}$ and the product of maps $\alpha_{x_1}\times\dots\times \alpha_{x_{k-1}}$ to $\alpha_{\bar{x}}$.
By $\bar{\id}$ we denote a product of identity maps $\id\times\dots\times\id$.

$(\Rightarrow)$
Let $X$ be the set of variables of $\diag D$ and let $r$ be a solution of $\minion R\circ \kthpower{D}$. 
By $\diag D'$ we denote the discrete part of $\diag D$, i.e. $\diag D$ without constraints.
We claim that the tuple 
$(\diag D', r,\omega_x)_{x\in X}$, where 
$(\omega_x)_{\bar{x}} := 
    r_{x\bar{x}}$ for $\bar{x}\in X^{k-1}$,
is a solution of $\kthlevel{\minion R}\circ \diag D$. Indeed, take any constraint $x'=\alpha(x)$ in $\diag D$, then, since $r$ is a solution of $\minion R\circ \kthpower{\diag D}$, we have
$$
\minion R^{(\alpha\times\bar{\id})}((\omega_x)_{\bar{x}})
= \minion R^{(\alpha\times\bar{\id})}(r_{x\bar{x}}) = r_{x'\bar{x}} =(\omega_{x'})_{\bar{x}},
$$
which implies
$
(\kthlevel{\minion R})^{(\alpha)}(\diag D,r,\omega_x) 
=(\diag D,r,\omega_{x'}).
$

$(\Leftarrow)$ Let $(\diag E,s,\omega_x)_{x\in X}$ be a solution of $\kthlevel{\minion R}\circ \diag D$ and let $Y$ be the set of variables of $\diag E$. 
Since the instance $\diag D$ is connected, $\diag E$ and $s$ do not depend on $x$.
For every $x\in X$ choose some $\phi(x)\in Y$ and a map $\beta_x\colon \diag E_{\phi(x)}\to \diag D_x$ such that 
    $$
    (\omega_x)_{\bar{y}} = 
    \minion R^{(\beta_x\times\bar{\id})}(s_{\phi(x)\bar{y}})
    $$
holds for all tuples $\bar{y}\in Y^{k-1}$.
Let us define an solution $r=(r_{\bar{x}})_{\bar{x}\in X^k}$ of $\minion R\circ\kthpower{\minion D}$ as follows.
$$
    r_{\bar{x}} :=
    \minion R^{(\beta_{\bar{x}})}
    (s_{\phi(\bar{x})})
    $$ 
In the verification that $r$ is indeed a solution of $\minion R\circ \kthpower{\diag D}$ we will repeatedly use the following equation.
\begin{align}
\label{eq:ClaimInProofOfLevel=Power}
\minion R^{(\id\times\beta_{\bar{x}})}  \big(\:
(\omega_{x})_{\phi(\bar{x})} \: \big)
&=
\minion R^{(\id\times \beta_{\bar{x}})}  \big(\:
\minion R^{(\beta_{x}\times \bar{\id})}
(s_{\phi(x)\phi(\bar{x})})  \:\big) \\
&=
\minion R^{(\beta_{x}\times\beta_{\bar{x}})} 
(s_{\phi(x)\phi(\bar{x})})   \nonumber  \\
&=
r_{x\bar{x}} \nonumber 
\end{align}
Now we verify that the two kinds of constraints in $\kthpower{\diag D}$ are satisfied.
\begin{itemize}
    \item[(1)] Let $\bar{x}\in X^{k-1}$ and $\alpha\colon \diag D_x\to \diag D_{x'}$ be a constraint in $\diag D$. Then
\begin{align*}
\minion R^{(\alpha\times\bar{\id})}
(r_{x\bar{x}}) 
&=
\minion R^{(\alpha\times\bar{\id})} \big( \:
\minion R^{(\id\times\beta_{\bar{x}})}  \big(\:
(\omega_x)_{\phi(\bar{x})} \: \big) \:\big)  \\
&=
\minion R^{(\id\times\beta_{\bar{x}})}  \big(\:
\minion R^{(\alpha\times \bar{\id})} \big( \:
 (\omega_x)_{\phi(\bar{x})} \: \big) \:\big)  \\
&=
\minion R^{(\id\times \beta_{\bar{x}})}\big( \:
(\omega_{x'})_{\phi(\bar{x})} \:\big)\\
&=
r_{x'\bar{x}}.
\end{align*}
The first and last equalities use Equation~(\ref{eq:ClaimInProofOfLevel=Power}), while the third is due to $(\diag E,s,\omega_x)$ being a solution.
    \item[(2)] let $\bar{x}\in X^k$ 
    and $\sigma\colon[k]\to[k]$. Then we have
\begin{align*}
\diag R^{(\sigma^*)}(r_{\bar{x}}) 
&= 
\diag R^{(\sigma^*)} \big(
\diag R^{(\beta_{\bar{x}})} (
s_{\phi(\bar{x})}
)\big)
=
\diag R^{(\sigma^*\circ \beta_{\bar{x}})}
(s_{\phi(\bar{x})}) \\
&=
\diag R^{(\beta_{\bar{x}\circ \sigma} \circ \sigma^*)}
(s_{\phi(\bar{x})})
=
\diag R^{(\beta_{\bar{x}\circ \sigma})} \big(
\diag R^{(\sigma^*)}(
s_{\phi(\bar{x})} 
)\big)\\
&=
\diag R^{(\beta_{\bar{x}\circ \sigma})}(
s_{\phi(\bar{x}\circ\sigma)} )
=
r_{\bar{x}\circ\sigma}
\end{align*}
where $\bar{x}\circ \sigma$ denotes the tuple $x_{\sigma(1)}\dots x_{\sigma(k)}$.
The first and the last equalities are again due to \ref{eq:ClaimInProofOfLevel=Power}, the fifth holds because $s$ is a solution of $\minion R\circ \kthpower{\diag E}$. The second and fourth use functoriality of $\diag R$ while for the third, one checks that $\sigma^*\circ\beta_{\bar{x}} =\beta_{\bar{x}\circ\sigma}\circ \sigma^*$. Indeed, unfolding the notation, both functions take tuples $e_1\dots e_k\in \diag E_{\phi(x_1)}\times\dots\times \diag E_{\phi(x_k)}$ as an input, apply the corresponding $\beta$ to each coordinate and then permute the coordinates according to $\sigma$.
\begin{align*}
    \sigma^*(\beta_{\bar{x}}( e_1\dots e_k )) &=
    \sigma^*\big( \beta_{x_1}(e_1)\dots \beta_{x_k}(e_k) \big)
    \\&=
    \beta_{x_{\sigma(1)}}(e_{\sigma(1)})\dots\beta_{x_{\sigma(k)}}(e_{\sigma(k)})
\\&= \beta_{\bar{x}\circ\sigma}(e_{\sigma(1)}\dots e_{\sigma(k)})
\\&= \beta_{\bar{x}\circ\sigma}(\sigma^*(e_1\dots e_k))
\end{align*}

\end{itemize}
\end{proof}

\section{Promise CSP} \label{sec:PCSP}
\renewcommand{\lc}{\mathrm{lc}_{\str A}}
\subsection{Preliminaries}

CSP instances and the corresponding LC instances are defined as in the introduction.

\begin{defin}
   A \emph{CSP instance} $\inst{I}$ consists of a set of variables $X$, a finite domain $\inst{I}_x$ for each $x \in X$, and a list of constraints of the form $x_1x_2\dots x_n \in R$, where $R \subseteq \inst{I}_{x_1} \times \inst{I}_{x_2} \times \cdots \times \inst{I}_{x_n}$. A \emph{solution} to $\inst{I}$ is a tuple $(d_x)_{x \in X}$ satisfying $d_{x_1}d_{x_2} \ldots d_{x_n} \in R$ for every constraint $x_1x_2\dots x_n \in R$.

   Given a CSP instance $\inst{I}$, we define an LC instance $\diag{D} = \lc{\inst{I}}$  as follows. The variables of $\diag{D}$ are the variables of $\inst{I}$, with $\diag D_x = \inst{I}_x$, together with one additional variable for each constraint of $\inst{I}$. For a constraint $x_1x_2\dots x_n \in R$, the corresponding new variable $y$ has domain $R$ and we impose in $\diag{D}$ the constraints $x_1 = \pi_1(y)$, $x_2 = \pi_2(y)$, \dots $x_n = \pi_n(y)$, where $\pi_i$ is the projection function $a_1a_2\dots a_n \mapsto a_i$.   
\end{defin}

Next, we define the finite-domain fixed-template PCSPs.
A \emph{signature} $\Sigma$ is a set of symbols, each $R \in \Sigma$ has an associated \emph{arity} $\ar(R) \in \mathbb{N}$. A \emph{structure} $\str{A}$ in  signature $\Sigma$ consists of a finite set $A$ and, for each symbol $R$ in the signature, a relation $R^{\str{A}} \subseteq A^{\ar(R)}$, called the \emph{interpretation} of $R$ in $\str{A}$. 
A \emph{homomorphism} $\str A \to \str B$ between structures of the same signature is a mapping $A \to B$ preserving the relations.

\begin{defin}
    A pair of structures $(\str A,\str B)$ in the same finite signatures is called a \emph{template} of a PCSP if there exists a homomorphism $\str A \to \str B$.

    Let $(\str A,\str B)$ be a PCSP template, the PCSP over $(\str A,\str B)$, denoted $\PCSP(\str A,\str B)$, is the following promise problem. An \emph{instance} $\inst{I}$ consists of a list of formal expressions of the form $x_1x_2 \dots x_n \in R$ (where $n=\ar(R)$). The task is to decide whether the instance is solvable in $\str A$ (that is, when the symbols are interpreted in $\str A$; we denote this CSP instance $\inst{I}^{\str A}$ and the corresponding label cover instance a $\lc(\inst I)$) or not even solvable in $\str B$. 
\end{defin}

\begin{defin}
    Let $\minion{R}$ be a minion and $(\str A,\str B)$ a PCSP template. We say that the relaxation $\minion{R}$ \emph{solves} $\PCSP(\str A,\str B)$ if for every instance $\inst I$ of the PCSP, if $\lc{(\inst I)}$ has a solution in $\minion{R}$, then $\inst I$ has a solution in $\str B$.
\end{defin}

\begin{defin}
    Given a PCSP template $(\str A, \str B)$, its \emph{polymorphism minion} $\Pol(\str A,\str B)=\minion M$ is defined as follows.
    For every finite set $D$ let $\minion M^{(D)}$ be the set of all homomorphism from the $D$-th cartesian power of $\str A$ to $\str B$. Such homomorphism are called polymorphisms from $\str A$ to $\str B$.
    $$
    \minion M^{(D)} = \hom(\str A^D,\str B)
    $$
    For every map $\alpha\colon D\to E$, let $\minion M^{(\alpha)}\colon\minion M^{(D)}\to \minion M^{(E)}$ be the map that takes a polymorphism $f$ and produces its $\alpha$-minor as follows.
    \begin{align*}
        \minion M^{(\alpha)}(f) \colon \str A^E\to \str B,\quad a\mapsto f( a\circ \alpha)
    \end{align*}
\end{defin}
We say that a minion $\minion M$ is \emph{locally finite} if all the sets $\minion M(D)$ are finite. For example, the minion $\mathcal{Z}_n$ and polymorphism minions $\Pol(\str A,\str B)$ are locally finite, while the minions describing AIP, BLP and SDP are not.

\begin{defin}
Given a number $S$ and a minion $\minion R$ we define $\RelaxLCn{\minion R}$ to be the following promise problem. Given a label cover instance $\diag D$, where all sets $\diag D_x$ are of size at most $S$, decide whether $\diag D$ is solvable or not even $\minion R\circ \diag D$ is solvable. 
\end{defin}

\begin{thm}[\cite{bbko2021}, Theorem~3.12]
\label{thm:fundamental}
Let $(\str A,\str B)$ be a template of a PCSP. For every sufficiently large number $S \in \mathbb{N}$, 
the problems $\PCSP(\str A,\str B)$ and $\RelaxLCn{\Pol(\str A,\str B)}$ are equivalent up to log-space reductions.
\end{thm}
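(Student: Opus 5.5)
The plan is to give two log-space reductions and to use the minion $\minion M=\Pol(\str A,\str B)$ as the bridge between them. The key tool is the free structure, i.e.\ the standard correspondence between $\minion M$-solutions of an LC instance and homomorphisms. Throughout, fix $S\ge |R^{\str A}|$ for every symbol $R$ and $S\ge |A|$.

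\emph{Reduction from $\PCSP(\str A,\str B)$ to $\RelaxLCn{\minion M}$.} Given an instance $\inst I$, output $\diag D=\lc(\inst I)$; all its domains have size at most $S$. Completeness is immediate, since a solution of $\inst I^{\str A}$ yields a solution of $\diag D$ together with its witnesses. For soundness, take an $\minion M$-solution $(f_x)_x$ of $\diag D$. Each $f_x$ is a polymorphism $\str A^{\diag D_x}\to\str B$. Evaluate $f_x$ at the ``generic'' tuple, i.e.\ the identity-type element of $A^{\diag D_x}$. For an original variable $x$ this is the tuple $(a)_{a\in A}$. For a constraint variable $y$ with domain $R^{\str A}$ we use the $n$ coordinate projections. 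We set $h(x)=f_x(\mathrm{id}_A)\in B$. For a constraint $x_1\dots x_n\in R$ with LC variable $y$ and $x_i=\pi_i(y)$, the minor condition $f_{x_i}=\minion M^{(\pi_i)}(f_y)$ gives $f_{x_i}(\mathrm{id}_A)=f_y(\mathrm{id}_A\circ\pi_i)=f_y(\pi_i)$. Applying the polymorphism $f_y$ coordinatewise to the $n$ tuples $\pi_1,\dots,\pi_n\in A^{R^{\str A}}$, whose columns lie in $R^{\str A}$, we conclude that $(h(x_1),\dots,h(x_n))\in R^{\str B}$. This is the standard ``$\minion M$-solution gives a $\str B$-solution'' argument.

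\emph{Reduction from $\RelaxLCn{\minion M}$ to $\PCSP(\str A,\str B)$.} Given $\diag D$ with domains of size at most $S$, we build a PCSP instance whose $\str A$-solutions encode solutions of $\diag D$ and whose $\str B$-solutions encode $\minion M$-solutions. This is the standard ``free structure / power'' construction. For each LC variable $x$ and each tuple $\mathbf a\in A^{\diag D_x}$ we introduce a PCSP variable $v_{x,\mathbf a}$. We add constraints making $\mathbf a\mapsto v_{x,\mathbf a}$ a homomorphism from $\str A^{\diag D_x}$: for each $R$ and each $\mathbf a^1,\dots,\mathbf a^n$ whose columns lie in $R^{\str A}$, we impose $v_{x,\mathbf a^1}\dots v_{x,\mathbf a^n}\in R$. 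For each constraint $x=\alpha(y)$ we identify $v_{x,\mathbf a}$ with $v_{y,\mathbf a\circ\alpha}$, done by merging variables or via equality if $\str A,\str B$ support it; otherwise we merge through a union-find, computable in log-space since the identifications form a graph and connectivity is in L by Reingold. The size is polynomial because $|A|^S$ is a constant.

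Completeness: a solution $(d_x)$ of $\diag D$ gives the $\str A$-assignment $v_{x,\mathbf a}\mapsto \mathbf a(d_x)$, which is consistent with the identifications since $(\mathbf a\circ\alpha)(d_y)=\mathbf a(\alpha(d_y))=\mathbf a(d_x)$. Soundness: a $\str B$-solution $g$ defines $f_x(\mathbf a)=g(v_{x,\mathbf a})$, which lies in $\hom(\str A^{\diag D_x},\str B)$. The identifications give exactly $f_x=\minion M^{(\alpha)}(f_y)$, so $(f_x)$ is an $\minion M$-solution.

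The main obstacle is making this second reduction genuinely log-space while handling the variable identifications without equality in the signature. I would first try to quotient by the identification graph using undirected connectivity, choosing as canonical representative the least vertex in each component, which is log-space computable via Reingold's algorithm. Since the statement is cited from \cite{bbko2021}, we can otherwise defer to their proof for these details.
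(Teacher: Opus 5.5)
Your proposal is correct and is essentially the argument the paper relies on. The paper only cites the theorem and names the two reductions: $\inst I\mapsto\mathrm{lc}_{\str A}(\inst I)$ in one direction, whose soundness you verify correctly by evaluating the polymorphisms at the identity and projection tuples. In the other direction it is the free-structure instance $\csp_{\str A}(\diag D)$, built from copies of $\str A^{\diag D_x}$ glued along the constraint maps, with the gluing made log-space via undirected connectivity. Your soundness argument for that second reduction is exactly the correspondence, remarked on in the paper, between $\str B$-solutions of $\csp_{\str A}(\diag D)$ and $\Pol(\str A,\str B)$-solutions of $\diag D$.
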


The reduction from $\PCSP(\str A,\str B)$ to $\RelaxLCn{\Pol(\str A,\str B)}$ is given by 
$
\inst I\mapsto \lc(\inst I)
$.
The other reduction, from $\RelaxLCn{\Pol(\str A,\str B)}$ to $\PCSP(\str A,\str B)$, is described in detail in \cite{bbko2021} and \cite{hadekJaklOprsal2026categories}, it takes a label cover instance $\diag D$ as an input and computes an instance of $\PCSP(\str A,\str B)$ which we denote as $\csp_{\str A}{(\diag D)}$. Curiously, this construction depends only on $\str A$ and not on $\str B$. Moreover, we remark that solutions of $\diag D$ in $\Pol(\str A,\str B)$ are in one to one correspondence with solutions of $\csp_{\str A}(\diag D)$ in $\str B$.

\subsection{Comparison with the partial solution-based levels}

While there are several (mostly equivalent) ways of defining levels of relaxations, perhaps the most prominent one uses partial solutions of a given instance~\cite{berkholz2017linear,babpcp.datalog,ChanHawTOFoolHierarchies,dalmauOprsal24reductions}. In our language it can be describes as follows.

Let $\inst I$ be an instance of $\PCSP(\str A,\str B)$ then we define a label cover instance $\partialkpower{\inst I}$ as follows. Each set $U$ of at most $k$ many variables of $\inst I$ becomes a variable of $\partialkpower{\inst I}$, where its domain $\partialkpower{\inst I}_U$ is the set of partial solutions of $I$ on $U$, i.e.\! assignments $f\in A^U$ satisfying all constraints $R(x_1,\dots x_n)$ with each $x_i\in U$. Moreover, for every inclusion $U\subseteq U'$, let $\partialkpower{\inst I}$ contain the restriction map $D_{U'}\to D_U, f\mapsto f|_U$ as a constraint.

To solve a CSP instance $\inst I$ using the $k$-th level of a relaxation $\minion R$ then simply 
means to solve the label cover instance $\partialkpower{\inst I}$ in $\minion R$.
By varying $\minion R$ one obtains descriptions of several well studies hierarchies, such as $k$-consistency when $\minion R$ describes AC and Sherali-Adams when $\minion R$ describes BLP.

The saturated power (Definition~\ref{def:Spower}) provides an a priori different way to "level up" relaxations: start again with a CSP instance $\inst I$, compute the $k$-th saturated power of the label cover instance corresponding to $\inst I$ and then solve it in $\minion R$.
In this section we show that these two methods coincide up to slight changes of $k$. In particular:

\begin{thm} \label{thm:VariableLevelsVsConstraintLevels}
    Let $\cat R$ be a relaxation that does not accept all instances and let $\inst I$ be an instance of $\PCSP(\str A,\str B)$.
    \begin{enumerate}
        \item If $\partialnpower{nk}{I}$ has a solution in $\minion R$, then so does $\kthpower{\lc (\inst I)}$, where $n$ is the maximal arity of the relations in $\str A$.
        \item If $\nthpower{k+1}{\lc(\inst I)}$ has a solution in $\minion R$, then so does $\partialnpower{k}{I}$.
    \end{enumerate}
\end{thm}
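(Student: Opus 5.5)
Both parts will be proved by exhibiting, in each direction, a map of LC instances that carries $\minion R$-solutions forward. Concretely, we would build a family of maps sending each variable of the target instance to a variable of the source instance, together with a domain map going backwards along it, so that every target constraint becomes a composite of source constraints (read through these domain maps). Functoriality of $\minion R$ then turns a solution of the source into a solution of the target: apply $\minion R^{(\gamma)}$ to the source value for each chosen domain map $\gamma$. The hypothesis that $\minion R$ does not accept all instances is used only to ensure that $\minion R^{(\emptyset)}=\emptyset$, i.e.\ that $\minion R$ has no element over the empty set. This rules out solutions on empty domains. It matters when a variable's set of partial solutions could be empty: the source solution then certifies that all relevant domains are nonempty.

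For part (1), assume a solution $(s_U)$ of $\partialnpower{nk}{I}$ in $\minion R$. Recall that the variables of $\lc(\inst I)$ are original variables $x$ and constraint variables $c$ of arity at most $n$. To a $k$-tuple $\bar z=z_1\dots z_k$ of such variables, associate the set $U(\bar z)$ of original variables occurring in it (the scope of each constraint variable together with each original variable); so $|U(\bar z)|\le nk$. There is a natural map $\gamma_{\bar z}$ from partial solutions on $U(\bar z)$ to $\diag D_{z_1}\times\dots\times\diag D_{z_k}$. It sends $f$ to the tuple of its values on the original variables and of its restrictions to the scopes of the constraints; these restrictions lie in the respective relations because $f$ is a partial solution. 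Set $r_{\bar z}:=\minion R^{(\gamma_{\bar z})}(s_{U(\bar z)})$. Type (2) constraints hold because $U(\bar z\circ\sigma)\subseteq U(\bar z)$ and $\sigma^*\circ\gamma_{\bar z}=\gamma_{\bar z\circ\sigma}\circ(\text{restriction})$. We then use the restriction constraint of $\partialnpower{nk}{I}$ together with functoriality. Type (1) constraints with $\pi_i\times\bar\id$ are handled the same way: the projection of a constraint tuple agrees with the value of $f$ on the corresponding variable, and $U$ only shrinks. We need to check these commuting squares of set maps; this is routine.

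For part (2), assume a solution $r$ of $\nthpower{k+1}{\lc(\inst I)}$. For a set $U$ of at most $k$ original variables, list it as $u_1\dots u_m$ and pad it by repetition to a $k$-tuple. The difficulty is that the domain $\diag D_{u_1}\times\dots\times\diag D_{u_k}=A^U$ is all assignments, whereas we need a value in $\minion R$ over the partial solutions only. This is the main obstacle, and it is where the extra coordinate comes in. For each constraint $c$ with scope inside $U$, the $(k+1)$-tuple $c\,u_1\dots u_k$ together with the type (1) projection maps shows the following: $r_{u_1\dots u_k}$ is the image of $r_{c\bar u}$ under a map whose image consists of tuples consistent with $c$. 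Here the value of $c$ (a tuple in the relation) determines, via the $(11)^*$-type merging constraints, the values on its scope coordinates. We would like to intersect these supports over all constraints simultaneously, but $\minion R$ elements have no supports. The idea we try first is to process constraints one at a time inside a single $(k+1)$-tuple. Let $c_1,\dots,c_t$ be the constraints inside $U$. We show by a diagonal/merge argument that $r_{\bar u}$ lies in the image of $\minion R^{(\iota)}$, where $\iota$ is the inclusion of the set of assignments consistent with $c_j$. Then we argue that the set of maps $\delta$ into partial solutions with $\minion R^{(\iota\delta)}$ hitting $r_{\bar u}$ is closed under these successive refinements. This is done using the extra coordinate in $X^{k+1}$ and the type (2) constraints, which identify the $c_j$-consistent parts of $r$ on overlapping tuples. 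Once each $r_{\bar u}$ is shown to be $\minion R^{(\iota_U)}(s_U)$ for a canonical choice $s_U$ over partial solutions on $U$, the restriction constraints follow from type (2) constraints (deleting and merging coordinates) together with functoriality. Nonemptiness of each domain $\partialkpower{\inst I}_U$ follows from $\minion R^{(\emptyset)}=\emptyset$.

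If the refinement via minor images proves too delicate for an arbitrary minion, the fallback is to redefine $s_U$ directly as $\minion R^{(\delta)}(r_{c\,\bar u})$ for a suitably chosen single witness. That is, we realize all constraints inside $U$ at once by one tuple of constraint variables of length at most $k+1$, and we may have to accept that part (2) is proved this way.
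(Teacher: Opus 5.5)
Part (1) follows the paper's argument: the sets of underlying variables, the evaluation maps from partial solutions, and commuting squares with restriction maps. It is fine. Part (2) has a genuine gap, exactly where you flag it.

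\textbf{The missing idea.} Your premise that ``$\minion R$ elements have no supports'' is false for minions without constants. Lemma~\ref{lem:dummies} (Trnkov\'a) says that $r\in\minion R^{(D)}$ factors through the inclusion of any subset containing all non-dummies of $r$. With this lemma, the constraints inside $U$ need not be handled simultaneously or by successive refinement. The paper sets $s_U:=\minion R^{(\psi^*)}(r_{u_1\dots u_k u_k})\in\minion R^{(A^U)}$, where $\psi$ is a right inverse of $u_1\dots u_ku_k\colon[k+1]\to U$. It shows that every $f\in A^U$ that is not a partial solution is a dummy of $s_U$, using only \emph{one} constraint $c$ violated by $f$, and then applies the lemma once. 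That one-constraint step fits into $k+1$ coordinates, but not as you describe it. The map from $c\,u_1\dots u_k$ that forgets $c$ has full image, so consistency with $c$ must come from the support of $r$, and this is forced by duplicating $c$:
\begin{itemize}
\item Take $t_1=u_1\dots c\dots u_kc$, with $c$ in slot $i_j$ and in the last slot. Then $r_{t_1}$ is fixed by the type (2) merge of those two slots, so its off-diagonal tuples are dummies.
\item The type (1) map with $\pi_j$ in slot $i_j$ leads to $u_1\dots u_kc$, and the merge $(1\dots kk)^*$ then leads to $u_1\dots u_ku_k$. Along the way every tuple violating $c$ becomes a dummy.
\end{itemize}
Alternatively, your refinement plan can be rescued, but only by a general fact about minions, not by the extra coordinate or by overlapping tuples. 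Suppose $r$ lies in the images from $S_1$ and from $S_2$, and $s_0\in S_1\cap S_2$. Let $h$ be the identity on $S_2$ and constant $s_0$ elsewhere. Then $\minion R^{(h)}$ fixes $r$ and $h$ maps $S_1$ into $S_1\cap S_2$. If instead $S_1\cap S_2=\emptyset$, every element of $D$ is a dummy and $r$ is a constant.

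\textbf{Role of the hypothesis, and the fallback.} You misidentify how the hypothesis is used. It is needed in the form ``$\minion R$ has no constants''; its consequence $\minion R^{(\emptyset)}=\emptyset$ is not enough. Here is a counterexample:
\begin{itemize}
\item Let $\mathcal F^{(\emptyset)}=\emptyset$ and $\mathcal F^{(D)}=\{*\}$ for $D\neq\emptyset$, which has constants.
\item Let $\str A=(\{0,1\},\neq)$, let $\inst I$ be a triangle, and take $k=3$.
\item All domains of $\mathrm{pow}_4(\mathrm{lc}_{\str A}(\inst I))$ are nonempty, so $*$ everywhere solves it.
\item The three vertices have no partial solution, so $\mathrm{lc}^3_{\str A}(\inst I)$ has no $\mathcal F$-solution.
\end{itemize}
The no-constants assumption enters precisely when supports coming from different violated constraints have empty intersection. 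Your fallback also fails outright. A set $U$ of $k$ variables can carry far more than $k+1$ constraints, e.g.\ all $\binom{k}{2}$ edges of a clique, so no single $(k+1)$-tuple realizes them all.
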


The assumption  on $\minion R$ is necessary in the theorem for a simple reason: the partial solution approach rejects instances when there is no partial solution on some small set of variables, while the saturated power cannot reject on its own. To prove item $2.$, we need the concept of \emph{dummies} and a standard fact about minions.

\begin{defin}
    Let $\minion R$ be a minion, $d\in D$ and $r\in\minion R^{(D)}$. We say that $d$ is \emph{dummy} in $r$ if there is a map $\alpha\colon D'\to D$ and $r'\in\minion R^{(D')}$ such that $d$ is not in the image of $\alpha$ and $r=\minion R^{(\alpha)}(r')$. A point $r\in\minion R^{(D)}$ is called a \emph{constant}, if all $d\in D$ are dummy in $r$.
\end{defin}

It is not hard to see that a relaxation $\minion R$  contains a constant if and only if all label cover instances $\diag D$ have a solution in $\minion R$. To illustrate these notions, consider the minion $\minion R=\mathcal{Z}_p$, then $r\in\minion R^{(D)}$ is a tuple $(r_d)_{d\in D}\in \mathbb{Z}_p^D$ and $d\in D$ is dummy $r$, precisely when $r_d=0$. The condition $\sum_{d\in D} r_d =1$ guarantees that there are no constants.

\begin{lem}[Essentially in \cite{trnkova1971}]
\label{lem:dummies}
    Assume that $\minion R$ does not contain a constant and let $D'\subseteq D$ be a subset that contains all non-dummy elements of some $r\in \minion R^{(D)}$. Then there is some $r'\in\minion R^{(D')}$ such that $r=\minion R^{(\iota)}(r')$, where $\iota$ is the inclusion map $D'\to D$. 
    In particular, if $\alpha\colon D\to E$ is a map and all $\alpha$-preimages of $e\in E$ are dummy in $r$, then $e$ is dummy in $\minion R^{(\alpha)}(r)$.
\end{lem}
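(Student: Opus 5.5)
The plan is to build $r'$ by collapsing the elements of $D\setminus D'$ one at a time, using the absence of constants to keep every collapse away from $D'$. It suffices to treat a single dummy element $d \in D\setminus D'$: produce $r_1 \in \minion R^{(D\setminus\{d\})}$ with $r = \minion R^{(\iota_1)}(r_1)$, where $\iota_1$ is the inclusion. Then iterate. The only thing to check is that the remaining elements of $D\setminus D'$ stay dummy in $r_1$. For this, if $e$ is dummy in $r$ via $r = \minion R^{(\gamma)}(t)$ with $e \notin \operatorname{im}\gamma$, I would rather run the whole argument with a different invariant: do the construction once, using all witnesses simultaneously. The "in particular" part then follows immediately. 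Apply the main statement with $D' = D \setminus \alpha^{-1}(e)$ to get $r = \minion R^{(\iota)}(r')$. This gives $\minion R^{(\alpha)}(r) = \minion R^{(\alpha\circ\iota)}(r')$, and $e \notin \operatorname{im}(\alpha\circ\iota)$.

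For the single-element step, take a witness $r = \minion R^{(\gamma)}(t)$ with $\gamma\colon C \to D$ and $d \notin \operatorname{im}\gamma$. So $\gamma$ factors as $\iota_1 \circ \gamma'$ with $\gamma'\colon C \to D\setminus\{d\}$, and we set $r_1 := \minion R^{(\gamma')}(t)$. Functoriality gives $\minion R^{(\iota_1)}(r_1) = r$ at once. One edge case needs attention: $D\setminus\{d\}$ might be empty. Then $C$ is empty, so $\minion R^{(\emptyset)}$ is nonempty. An element of $\minion R^{(\emptyset)}$ pushes forward to an element of any $\minion R^{(E)}$ in which every point is dummy, which is a constant. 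Since $\minion R$ has no constants, this is excluded. The same observation also covers the degenerate case $D' = \emptyset$.

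The real obstacle is iterating. After removing $d$, I need the other elements $e \in D\setminus D'$ to still be dummy in $r_1$, but $r_1$ came from $d$'s witness, not $e$'s. This is where Trnková's argument enters, and where the no-constant hypothesis is essential. Given witnesses $r=\minion R^{(\gamma_d)}(t_d)$ and $r = \minion R^{(\gamma_e)}(t_e)$, I would pick a retraction-type map. Concretely, choose $\rho\colon D \to D$ that fixes $D\setminus\{e\}$ and sends $e$ to some $c \neq e$. Then $\minion R^{(\rho)}(r) = \minion R^{(\rho\gamma_e)}(t_e) = \minion R^{(\gamma_e)}(t_e) = r$, because $\rho$ is the identity on $\operatorname{im}\gamma_e$. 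So $r$ is a fixed point of such idempotent "merge" maps. I would use these to show that $r$ lies in the image of the equaliser-like subset: combining the fixed-point identities for $d$ and $e$, if $c$ can be chosen so that $\rho$ also avoids $d$, then $r = \minion R^{(\rho)}(r) = \minion R^{(\rho\iota_1)}(r_1)$. This exhibits $e$ as dummy in $r$ via a map landing in $D\setminus\{d\}$.

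Choosing the merge target $c$ inside $D'$, or more generally among elements not yet removed, requires $D \setminus \{d,e\} \neq \emptyset$. When that fails, the only remaining case is $D=\{d,e\}$ with both elements dummy, i.e.\ $r$ is a constant, which is excluded by hypothesis. So the induction proceeds by removing dummies one at a time, each time using a merge map into a surviving element. The step I expect to need the most care is precisely this bookkeeping: checking that composing the merge maps keeps $r$ fixed and keeps the images inside the shrinking set, with the no-constant assumption invoked exactly when the set would otherwise collapse to dummies only.
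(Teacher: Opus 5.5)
Your proof is correct. The paper gives no proof of its own and only cites Trnkov\'a; your argument is exactly the classical Trnkov\'a retraction argument: a dummy $e$ gives $\minion R^{(\rho_e)}(r)=r$ for every map $\rho_e\colon D\to D$ fixing $D\setminus\{e\}$, and the no-constant hypothesis forces $D'\neq\emptyset$, so all merge targets can be taken in $D'$. The bookkeeping you flag disappears if you fix one retraction $\rho\colon D\to D'$ and note that $\iota\circ\rho=\rho_{e_m}\circ\cdots\circ\rho_{e_1}$, where each $\rho_{e_j}$ sends $e_j\mapsto\rho(e_j)$ and fixes everything else. Then $r'=\minion R^{(\rho)}(r)$ works at once, and you never need injectivity of $\minion R^{(\iota_1)}$ to transfer dummy-ness from $r$ to $r_1$.
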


\begin{proof}[Proof of Theorem~\ref{thm:VariableLevelsVsConstraintLevels}.1.]
    We shorten the names of the label cover instances $\partialnpower{nk}{I}$ and $\kthpower{\lc (\inst I)}$ to $\diag P$ and $\diag D$ respectively.

    Recall that the variables of $\diag D$ are tuples $x=x_1\dots x_k$, where each $x_i$ is either a variable or a constraint of $\inst I$. For each such tuple $x$, we define $\phi(x)$ to be the set of all variables of $\inst I$ that appear either directly or in any constraint appearing in $x$. Note that since the arity of constraints of $\inst I$ is bounded by $n$, the size of $\phi(x)$ cannot exceed $nk$.

    Moreover, for each tuple $x$ let $\beta_x\colon\diag P_{\phi(x)}\to \diag D_x$ be the map that sends each partial solution $f\colon \phi(x)\to A$ to the tuple $f(x_1)\dots f(x_k)$. If $x_i$ is a constraint $y_1\dots y_\ell\in R$, then by $f(x_i)$ we mean the tuple $f(y_1)\dots f(y_\ell)$ which is in $R^\str A$ because $f$ is a partial solution.

    Let $U\mapsto r_U$ be an $\minion R$-solution of $\diag P$, then we claim that the assignment $x\mapsto\minion R^{(\beta_x)}(r_{\phi(x)})$ is an $\minion R$ solution of $\diag D$. We confirm that the two kinds of constraints in $\diag D$ are satisfied:
    \begin{enumerate}
        \item Consider a constraint $x_1x_2\ldots x_k = \alpha \times \id \times \cdots \times \id (y_1y_2\ldots y_k)$
        in $\diag D$, where $y_1 = \alpha(x_1)$ is  a constraint in $\lc (\inst I)$,
        and so $\alpha$ is a projection $R^\str A\to A$. We have an inclusion $\phi(y)\subseteq\phi(x)$, so there is a restriction map $\mathrm{res}\colon\diag P_{\phi(x)}\to\diag P_{\phi(y)}$ in $\diag P$. Moreover we claim that 
        $(\alpha\times\id\times \dots\times\id)\circ \beta_x= \beta_y\circ\mathrm{res}$. Indeed, given a partial solution $\phi(x)\to A$, it does not matter whether one first applies $f$ to every $x_i$ component-wise and then forgets some components by projection, or if one first projects and then applies $f|_{\phi(y)}$.
        The above equation also holds for the corresponding minor maps:
        \begin{align*}
            \minion R^{(\alpha\times \id\times\dots\times\id)}\big(
            \minion R^{(\beta_x)}(r_{\phi(x)})\big) =
            \minion R^{(\beta_y)}\big(
            \minion R^{(\mathrm{res})}(r_{\phi(x)})\big)=
            \minion R^{(\beta_y)}(r_{\phi(y)})
        \end{align*}

        \item Let $\sigma\colon[k]\to[k]$ be a map and $x$ a variable in $\diag D$ and $\sigma^*\colon \diag D_x\to \diag D_{x\circ\sigma}$ be the corresponding constraint. Every entry of the tuple $x\circ\sigma$ also appears in $x$, so $\phi(x\circ\sigma)\subseteq\phi(x)$, hence there is a restriction map $\mathrm{res}\colon \diag P_{\phi(x)}\to \diag P_{\phi(x\circ\sigma)}$ in $\diag P$. Similarly to above, we claim that $\sigma^*\circ\beta_x=\beta_{\sigma_x}\circ \mathrm{res}$. Indeed, it does not mater if one first applies a partial solution $f$ to all entries of a tuple and then permutes the entries with $\sigma$, or if the tuples are first permuted and $f|_{\phi(x\circ\sigma)}$ applied to them afterwards.
        \begin{align*}
            \minion R^{(\sigma^*)}\big(
            \minion R^{(\beta_x)}(r_{\phi(x)})\big) =
            \minion R^{(\beta_{x\circ\sigma})}\big(
            \minion R^{(\mathrm{res})}(r_{\phi(x)})\big)=
            \minion R^{(\beta_{x\circ\sigma})}(r_{\phi(x\circ\sigma)})
        \end{align*}
    \end{enumerate}
\end{proof}

\def\lcirc{\hspace{0.1em}\circ\hspace{0.1em}}

\begin{proof}[Proof of \ref{thm:VariableLevelsVsConstraintLevels}.2.]
    We shorten the names of $\partialnpower{k}{\inst I}$ and $\nthpower{k+1}{\inst I}$ to $\diag P$ and $\diag D$ respectively.

For every set $U$ of at most $k$ many variables of $\inst I$, pick a tuple $\phi(U)= u_1\dots u_ku_k\in U^{k+1}$ which contains all elements of $U$ and the last two entries coincide. Moreover pick for all $U$ a map $\psi(U)\colon U\to [k+1]$ which is right inverse to $\phi(U)$, i.e. $\phi(U)\circ\psi(U)= \id_U$. 

Take a solution $r=(r_x)_{x\in X^{k+1}}$ of $\minion R\circ \diag D$. For every set $U$, we define an object $s_U\in\minion R^{(A^U)}$ as 
    $$
    s_U\coloneq \minion R^{(\psi(U)^*)}(r_{\phi(U)})
    $$
where $\psi(U)^*$ is the map $A^{k+1}\to A^U, a\mapsto a\circ\psi(U)$.
We claim that each function $f\in A^U$ is a partial solution of $\inst I^\str A$ or a dummy of $s_U$. Assume that $f$ is not a partial solution and pick a constraint $c= (u_{i_1}\dots u_{i_\ell}\in R)$ in $\inst I$ with 
$f(u_{i_1})\dots f(u_{i_\ell})\notin R^\str A$. 

Step 1: let $t_1=u_1\dots c \dots u_k c$ be $\phi(U)$, but with the $i$-th and last entry replaced by $c$. Then any tuple $a_1\dots \bar{a}'\dots a_k\bar{a}\in \diag D_{t_1}$ with $\bar{a}\neq\bar{a}'$ is dummy in $r_{t_1}$. Indeed, such tuples are not in the image of the map 
$$
\sigma^*= (1\dots k+1\dots k,k+1)^*\colon \diag D_{t_1}\to\diag D_{t_1},
$$
but $r_{t_1} = \minion R^{(\sigma^*)}(r_{t_1})$, because $r$ is a solution.

Step 2: let $t_2=u_1\dots u_k c$, then any tuple $a =a_1\dots a_k\bar{a}\in \diag D_{t_2}$ with  $a_{i_j}\neq \pi_j(\bar{a})$ is dummy in $r_{t_2}$. 
Indeed, let $\alpha\colon \diag D_{t_1}\to\diag D_{t_2}$ be the map which is $\pi_j$ in the $i_j$-th coordinate and the identity everywhere else. Then all $\alpha$ preimages of $a$ must be of the form $(a_1\dots \bar{a}'\dots a_k \bar{a})$ with $\bar{a}'\neq\bar{a}$. 
Hence, by Step 1 and Lemma~\ref{lem:dummies}, $a$ must be a dummy of $r_{t_2}$. 

Step 3: Take $\phi(U)= u_1\dots u_ku_k$, then we claim that each tuple $a=a_1\dots a_{k+1}$ in $\diag D_{\phi(U)}$ with $a_{i_1}\dots a_{i_l}\notin R^\str A$ is dummy in $r_{\phi(U)}$. Indeed, consider the map 
$$
\sigma^*= (1\dots kk)^*\colon \diag D_{t_2}\to\diag D_{\phi(U)}
$$
and assume $a_{i_1}\dots a_{i_l}\notin R^\str A$ and let $a'=a_1\dots a_k\bar{a}$ be a $\sigma^*$-preimage of $a$. Since $\bar{a}\in R^\str A$, there must be a coordinate $j$ with $\pi_j(\bar{a})\neq a_{i_j}$. This means that all such $a'$ are dummies of $r_{t_2}$ by Step 2, hence $a$ is a dummy of $r_{\phi(U)}$ by Lemma~\ref{lem:dummies}.

Step 4: let $f\in A^U$. If $f$ is not of the form $a\circ\psi(U)$ for some $a\in A^{k+1}$ then it is a dummy of $s_U$. If it is, then $f(u_i) = a_{\psi(U)(u_i)}$ and hence $a_{\psi(U)(u_{i_1})}\dots a_{\psi(U)(u_{i_l})}$ is not in $R^\str A$. Hence $a$ must be a dummy of $r_{\phi(U)}$ by Step 3, so $f$ is a dummy of $s_U$.

We have shown that every $f\in A^U$ which is not a partial solution is a dummy of $s_U\in \minion R^{(A^U)}$. Let $\iota_u$ be the inclusion map $\diag P_U\subseteq A^U$, then by Lemma~\ref{lem:dummies}, there is $s'_U\in \minion R^{(\diag P_U)}$ such that $s_U=\minion R^{(\iota_U)}(s'_U)$. We claim that $U\mapsto s'_U$ is a solution of $\minion R\circ \diag P$.

Step 5: let $\iota\colon U'\subseteq U$ be an inclusion map and let $\mathrm{res}\colon \diag A^U\to\diag A^{U'}$ be the corresponding restriction. We claim that 
$
s_{U'} = R^{(\mathrm{res})}(s_U)
$.
Consider the map $\sigma=\psi(U)\circ\iota\circ\phi(U')$ from $[k+1]$ to $[k+1]$. Then $\phi(U)\circ \sigma = \iota\circ\phi(U')$, so there is a constraint $\phi(U') = \sigma^*(\phi(U))$ in $\diag D$, hence $r_{\phi(U')}=\minion R^{(\sigma^*)}(r_{\phi(U)})$. Moreover, $\psi(U')^*\circ \sigma^*=\mathrm{res}\circ\phi(U)^*$ which in turn gives the following.
\begin{align*}
    \minion R^{(\mathrm{res})}(s_U) &= 
    \minion R^{(\mathrm{res})}\big( \minion R^{(\psi(U)^*)}(r_{\phi(U)})\big) \\&= 
    \minion R^{(\psi(U')^*)}\big(\minion R^{(\sigma^*)}(r_{\phi(U)})\big) \\&=
    \minion R^{(\psi(U')^*)}(r_{\phi(U')}) \\&=
    s_{U'}
\end{align*}

Step 6: finally, let $\mathrm{res}'\colon \diag P_U\to\diag P_{U'}$ be the restriction constraint in $\diag P$ and let $p\colon A^{U'}\to\diag P_{U'}$ be left inverse to the inclusion $\iota_{U'}$.
\begin{align*}
    \minion R^{(\mathrm{res}')}(s'_U) &=
    \minion R^{(p\lcirc\iota_{U'}\lcirc\mathrm{res}')}(s'_U) \\&=
    \minion R^{(p\lcirc \mathrm{res}\lcirc\iota_{U'})}(s'_U) \\&=
    \minion R^{(p\lcirc \mathrm{res})}(s_U) \\&=
    \minion R^{(p)}(s_{U'}) \\&=
    \minion R^{(p\lcirc\iota_{U'})}(s'_{U'}) = 
    s'_{U'}
\end{align*}
\end{proof}

\subsection{Compactness}

Proposition~\ref{prop:HomosPreserveSolutions} shows that a minion homomorphism $\minion{R} \to \minion{S}$ implies that the $\minion{R}$-relaxation is stronger than the $\minion{S}$-relaxation: for any LC instance $\diag D$,  if $\diag D$ is solvable in $\minion{R}$, then it is solvable in $\minion{S}$. The following proposition states that the converse implication holds as well, provided $\minion{S}$ is locally finite. This simple compactness statement may be of independent interest.

\begin{prop} \label{prop:compatctness}
Let $\cat R$ and $\cat S$ be minions, and suppose that $\cat S$ is locally finite. Then the following are equivalent.
\begin{enumerate}
    \item There exists a minion homomorphism $\minion{R} \to \minion{S}$.
    \item For every LC instance $\diag D$, if $\minion{R} \circ \diag{D}$ is solvable, then so is $\minion{S} \circ \diag {D}$.
\end{enumerate}
\end{prop}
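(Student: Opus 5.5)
The direction (1)$\Rightarrow$(2) is exactly Proposition~\ref{prop:HomosPreserveSolutions}, so the plan concentrates on (2)$\Rightarrow$(1). The approach is a standard compactness argument. First we find homomorphisms on finite pieces of $\minion R$ by encoding each piece as an LC instance. Then we glue these partial homomorphisms together using local finiteness of $\minion S$, via König's lemma or Tychonoff.

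\emph{Finite pieces as LC instances.} Restrict attention to the skeleton of finite sets $[n]$; a homomorphism defined there extends uniquely to all finite sets via bijections. Fix $n$. Let $F$ be a finite subset of $\bigcup_{m\le n}\minion R^{(m)}$ and let $\mathcal A$ be the finite set of all maps $\alpha\colon[m]\to[m']$ with $m,m'\le n$. Define the LC instance $\diag D_F$ as follows. Its variables are the elements $r\in F$, with $r\in\minion R^{(m)}$ getting domain $[m]$. For every pair $r,r'\in F$ and every $\alpha\in\mathcal A$ with $\minion R^{(\alpha)}(r)=r'$, add the constraint $r'=\alpha(r)$. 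Then the assignment $r\mapsto r$ is a solution of $\minion R\circ\diag D_F$. By (2), $\minion S\circ\diag D_F$ has a solution $\xi_F$. This $\xi_F$ is a map $F\to\bigcup_m\minion S^{(m)}$ with $\xi_F(r)\in\minion S^{(m)}$ that commutes with every minor relation holding inside $F$.

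\emph{Gluing.} Consider the product space $\prod_{r}\minion S^{(m_r)}$ over all $r\in\bigcup_m\minion R^{(m)}$. Each factor is finite because $\minion S$ is locally finite, and we give each factor the discrete topology. By Tychonoff the product is compact. For each finite set $F$ (now allowing all arities, with $\mathcal A$ taken to be all maps between the arities occurring in $F$), let $C_F$ be the set of points $\xi$ of the product satisfying $\xi(\minion R^{(\alpha)}(r))=\minion S^{(\alpha)}(\xi(r))$ whenever $r$ and $\minion R^{(\alpha)}(r)$ both lie in $F$. Each $C_F$ is closed, since it is defined by finitely many coordinate conditions. Each $C_F$ is also nonempty: take $\xi_F$ from the previous step and extend it arbitrarily outside $F$; the factors are nonempty because $\minion R^{(m)}$ nonempty forces $\minion S^{(m)}$ nonempty, using (2) on the single-variable instance. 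The family has the finite intersection property, because $C_{F_1}\cap C_{F_2}\supseteq C_{F_1\cup F_2}$. Hence $\bigcap_F C_F\ne\emptyset$, and any point in this intersection is a minion homomorphism.

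One can avoid Tychonoff when $\minion R$ is countable by using König's lemma, but $\minion R$ may be uncountable (e.g.\ BLP), so the Tychonoff route is the safer choice.

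\emph{Expected main obstacle.} The delicate point is well-definedness of $\diag D_F$ as an LC instance. The domain attached to a variable $r$ must be a specific finite set, namely the arity set $[m]$ of $r$. We also need the constraints to encode exactly the minor identities, including those with $r=r'$, such as $\minion R^{(\sigma)}(r)=r$; these become constraints $r=\sigma(r)$, which is fine in the LC formalism. Once that bookkeeping is in place, checking the finite intersection property is routine.
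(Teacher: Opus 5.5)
Your proof is correct and is essentially the paper's argument: Tychonoff compactness of the product of the finite discrete sets $\minion S^{(D)}$, closed sets cut out by the commutation conditions, and each finite batch of conditions encoded as an LC instance that is $\minion R$-solvable via the identity assignment, so (2) supplies an $\minion S$-solution. The differences are small: you argue directly through the finite intersection property where the paper argues by contradiction from a finite subfamily with empty intersection, and you add two pieces of bookkeeping the paper leaves implicit---restricting to the skeleton $\{[m]\}$ so the product is indexed by a set, and checking via single-variable instances that every factor $\minion S^{(m)}$ with $\minion R^{(m)}\neq\emptyset$ is nonempty, so that partial solutions extend to points of the product.
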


\begin{proof}
    Consider the finite sets $\minion S^{(D)}$ as discrete (and as they are finite, compact) topological spaces, then the product space 
    $$
    \prod_{\substack{D \text{ finite} \\ r\in\minion R^{(D)}}} \minion S^{(D)}
    $$
    is also compact by Tychonoff’s theorem. In particular, if a family of closed subsets has empty intersection, there is a finite subfamily with empty intersection.
    Let $\Phi$ be the sets of all pairs $(\alpha, r)$, where $\alpha\colon D\to E$ is a map and $r\in\minion R^{(D)}$. Then the set of minion homomorphisms from $\minion R$ to $\minion S$ can be written as the following intersection of closed subsets.
    $$
    \bigcap_{(\alpha,r)\in \Phi}
    \{ (s_r)_{r\in\minion R} \mid 
    \minion S^{(\alpha)}(s_r) = s_{\minion R^{(\alpha)}(r)} \} \subseteq 
    \prod_{\substack{D \text{ finite} \\ r\in\minion R^{(D)}}} \minion S^{(D)}
    $$
    Assuming that there is no minion homomorphism $\minion R\to \minion S$, we can find a finite subset $\Phi_0\subseteq \Phi$ such that the intersection
    $$
    \Psi\coloneq \bigcap_{(\alpha,r)\in \Phi_0}
    \{ (s_r)_{r\in\minion R} \mid 
    \minion S^{(\alpha)}(s_r) = s_{\minion R^{(\alpha)}(r)} \} 
    $$
    is empty. We now construct a label cover instance $\diag D$, such that $\minion R\circ \diag D$ is solvable but $\minion S\circ \diag D$ is not.
    Let the set of variables $X$ of $\diag D$ consist of all $r$ and all $\minion R^{(\alpha)}(r)$ with $(r,\alpha)\in \Phi_0$ and let $\diag D_r$ be the set $D$, whenever $r\in\minion R^{(D)}$. Moreover, for each pair $(r,\alpha)$ in $\Phi_0$, add the constraint $\minion R^{(\alpha)}(r) =\alpha(r)$.
    The set of solutions of $\minion S\circ \diag D$ is then precisely
    $$
    \bigcap_{(\alpha,r)\in \Phi_0}
    \{ (s_r)_{r\in X} \mid 
    \minion S^{(\alpha)}(s_r) = s_{\minion R^{(\alpha)}(r)} \} 
    $$
    which must be empty because $\Psi$ is empty.
    But $\minion R\circ D$ has a solution, namely $(r)_{r\in X}$.
\end{proof}

\subsection{Relaxations and minion homomorphisms}

Proposition~\ref{prop:compatctness} applied to a polymorphism minion $\minion{S} = \Pol(\str A,\str B)$ implies that a minion homomorphism $\minion{R} \to \Pol(\str A,\str B)$ exists whenever solvability of $\minion R\circ \diag D$ implies solvability of $\Pol(\str A,\str B ) \circ \diag{D}$. This implication is almost the same as saying that $\minion{R}$ solves $\PCSP(\str A,\str B)$, except that $\diag D$ is an arbitrary LC instance in the former and $\diag D = \lc(\inst I)$ (for a PCSP instance $\inst{I}$) in the latter. This difference is, however, immaterial by the PCSP theory~\cite{bbko2021,hadekJaklOprsal2026categories}. 

\begin{thm} \label{thm:RelaxationsAndHomos}
    Let $\minion{R}$ be a minion and $(\str A,\str B)$ a PCSP template. Then the $\minion R$-relaxation solves $\PCSP(\str A,\str B)$ if and only if  exists a minion homomorphism $\minion R\to\Pol(\str A,\str B)$.
\end{thm}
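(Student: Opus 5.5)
We prove the two directions separately, reducing everything to Proposition~\ref{prop:compatctness} (with $\minion S=\Pol(\str A,\str B)$, which is locally finite) and to the reduction $\csp_{\str A}$ from Theorem~\ref{thm:fundamental}.

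($\Leftarrow$) Suppose $\xi\colon\minion R\to\Pol(\str A,\str B)$ is a minion homomorphism and let $\inst I$ be a PCSP instance with $\lc(\inst I)$ solvable in $\minion R$. By Proposition~\ref{prop:HomosPreserveSolutions}, $\lc(\inst I)$ has a solution in $\Pol(\str A,\str B)$. This is a family of polymorphisms $f_x\colon \str A^{\diag D_x}\to\str B$ that are compatible with the projections. We evaluate each $f_x$ at the ``identity'' tuple, namely the element of $A^{\diag D_x}$ given by $a\mapsto a$ for the variables $x$ of $\inst I$. For a constraint variable $y$ with domain $R^{\str A}$, the coordinatewise tuples $(\pi_i(\bar a))_{\bar a\in R^{\str A}}$ lie in $R^{\str A^{R^{\str A}}}$, so $f_y$ maps them into $R^{\str B}$. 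The minor identities $f_{x_i}=\minion M^{(\pi_i)}(f_y)$ then show that these values are exactly $f_{x_i}(\id)$. This is the standard argument that $\Pol$-solutions of $\lc(\inst I)$ give $\str B$-solutions of $\inst I$. Hence $\inst I$ is solvable in $\str B$.

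($\Rightarrow$) Assume $\minion R$ solves $\PCSP(\str A,\str B)$. By Proposition~\ref{prop:compatctness}, it suffices to show that for every LC instance $\diag D$, solvability of $\minion R\circ\diag D$ implies solvability of $\Pol(\str A,\str B)\circ\diag D$. Given such $\diag D$, we form the PCSP instance $\inst I=\csp_{\str A}(\diag D)$. By the remark after Theorem~\ref{thm:fundamental}, $\str B$-solutions of $\inst I$ correspond bijectively to $\Pol(\str A,\str B)$-solutions of $\diag D$, so it is enough to show that $\inst I$ is solvable in $\str B$. Since $\minion R$ solves the PCSP, this reduces to showing that $\lc(\inst I)$ is solvable in $\minion R$.

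The main obstacle is to transfer an $\minion R$-solution of $\diag D$ to one of $\lc(\csp_{\str A}(\diag D))$, for an arbitrary minion $\minion R$. We plan to do this through the explicit shape of $\csp_{\str A}$. Its variables are pairs $(x,\mathbf a)$ with $\mathbf a\in A^{\diag D_x}$, with equality constraints identifying $(y,\mathbf a)$ and $(x,\mathbf a\circ\alpha)$ for each constraint $x=\alpha(y)$. For each relation $R$ and each $x$ it has constraints $(x,\mathbf a_1)\dots(x,\mathbf a_n)\in R$ whenever the columns $(\mathbf a_1(d),\dots,\mathbf a_n(d))$ lie in $R^{\str A}$ for all $d$. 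Given $r_x\in\minion R^{(\diag D_x)}$, we assign:
\begin{itemize}
\item to the variable $(x,\mathbf a)$ the element $\minion R^{(\mathbf a)}(r_x)\in\minion R^{(A)}$;
\item to the constraint variable indexed by $(x,\mathbf a_1,\dots,\mathbf a_n)$ the element $\minion R^{(\gamma)}(r_x)\in\minion R^{(R^{\str A})}$, where $\gamma\colon\diag D_x\to R^{\str A}$ sends $d\mapsto(\mathbf a_1(d),\dots,\mathbf a_n(d))$.
\end{itemize}
Since $\pi_i\circ\gamma=\mathbf a_i$, functoriality gives the projection constraints. The equality constraints hold because $\minion R^{(\mathbf a\circ\alpha)}(r_y)=\minion R^{(\mathbf a)}(r_x)$. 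Equalities in $\csp_{\str A}$, if realized by identifying variables, are handled by the same identity. If the precise construction in \cite{bbko2021} differs, for instance by a quotient, we use instead that it is a colimit-type construction and verify well-definedness from the same functoriality identity. This completes the plan.
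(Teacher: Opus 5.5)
Your proof is correct and is essentially the paper's argument. Compactness (Proposition~\ref{prop:compatctness}) together with $\csp_{\str A}$ reduces the claim to transferring an $\minion R$-solution of $\diag D$ to $\mathrm{lc}_{\str A}(\csp_{\str A}(\diag D))$; your explicit assignment ($(x,\mathbf a)\mapsto \minion R^{(\mathbf a)}(r_x)$, and $\minion R^{(\gamma)}(r_x)$ on constraint variables) is just the unpacked form of the paper's detour through the free minion $\gl{\diag D}$ and the two cited lemmas about it, while your ($\Leftarrow$) argument spells out what the paper treats as immediate from Theorem~\ref{thm:fundamental}. One harmless slip: for a constraint $x=\alpha(y)$ the identification should be $(x,\mathbf a)\sim(y,\mathbf a\circ\alpha)$ with $\mathbf a\in A^{\diag D_x}$ (you swapped $x$ and $y$), which is exactly the identity $\minion R^{(\mathbf a\circ\alpha)}(r_y)=\minion R^{(\mathbf a)}(r_x)$ that you then verify.
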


To any label cover instance $\diag D$, one can associate a minion $\gl{\diag D}$ with the property that minion homomorphisms form $\gl{\diag D}$ to any other minion $\minion R$ are in one to one correspondence with solutions of $\minion R\circ \diag D$, see Lemma~3.14 in \cite{hadekJaklOprsal2026categories}. 
$$
\hom(\gl{\diag D},\minion R) = \{\text{solutions of }\minion R\circ\diag D\}
$$
Additionally, $\gl{\diag D}$ has the property that the LC instance $\lc\csp_{\str A}(\diag D)$ has a solution in $\gl{\diag D}$ for all $\str A$. This can be observed directly from Lemma~3.11 in \cite{hadekJaklOprsal2026categories}, once the definitions of $\lc$ and $\csp_{\str A}$ are unpacked (in the above citation, they are denoted as $(\str A\circ-)$ and $\mathrm{gr}(\gl{-}\circ \str A)$ respectively).

\begin{proof}[Proof of Theorem~\ref{thm:RelaxationsAndHomos}]
    Assume that $\minion R$ solves $\PCSP(\str A,\str B)$, which means that the following implication holds for every CSP instance $\inst I$. If $\minion R\circ \lc(\inst I)$ has a solution, then $\inst I$ has a solution in $\str B$.

    By Proposition~\ref{prop:compatctness}, it suffices to show that for a given LC instance $\diag D$, if $\minion R\circ\diag D$ has a solution, then so does $\minion M\circ \diag D$, where $\minion M= \Pol(\str A,\str B)$. Indeed, if $\minion R\circ\diag D$ has a solution, then there is a minion homomorphism $\gl{\diag D}\to\minion R$, hence the LC instance $\minion R\circ\lc{\csp_{\str A}(\diag D)}$ has a solution. Therefore, by assumption, the CSP instance $\csp_{\str A}({\diag D})$ has a solution in $\str B$. But $\mathrm{csp}_{\str A}$ is a reduction from $\RelaxLC{\minion M}$ to $\PCSP(\str A,\str B)$, which implies that $\minion M\circ \diag D$ has a solution.
\end{proof}

\begin{cor}
\label{cor:kthLevelChar}
    The $k$-th level of a relaxation $\minion R$ (defined using the $k$-th saturated power) solves $\PCSP(\str A, \str B)$ if and only if there is a minion homomorphism from $\kthlevel{\minion R}$ to $\Pol(\str A,\str B)$.
\end{cor}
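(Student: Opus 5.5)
The plan is to reduce the corollary to Theorem~\ref{thm:RelaxationsAndHomos} applied to the minion $\kthlevel{\minion R}$, using Theorem~\ref{thm:LevelVsPower} to identify the two notions of solvability. By definition, the $k$-th level of $\minion R$ solves $\PCSP(\str A,\str B)$ when the following holds for every PCSP instance $\inst I$: if $\minion R\circ\kthpower{\lc(\inst I)}$ is solvable, then $\inst I$ has a solution in $\str B$.

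Theorem~\ref{thm:RelaxationsAndHomos} says that $\kthlevel{\minion R}$ solves $\PCSP(\str A,\str B)$ if and only if there is a minion homomorphism $\kthlevel{\minion R}\to\Pol(\str A,\str B)$. It therefore suffices to show, for every instance $\inst I$,
$$
\minion R\circ\kthpower{\lc(\inst I)}\text{ solvable}\iff \kthlevel{\minion R}\circ\lc(\inst I)\text{ solvable}.
$$
Given this equivalence, the two notions of ``solves'' coincide instance by instance, and the corollary follows immediately.

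The equivalence is exactly Theorem~\ref{thm:LevelVsPower}, except that that theorem assumes $\lc(\inst I)$ is connected, and this is the only real obstacle. I would handle it by splitting $\lc(\inst I)$ into its connected components $\diag D^1,\dots,\diag D^m$, which correspond to the components of $\inst I$. On one hand, $\inst I$ is solvable in $\str B$ if and only if each component is. On the other hand, a $\kthlevel{\minion R}$-solution of the disjoint union is exactly a tuple of solutions of the components, since there are no constraints between components.

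The subtle side is the saturated power. $\kthpower{\diag D^1\sqcup\dots\sqcup\diag D^m}$ contains mixed tuples of variables from different components, so a solution on each $\kthpower{\diag D^j}$ does not obviously extend. Two routes are available here. The first is to show directly that the ``solves'' property can be tested on connected instances only: if $\minion R\circ\kthpower{\lc(\inst I)}$ is solvable, restrict it to each $\kthpower{\lc(\inst I^j)}$, obtain a solution of each component in $\str B$, and glue these together. Conversely, the solvability hypothesis only needs to be applied to connected instances, and for these Theorem~\ref{thm:LevelVsPower} gives the equivalence. The second route is to avoid decomposition altogether: add a fresh dummy variable to $\inst I$ together with a trivial unary relation connecting it to everything, or argue that the analogue of Theorem~\ref{thm:RelaxationsAndHomos} holds when restricted to connected instances, since the reduction $\csp_{\str A}$ and $\gl{\diag D}$ can be chosen to produce connected instances.

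I would try the first route. In detail: ($\Leftarrow$) given a homomorphism $\kthlevel{\minion R}\to\Pol(\str A,\str B)$ and a solution of $\minion R\circ\kthpower{\lc(\inst I)}$, restrict it to each component power, pass to $\kthlevel{\minion R}$ via Theorem~\ref{thm:LevelVsPower}, then to $\Pol$ via Proposition~\ref{prop:HomosPreserveSolutions}, hence to $\str B$-solutions of each component, and glue. ($\Rightarrow$) the $k$-th level solving the problem implies that $\kthlevel{\minion R}$ solves it on connected instances, via Theorem~\ref{thm:LevelVsPower}. Then rerun the proof of Theorem~\ref{thm:RelaxationsAndHomos}, noting that Proposition~\ref{prop:compatctness} only needs connected $\diag D$: the instance built there can be made connected by adding constraints to a common variable, which keeps it solvable in $\minion R$, and $\csp_{\str A}$ preserves connectivity.
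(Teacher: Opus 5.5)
Your overall plan matches the paper's. You test both notions of ``solves'' on connected components of $\inst I$, use Theorem~\ref{thm:LevelVsPower} there, glue the $\str B$-solutions, and then apply Theorem~\ref{thm:RelaxationsAndHomos} to the minion $\kthlevel{\minion R}$. Your ($\Leftarrow$) direction is fine. The decomposition there is not even needed, because the implication ``$\kthpower{\diag D}$ solvable in $\minion R$ $\Rightarrow$ $\diag D$ solvable in $\kthlevel{\minion R}$'' never uses connectivity; the paper argues it that way.

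The ($\Rightarrow$) direction, as you detail it, does not work. Rerunning Theorem~\ref{thm:RelaxationsAndHomos} and Proposition~\ref{prop:compatctness} on connected instances only rests on two claims that fail in general.

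\emph{First claim.} The LC instance built in the compactness proof cannot in general be made connected by adding constraints to a common variable while keeping the tautological solution. A constraint $x=\alpha(y)$ forces the values at $x$ and $y$ to have the same image under the minor map to the one-element set $[1]$. So all values of a solution of a connected instance share a single image in $(\kthlevel{\minion R})^{([1])}$. Minor maps of $\kthlevel{\minion R}$ never change the fixed part, so elements of $\Phi_0$ with different fixed parts can never be linked.

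\emph{Second claim.} $\csp_{\str A}$ does not preserve connectivity. The instance $\csp_{\str A}(\diag D)$ is assembled from copies of the powers $\str A^{\diag D_x}$, which are typically disconnected. For example, with $\str A=K_2$ and $|\diag D_x|=2$, the square $K_2\times K_2$ already has two components.

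The detour is unnecessary, and you already stated the fix. A $\kthlevel{\minion R}$-solution of $\mathrm{lc}_{\str A}(\inst I)$ restricts to a solution of $\mathrm{lc}_{\str A}(\inst I')$ for each connected component $\inst I'$. Theorem~\ref{thm:LevelVsPower} then makes $\kthpower{\mathrm{lc}_{\str A}(\inst I')}$ solvable in $\minion R$. The hypothesis gives a $\str B$-solution of each $\inst I'$, and these glue. Hence $\kthlevel{\minion R}$ solves $\PCSP(\str A,\str B)$ on all instances, and Theorem~\ref{thm:RelaxationsAndHomos} applies verbatim. This is exactly the paper's proof.
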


\begin{proof}
    We show that the $k$-th level of a relaxation $\minion R$ solves $\PCSP(\str A,\str B)$ if and only if there it is solved by the $\kthlevel{\minion R}$ relaxation. Let $\inst I$ be an instance $\diag D$ the corresponding label cover instance, then the claim follows from Theorem~\ref{thm:RelaxationsAndHomos}.

    $(\Leftarrow)$  If  $\kthpower(\diag D)$ has a solution in $\minion R$, then $\diag D$ has a solution in $\kthlevel{\minion R}$ by Theorem~\ref{thm:LevelVsPower}. We remark that this
implication does not need D to be connected. Hence, $\inst I$ has a solution in $\str B$.

    $(\Rightarrow)$ Let $\inst I'$ be a connected component of $\inst I$. If $\lc(\inst I)$ has a solution in $\kthlevel{\minion R}$, then so does $\lc(\inst I')$. Since $\lc(\inst I')$ is connected, Theorem~\ref{thm:LevelVsPower} implies that $\kthpower{\lc(\inst I')}$ has a solution in $\minion R$. But we assume that the problem is solved by $k$-consistency, hence $\inst I'$ has a solution in $\str B$. As this holds for all connected components, $\inst I$ also has a solution in $\str B$.
    Note that this implication of the theorem does not require $\diag D$ to be connected.
\end{proof}

As a particular case of the above corollary, we obtain a minion characterization of the Sherali-Adams hierarchy for PCSPs. 

\begin{cor}
    $\PCSP(\str A,\str B)$ is solved by some level of the Sherali-Adams hierarchy if and only if there is a minion homomorphism $\kthlevel{\minion R}\to \Pol(\str A,\str B)$ for some $k$, where $\minion R$ is the minion describing BLP.
\end{cor}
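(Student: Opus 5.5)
The statement is a special case of Corollary~\ref{cor:kthLevelChar}. The only extra work is to identify ``some level of the Sherali--Adams hierarchy'' with ``the $k$-th level, defined via the saturated power, of the relaxation described by the BLP minion'' for some $k$. Let $\minion R$ denote the BLP minion: $\minion R^{(D)}$ consists of the probability distributions on $D$, and minor maps act by summation. I will use the standard description of the Sherali--Adams hierarchy in the partial-solution form: its $k$-th level accepts $\inst I$ exactly when $\partialnpower{k}{\inst I}$ has a solution in $\minion R$.

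The first step converts between the two notions of level using Theorem~\ref{thm:VariableLevelsVsConstraintLevels}. This is allowed because BLP does not accept all instances: $\minion R$ has no constant, since an element of $\minion R^{(D)}$ has some coordinate with positive weight, and that coordinate is not a dummy.

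\emph{Direction ($\Rightarrow$).} Suppose level $k$ of Sherali--Adams solves $\PCSP(\str A,\str B)$. If $\nthpower{k+1}{\lc(\inst I)}$ has an $\minion R$-solution, then by item~2 of Theorem~\ref{thm:VariableLevelsVsConstraintLevels} so does $\partialnpower{k}{\inst I}$. Hence $\inst I$ is solvable in $\str B$. So the saturated-power level $k+1$ of $\minion R$ solves the PCSP, and Corollary~\ref{cor:kthLevelChar} gives a minion homomorphism from the $(k+1)$-th level of $\minion R$ to $\Pol(\str A,\str B)$.

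\emph{Direction ($\Leftarrow$).} Suppose there is a homomorphism from the $k$-th level of $\minion R$ to $\Pol(\str A,\str B)$. By Corollary~\ref{cor:kthLevelChar}, the saturated-power level $k$ solves the PCSP. Let $n$ be the maximal arity of the relations in $\str A$. If $\partialnpower{nk}{\inst I}$ has an $\minion R$-solution, then by item~1 of Theorem~\ref{thm:VariableLevelsVsConstraintLevels} so does $\kthpower{\lc(\inst I)}$, and therefore $\inst I$ is solvable in $\str B$. Thus Sherali--Adams level $nk$ solves the problem.

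The main obstacle is not this bookkeeping but the identification of the partial-solution formulation with the usual Sherali--Adams LP. The LP variables are probability distributions over the partial assignments on each set of at most $k$ variables, with consistent marginals under restriction; these are exactly $\minion R$-solutions of $\partialnpower{k}{\inst I}$. Variants of the hierarchy that index by constraints, or that also demand support on partial solutions, shift $k$ by at most a bounded factor. Since the statement only asks for ``some level'', such a constant shift is harmless, and I would note this explicitly rather than reprove it.
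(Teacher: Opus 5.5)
Your proposal is correct and takes the same route the paper intends. The paper gives no separate proof: it presents the corollary as a special case of Corollary~\ref{cor:kthLevelChar}, with the translation between the partial-solution (Sherali--Adams) levels and the saturated-power levels supplied by Theorem~\ref{thm:VariableLevelsVsConstraintLevels} (levels $k$ versus $k+1$, and $nk$ versus $k$), exactly as you do, including the check that the BLP minion has no constants.
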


\subsection{$k$-consistency reduction}
\def\arc{\mathrm{arc}}

The $k$-consistency relaxation has been generalized in \cite{dalmauOprsal24reductions} to function as a reduction between two different PCSPs. Their construction can be described as follows.

Start with an instance $\inst I$ of $\PCSP(\str A,\str B)$ and compute the partial solution label cover instance $\diag D=\partialkpower{\inst I}$. Then, using arc consistency, find the largest subsets $\diag D'_x\subseteq D_x$ such that $\diag D'_y=\alpha(\diag D'_x)$ for all constraints $y=\alpha(x)$ in $\diag D$. We denote the resulting label cover instance as $\arc(\diag D)$.
To obtain an instance of a different $\PCSP(\str A',\str B')$, apply the standard reduction $\csp_{\str A'}$.

For $k=1$ there was a minion characterization provided in \cite{dalmauOprsal24reductions} using the following construction. Given a minion $\minion R$, let $\omega\minion R$ be the minion, where $\omega\minion R^{(D)}$ is the set of pairs $(D',r)$, where $D'$ is a subset of $D$ and all elements $d\in D\setminus D'$ are dummies of $r$. For maps $\alpha\colon D\to E$, we have $\omega\minion R^{(\alpha)}(D',r)= (\alpha( D'),\minion R^{(\alpha)}(r))$.
The characterization then follows directly from the following 
statement, see Lemma C.1 in the above citation:
Given a label cover instance $\diag D$ and a minion $\minion R$, $\arc(\diag D)$ has a solution in $\minion R$ if and only if $\diag D$ has a solution in $\omega\minion R$.  

Combining this result with Theorem~\ref{thm:LevelVsPower} yields the following characterization of the $k$-consistency relaxation.

\begin{thm} \label{thm:CharDatalogReductions}
    $\PCSP(\str A',\str B')$ reduces to $\PCSP(\str A,\str B)$ via the $k$-consistency reduction (defined using the $k$-th saturated power) if and only if there is a minion homomorphism from $\kthlevel{(\omega\minion M)}$ to $\minion M'$,
    where $\minion M$ and $\minion M'$ are the polymorphism minions of $(\str A,\str B)$ and $(\str A',\str B')$ respectively.
\end{thm}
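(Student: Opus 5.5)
The plan is to chain three equivalences, mirroring the proof of Corollary~\ref{cor:kthLevelChar}. First, make precise what the $k$-consistency reduction (saturated-power version) is. Given an instance $\inst I$ of $\PCSP(\str A,\str B)$, put $\diag D=\lc(\inst I)$, compute $\arc(\kthpower{\diag D})$, and output $\csp_{\str A'}(\arc(\kthpower{\diag D}))$.

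It is trivially complete. If $\inst I$ has a solution in $\str A$, then $\kthpower{\diag D}$ has a genuine solution. Arc consistency never removes the values used by a genuine solution, so $\arc(\kthpower{\diag D})$ is solvable. Then $\csp_{\str A'}$ maps solvable LC instances to instances solvable in $\str A'$.

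So the reduction is correct iff soundness holds: whenever the output has a solution in $\str B'$, the input has a solution in $\str B$. By the remark after Theorem~\ref{thm:fundamental}, solutions of $\csp_{\str A'}(\cdot)$ in $\str B'$ correspond to $\minion M'$-solutions. Soundness therefore reads: for every PCSP instance $\inst I$, if $\minion M'\circ\arc(\kthpower{\lc(\inst I)})$ is solvable, then $\inst I$ is solvable in $\str B$.

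Next, translate the condition through the two available equivalences:
\begin{itemize}
\item By Lemma C.1 of \cite{dalmauOprsal24reductions}, $\minion M'\circ\arc(\kthpower{\diag D})$ is solvable iff $\omega\minion M'\circ\kthpower{\diag D}$ is solvable.
\item By Theorem~\ref{thm:LevelVsPower}, for connected $\diag D$ this is equivalent to $\kthlevel{(\omega\minion M')}\circ\diag D$ being solvable.
\end{itemize}
Disconnected instances are handled componentwise, as in the proof of Corollary~\ref{cor:kthLevelChar}. Both $\kthpower{}$ and $\arc$ split over components, and solvability in $\str B$ is componentwise.

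Here I must be careful which minion carries the $\omega$. In the displayed statement it is $\omega\minion M$ with target $\minion M'$, while the translation above naturally puts $\omega$ on $\minion M'$. I would reconcile this by rereading the direction of the reduction in the statement. The source problem should be the one fed to $\arc$ and $\kthpower{}$, and the target is the one whose polymorphism minion receives the homomorphism. So the condition becomes: $\kthlevel{(\omega\minion M)}$-solvability of $\lc(\inst I)$ implies $\minion M'$-solvability, with roles matched to the statement. I would fix the roles so that the minion appearing inside $\arc$ is the one in the statement.

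After that, the soundness condition says exactly that the $\kthlevel{(\omega\minion M)}$-relaxation solves the relevant PCSP. Theorem~\ref{thm:RelaxationsAndHomos} turns this into the existence of a minion homomorphism $\kthlevel{(\omega\minion M)}\to\minion M'$; it applies to arbitrary minions, so local finiteness of the source is not needed.

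The main obstacle is bookkeeping rather than mathematics. One issue is getting the direction and roles of $(\str A,\str B)$ versus $(\str A',\str B')$ consistent. The other is checking that Theorem~\ref{thm:RelaxationsAndHomos} is applied to LC instances of the form $\lc(\inst I)$ only, which is precisely the setting it covers, together with the componentwise reduction to the connected case needed for Theorem~\ref{thm:LevelVsPower}.
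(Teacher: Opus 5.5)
Your chain of equivalences is the paper's proof: the correspondence between $\str B$-solutions of $\csp_{\str A}(\cdot)$ and $\minion M$-solutions, then Lemma C.1 for $\arc$ versus $\omega$, then Theorem~\ref{thm:LevelVsPower} on connected components, then Theorem~\ref{thm:RelaxationsAndHomos}. Two pieces of bookkeeping are wrong as written, though.

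\textbf{The roles.} In the statement the reduction goes from $\PCSP(\str A',\str B')$ to $\PCSP(\str A,\str B)$. An instance $\inst I$ of the primed problem is sent to $\csp_{\str A}(\arc(\kthpower{\mathrm{lc}_{\str A'}(\inst I)}))$. You set it up the other way round, as does the paper's informal description before the theorem. Your derivation then correctly yields a homomorphism $\kthlevel{(\omega\minion M')}\to\minion M$, which is just the theorem with primes swapped. The repair is a global renaming, not the rule you state. The homomorphism lands in the polymorphism minion of the \emph{source} problem, whose instances are fed to $\mathrm{lc}$, $\kthpower{}$ and $\arc$. The $\omega$ and $\kthlevel{}$ constructions are applied to the polymorphism minion of the \emph{target} problem, whose $\csp$ construction is applied last. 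Your sentence ``the target is the one whose polymorphism minion receives the homomorphism'' has this backwards. Relatedly, soundness concludes that $\inst I$ is solvable in $\str B'$, which is what ``solves'' means in Theorem~\ref{thm:RelaxationsAndHomos}. It does not conclude $\minion M'$-solvability of $\mathrm{lc}_{\str A'}(\inst I)$.

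\textbf{Components.} $\kthpower{}$ does not split over components. For $k\ge 2$, the type (2) constraints $x\cdots x=(1\cdots 1)^*(xy\cdots y)$ and $y\cdots y=(2\cdots 2)^*(xy\cdots y)$ link any two variables, so $\kthpower{\diag D}$ is always connected. Fortunately the componentwise argument does not need splitting.
\begin{itemize}
\item For ``reduction implies homomorphism'': restrict the $\kthlevel{(\omega\minion M)}$-solution of $\mathrm{lc}_{\str A'}(\inst I)$ to a connected component $\inst I'$. Apply Theorem~\ref{thm:LevelVsPower} to $\mathrm{lc}_{\str A'}(\inst I')$, and invoke soundness of the reduction on $\inst I'$ as an instance in its own right.
\item For the converse: passing from an $\omega\minion M$-solution of the power to a $\kthlevel{(\omega\minion M)}$-solution needs no connectedness, since the $(\Rightarrow)$ half of the proof of Theorem~\ref{thm:LevelVsPower} never uses it. So no decomposition is needed there.
\end{itemize}
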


\begin{proof}
$(\Leftarrow)$
We show that $\arc\circ\mathrm{pow}_k$ is a reduction from $\RelaxLC{ \minion M'}$ to $\RelaxLC{\minion M}$, then the claim follows from Theorem~\ref{thm:fundamental}.

Take a label cover instance $\diag D$ and assume that $\arc(\kthpower{\diag D})$ has a solution in $\minion M$. Then $\kthpower{\diag D}$ has a solution in $\omega\minion M$, hence $\diag D$ has a solution in $\kthlevel{(\omega\minion M)}$ by Theorem~\ref{thm:LevelVsPower}. We remark that this implication does not need $\diag D$ to be connected. Because of the assumed minion homomorphism, $\diag D'$ also has a solution in $\minion M'$. So all connected components of $\diag D$ have a solution in $\minion M'$, hence $\diag D$ has a solution in $\minion M'$ as well.

$(\Rightarrow)$ 
We show that $\PCSP(\str A',\str B')$ is solved by the $\kthlevel{\omega\minion M}$-relaxation, then claim then follows from Theorem~\ref{thm:RelaxationsAndHomos}. 

Take an instance $\inst I$ of $\PCSP(\str A',\str B')$ and let $\inst I'$ be a connected component of it. Assume that $\mathrm{lc}_{\str A'}(\inst I)$ has a solution in $\kthlevel{\omega\minion M}$, then so does $\mathrm{lc}_{\str A'}(\inst I')$. But this LC instance is connected, therefore 
$
\arc(\kthpower{\mathrm{lc}_{\str A'}(\inst I')})
$
has a solution in $\minion M$. But $\minion M$ solutions of diagrams $\diag D$ are in one to one correspondence with solutions of $\csp_{\str A}{\diag D}$ in $\str B$, hence
$
\mathrm{csp}_{\str A}(
\arc(
\kthpower{\mathrm{lc}_{\str A'}(\inst I')
}))
$
has a solution in $\str B$. But we assumed that $k$-consistency is a reduction, hence the component $\inst I'$ must have a solution in $\str B'$. As this holds for all connected components, $\inst I$ also has a solution in $\str B'$.
\end{proof}

\section*{Acknowledgments}

We are grateful to Michael Kompatscher for providing the $\mathbf{D}_4$ example, to Zarathustra Brady for showing that the vector algorithm over the integers is not yet powerful enough, to Demian Banakh for working out the first version of the $\mathbb{Z}_{p^2}$ rounding for an arbitrary prime $p$,
and to Petar Markovi\'c for fruitful discussions on vector minions.

\bibliographystyle{alpha}
\newcommand{\etalchar}[1]{$^{#1}$}

\end{document}